\date{}
\newcommand{\cplp}{\ensuremath{CPL^+}}
\newcommand{\cpl}{\ensuremath{CPL}}
\newcommand{\iplp}{\ensuremath{IPL^+}}
\newcommand{\fde}{\ensuremath{FDE}}
\newcommand{\fdeto}{\ensuremath{FDE^{\to}}}
\newcommand{\nel}{\ensuremath{N4}}
\newcommand{\bfour}{\ensuremath{B_4^{\to}}}
\newcommand{\pyn}{\ensuremath{\mathcal{IDM}4}}
\newcommand{\Tm}{\ensuremath{Tm}}
\newcommand{\Tqm}{\ensuremath{T4m}}
\newcommand{\Tqsm}{\ensuremath{T45m}}
\newcommand{\TBm}{\ensuremath{TBm}}
\newcommand{\TqBm}{\ensuremath{T4Bm}}
\newcommand{\Tsm}{\ensuremath{T5m}}
\newcommand{\INT}{\ensuremath{IvFDE_T}}
\newcommand{\INTq}{\ensuremath{IvFDE_{T4}}}
\newcommand{\INTqs}{\ensuremath{IvFDE_{T45}}}
\newcommand{\INTB}{\ensuremath{IvFDE_{TB}}}
\newcommand{\INTqB}{\ensuremath{IvFDE_{T4B}}}
\newcommand{\INTs}{\ensuremath{IvFDE_{T5}}}
\newcommand{\sneg}{\ensuremath{{\sim}}}
\newcommand{\sse}{\leftrightarrow}
\newcommand{\cons}{\ensuremath{{\circ}}}
\newcommand{\lfis}{\textit{LFI}s}
\newcommand{\lets}{\textit{LET}s}
\newcommand{\A}{\ensuremath{\mathcal{A}}}
\newcommand{\axK}{\textsf{(K)}}
\newcommand{\axKu}{\textsf{(K1)}}
\newcommand{\axKd}{\textsf{(K2)}}
\newcommand{\axMu}{\textsf{(M1)}}
\newcommand{\axDNu}{\textsf{(DN1)}}
\newcommand{\axDN}{\textsf{(DN)}}
\newcommand{\axdmu}{\textsf{(DM1)}}
\newcommand{\axdmd}{\textsf{(DM2)}}
\newcommand{\axdmt}{\textsf{(DM3)}}
\newcommand{\axouimp}{\textsf{(AxP)}}
\newcommand{\axT}{\textsf{(T)}}
\newcommand{\axSq}{\textsf{(4)}}
\newcommand{\axSc}{\textsf{(5)}}
\newcommand{\axB}{\textsf{(B)}}
\newcommand{\axu}{\textsf{(Ax1)}}
\newcommand{\axd}{\textsf{(Ax2)}}
\newcommand{\axt}{\textsf{(Ax3)}}
\newcommand{\axq}{\textsf{(Ax4)}}
\newcommand{\axc}{\textsf{(Ax5)}}
\newcommand{\axs}{\textsf{(Ax6)}}
\newcommand{\axst}{\textsf{(Ax7)}}
\newcommand{\axo}{\textsf{(Ax8)}}
\newcommand{\axn}{\textsf{(Ax9)}}
\newcommand{\MP}{\textsf{MP}}
\newcommand{\nec}{\textsf{NEC}}
\newcommand{\axC}{\textsf{(N1)}}
\newcommand{\axnD}{\textsf{(N2)}}
\newcommand{\axnC}{\textsf{(N3)}}
\newcommand{\axD}{\textsf{(N4)}}
\newcommand{\axNDN}{\textsf{(N5)}}
\newcommand{\axNTN}{\textsf{(N6)}}
\newcommand{\axNN}{\textsf{(N7)}}
\newcommand{\axNT}{\textsf{(N8)}}
\newcommand{\axNo}{\textsf{(N9)}}
\newcommand{\axNde}{\textsf{(N10)}}
\newcommand{\bo}{\textsf{b}}
\newcommand{\nei}{\textsf{n}}
\newcommand{\defi}{\stackrel{\text{\tiny def}}{=} }
\theoremstyle{definition}      		
\newtheorem{definition}{Definition}    		
\newtheorem{lemma}[definition]{Lemma}	  	
\newtheorem{proposition}[definition]{Proposition}	
\newtheorem{remark}[definition]{Remark}		
\newtheorem{remarks}[definition]{Remarks}		
\newtheorem{theorem}[definition]{Theorem}	 	
\newtheorem{notation}[definition]{Notation}	 	
\begin{document}

\title{Combining swap structures: the case of Paradefinite Ivlev-like modal logics based on \fde}

\author{
{\large  Marcelo E. Coniglio}\\
{\small Centre for Logic, Epistemology and the History of Science (CLE) and}\\ 
{\small Institute of Philosophy and the Humanities (IFCH)}\\
{\small University of Campinas (UNICAMP), Campinas, Brazil}\\
{\small coniglio@unicamp.br} 
 }

\maketitle


\begin{abstract}
The aim of this paper is to combine several Ivlev-like modal systems characterized by 4-valued non-deterministic matrices (Nmatrices) with \pyn,  a 4-valued expansion of Belnap-Dunn's logic \fde\ with an implication introduced by Pynko in 1999. In order to do this, we introduce a new methodology for combining logics which are characterized by means of swap structures, based on what we call {\em superposition} of snapshots. In particular, the combination  of \pyn\ with \Tm, the 4-valued Ivlev's version of {\bf KT}, will be analyzed with more details. From the semantical perspective, the idea is to combine the 4-valued swap structures (Nmatrices) for \Tm\ (and several of its extensions) with the 4-valued twist structure (logical matrix) for \pyn.  This superposition produces a universe of 6 snapshots, with 3 of them being designated. The multioperators over the new universe are defined by combining the specifications of the given swap and twist structures. This gives rise to 6 different paradefinite Ivlev-like modal logics, each one of them characterized by a 6-valued Nmatrix, and conservatively extending the original modal logic and  \pyn. This important feature allows to consider the proposed construction as a genuine technique for combining logics. In addition, it is possible to define in the combined logics a classicality operator in the sense of {\em logics of evidence and truth} ($LETs$).
A sound and complete Hilbert-style axiomatization is also presented for the 6 combined systems, as well as a Prolog program which implements the swap structures semantics for the 6 systems, producing a decision procedure for  satisfiability, refutability and validity of formulas  in these logics.
\end{abstract}

\section{Introduction} \label{sec:Intro}

As noted by R. Ballarin in~\cite[Section~1.2]{ball:10}, it was G\"odel, in his famous article ``An Interpretation of the Intuitionistic Propositional Calculus''~\cite{god:33} the first author who introduces an axiomatization of the (then recent) Lewis system {\bf S4} separating the (classical) propositional basis from the purely modal axioms and rules, namely: axioms \axK, \axT\ and \axSq, and the necessitation rule \nec: if $\varphi$ is a theorem, then $\square\varphi$ is also a theorem. Since then, and after the resonant success of Kripke semantics for modal logics, it is commonly assumed that modal logic is simply the study of systems characterized by some specific class of Kripke frames, in which axiom \axK\ and rule \nec\ are tacitly (and naturally)  assumed as being valid. However, as some authors have pointed out, \nec\ should not be unrestrictedly assumed in {\em every} modal system, that is, in {\em any} interpretation of the $\square$ modality. Consider, for instance, the following examples adapted from~\cite{con:cerro:per:20}: if $\square \varphi$ means that ``$\varphi$ is known'' (or, more specifically, ``$\varphi$ can be effectively checked by human beings as being true''), and $\varphi$ is, for instance, a tautology containing 1000 different propositional variables, then the fact that $\varphi$ is true surely does not imply that $\square\varphi$ is true, given that such tautology is materially impossible to be checked by human beings. On the other hand, if $\square\varphi$ means that ``$\varphi$ is obligatory'' (in a deontic modal system) then it makes no sense requiring the (legal or moral) obligation $\square\varphi$ of a tautology $\varphi$, given that it is impossible to infringe such obligation. Moreover, \nec\ plays an important role in deontic paradoxes. 

Because of the doubtful status of \nec\ in epistemic and deontic contexts (among others), there were proposed the {\em non-normal} modal logics, in which \nec\ is not unrestrictedly valid. Observe that, as a direct consequence of G\"odel's \axK\ and \nec, the modal operator $\square$ is congruential, that is:  if $\varphi$ and $\psi$ are logically equivalent, so are $\square\varphi$ and $\square\psi$. It is worth noting that the  congruentiality of $\square$ is a basic principle required in the standard non-normal modal systems. That is: \nec\ is dropped in such systems, but $\square$ is still congruential.  
This feature, although extremely useful from a technical point of view, can be, once again, questionable from a conceptual point of view. By following the terminology  introduced by M.~J. Cresswell in~\cite{cresswell:75}, a logical operator $\bigodot$ is {\em hyperintensional} precisely when it is not congruential, that is: $\bigodot \varphi$ is not always logically equivalent to $\bigodot \psi$ whenever $\varphi$ is logically equivalent to $\psi$.  Hyperintensionality is nowadays defined (in modal contexts) in a broader sense: hence, $\bigodot \varphi$ and $\bigodot \psi$ may not  be logically equivalent, although $\varphi$ and $\psi$ are necessarily equivalent. While the alethic $\square$ operator is usually assumed to be congruential (that is, non hyperintensional), H. Wansing claims in~\cite{wans:17} that attitudes such as `hoping', `fearing', `supposing' and imagining' are hyperintensional, see~\cite{sep-hyperintensionality}.
In the latter reference some examples of hyperintensionality are given. For instance:  Lois Lane knows that Clark Kent is in New York, but she doesn't know that Superman is too (since she is not aware that Superman is Clark Kent). Another example (that the authors take from J. Barwise and J. Perry): you can see that Mary is eating an ice cream without seeing that Mary is eating an ice cream and John is either eating chips or not (given that you may not be seeing John at all). But $\varphi$ is logically equivalent to $\varphi \land (\psi \vee \sneg \psi)$ in classical and some other non-classical logics. The previous examples show that, for instance, `knowing' and `seeing' would be hyperintensional operators.

In a series of works started in the 1970's, Yu. Ivlev proposes a novel approach to non-normal modal logics in which the modalities $\square$ and $\lozenge$ are hyperintensional. Moreover, the various modal systems he presented are not characterized by neightborhood semantics or any other formal apparatus based on {\em possible worlds}, as happens with the standard non-normal modal logics, but they are semantically characterized by finite non-deterministic matrices  (see, for instance, \cite{ivl:73}, \cite{ivl:85}, \cite{ivl:88}, \cite{ivl:13}). This approach anticipated the use of non-deterministic matrices (or Nmatrices) for non-classical logics proposed by A. Avron and I. Lev (see~\cite{avr:lev:01}, \cite{avr:lev:05}, \cite{avr:zam:11}).  Ivlev's approach was revisited and extended in~\cite{con:cerro:per:15}, \cite{con:cerro:per:20} and~\cite{omo:sku:16}. In particular, in~\cite{con:cerro:per:15} are presented the systems \Tm, \Tqm, \Tqsm, \TBm, each one characterized by a single 4-valued Nmatrix, corresponding, respectively, to weaker versions of modal systems {\bf KT}, {\bf KT4}, {\bf KT45}, and {\bf KTB}. In~\cite{con:gol:19} this semantics was extended to a class of more general Nmatrices called {\em swap structures}, in which each truth-value is a triple $z=(z_1,z_2,z_3)$ (called {\em snapshot}) such that each coordinate $z_i$ (with values in a given Boolean algebra $\mathcal{A}$) represents a truth-value for the formulas $\varphi$, $\square\varphi$ and $\square\sneg\varphi$, respectively. Swap structures were proposed in~\cite[Chapter~6]{car:con:16}  as a general non-deterministic framework to deal with paraconsistent and other  non-classical logics. Swap structures are a non-deterministic generalization of the twist-structures introduced independently by M. Fidel~\cite{fid:78} and D. Vakarelov~\cite{vak:77}, in which the snapshots are pairs and obey to an algebraic (deterministic) structure (see below). In~\cite{con:gol:19}  it was shown that the original 4-valued Nmatrices for \Tm\ and its extensions are recovered by taking snapshots over the 2-element Boolean algebra with domain $\{0, 1\}$. The four truth-values are $T=(1,1,0)$ (meaning that $\varphi$ is necessarily true), $t=(1,0,0)$ ($\varphi$ is contingently true), $f=(0,0,0)$ ($\varphi$ is contingently false) and $F=(0,0,1)$ ($\varphi$ is necessarily false). Obviously the designated values are $T$ and $t$.

Also in the 1970's, N. Belnap and J.~M. Dunn introduce in~\cite{bel:76}, \cite{bel:77}, \cite{dunn:76} the first-degree entailment logic  (\fde), a 4-valued logic nowadays known as Belnap-Dunn logic. This logic intends to formalize the way a computer deals with information. Indeed, there are basically four possibilities when a computer receives information from different sources: an item come in as asserted (`True'), as denied (`False'), both possibilities (`Both') or none of them (`None') (see~\cite{bel:76}). These four possibilities correspond to the four values of the logic \fde. It is convenient identifying these values with ordered pairs $(a,b)$  of 0-1's, obtaining respectively the pairs $(1,0)$, $(0,1)$, $(1,1)$ and $(0,0)$. These pairs, in which the first and second coordinate correspond, respectively, to asserting and denying a given item $\varphi$ of information (symbolized respectively by $\varphi$ and $\neg\varphi$), constitute a twist-structure for \fde, in which the logical operators (conjunction, disjunction and negation) are defined in a natural way: $(a,b) \land (c,d)=(a \sqcap c,b \sqcup d)$; $(a,b) \lor (c,d)=(a \sqcup c,b \sqcap d)$; and $\neg (a,b)=(b,a)$ (here, $\sqcap$ and $\sqcup$ are the infimum and supremum in the 2-element Boolean algebra $\{0,1\}$).   These operations validate the De Morgan laws, and the negation is idempotent: $\neg\neg x=x$, defining so a (bounded) De Morgan lattice.

A. Pynko~\cite{pyn:99} studied several expansions of \fde\ from the proof-theoretic and algebraic point of view, by using twist structures expanding the ones for \fde. In particular, he introduced an interesting implication over \fde\ which gives rise to a logical system called \pyn\ (see~\cite[p.~70]{pyn:99}). Pynko's algebraic study was retaken by S. Odintsov~\cite{odin:05} in the context of Nelson's logic \nel\ and its algebraic semantics. In Section~5 of such article, Odinstov proposes the 4-valued logic  \bfour, an expansion  of \pyn\ by adding a bottom $\bot$ such that $\neg\bot$ is a theorem. This logics arises  as an special model of $\nel^\bot$, the expansion of \nel\  by a bottom $\bot$ as above. As observed by Odintsov, the implication $\to$ of \bfour\ (and \pyn) can be defined, in terms of twist structures over $\{0,1\}$, as $(a,b) \to (c,d)=(a \Rightarrow c,a \sqcap d)$ (where $\Rightarrow$ is the Boolean implication). The second component of the implication  in the twist structure  reflects the fact that, in \nel\ (and also in \pyn) $\neg(\varphi \to \psi)$ is equivalent to $\varphi \wedge \neg \psi$. It should be observed that $\neg$ is hyperintensional in \pyn\ (and also in \nel\ and  $\nel^\bot$), see Remarks~\ref{Box-not-congr}(4).

In~\cite{car:rod:19} W. Carnielli and A. Rodrigues propose a novel and fruitful interpretation of \nel\ in terms of {\em evidence}, under the name of $BLE$ ({\em Basic Logic of Evidence}). Hence, the first and second components of a pair $(a,b)$ can be seen as evidence {\em in favor} of $\varphi$ and $\neg\varphi$, respectively, being this equivalent to evidence in favor of $\varphi$ and {\em against} $\varphi$, respectively, or even  evidence against $\neg\varphi$ and $\varphi$, respectively. They consider in a series of papers (starting in~\cite{kolkata}) a family of {\em logics of evidence and truth} (\lets) based on \fde, \nel, \bfour\ and other extensions of \fde, by adding a {\em classicality} unary connective $\cons$ which allows to locally recover the validity of the explosion law and the excluded middle, in a similar way as the explosion law is locally recovered in the {\em logics of formal inconsistency} (\lfis, see for instance~\cite{car:con:16}). Namely, in \lets\ $\cons\varphi$ derives $\varphi \vee\neg\varphi$, and everything follows from $\varphi \land\neg\varphi\land\cons\varphi$.  Observe that, by definition, all these systems are both paraconsistent and paracomplete w.r.t. $\neg$. Such kind of logics are called {\em paradefinite}, or {\em paranormal}, or {\em non-alethic}.

The main objective of the present paper is to combine  the 4-valued Ivlev-like modal systems studied in~\cite{con:cerro:per:15} with extensions of \fde\ (considered as a logic of information). In order to do this, we introduce a new methodology for combining logics which are characterized by means of swap structures. In particular, the combination \INT\ of \pyn\ with \Tm, the 4-valued Ivlev's version of {\bf KT}, will be analyzed with more details. From the semantical perspective, the idea is to combine the modal 4-valued swap structures (Nmatrices) introduced  in~\cite{con:gol:19} with the 4-valued twist structure (logical matrix) for \pyn.  This combination is based on what we call {\em superposition} of snapshots. In this specific example, it produces 4-dimensional snapshots  $z=(z_1,z_2,z_3,z_4)$ such that each coordinate $z_i$ (with values 0 or 1) represents a possible truth-value for the formulas $\varphi$, $\square\varphi$, $\square\sneg\varphi$ and $\neg \varphi$, respectively, where $\sneg$ is the classical negation of \Tm\ and $\neg$ is the (paraconsistent and paracomplete) negation of \pyn. That is, the original 3-dimensional snapshots for \Tm\ are superposed with the 2-dimensional snapshots of \pyn, thus enriching the former ones with a fourth coordinate, of an epistemic character.  As it will be analized in Section~\ref{sectINT}, this superposition produces a universe of 6 snapshots, with 3 of them being designated (the ones with first coordinate 1). The multioperators over the new universe are defined by combining the specifications of the given swap and twist structures, obtaining so a 6-valued characteristic Nmatrix for each one of the 4 Ivlev-like modal logics mentioned above, plus two new ones (the Ivlev-like version of {\bf KT4B} and {\bf KT5}). This gives rise to 6 different paradefinite Ivlev-like modal logics, each one of them conservatively extending the original modal logic and  \pyn. This important feature allows us to consider the proposed construction as a genuine technique for combining logics. In addition, it is possible to define in the combined logics a classicality operator in the sense of \lets\ mentioned above. This means that the combined logics  are modal \lets.

A sound and complete Hilbert-style axiomatization is also presented for the 6 combined systems, by gathering together the axioms of the original systems plus some suitable bridge principles (axioms in the mixed language) establishing some required interactions between the logical operators. Clearly, each of these systems is decidable by its characteristic 6-valued Nmatrix. Moreover, in the Appendix we present a Prolog program which implements the swap structures semantics for the 6 systems. This program leads to a decision procedure for  satisfiability, refutability and validity of formulas  in these logics, by constructing the complete truth-tables, models and countermodels of a given formula.

\section{Preliminaries}

In this section, some basic technical notions to be used along the paper
 will be introduced.

A propositional signature is a denumerable family $\Theta=(\Theta_n)_{n \geq 0}$ of pairwise disjoint sets. The  elements of each set $\Theta_n$ are called {\em $n$-ary connectives}.
The (absolutely) free algebra over $\Theta$ generated by a denumerable set $\mathcal{V}=\{p_1,p_2,\ldots\}$ of  propositional variables is called {\em the algebra of formulas over $\Theta$}, and will be denoted by $For(\Theta)$. When $|\Theta|\defi \bigcup_{n \geq 0} \Theta_n$ is finite then, when there is no risk of confusion about the arity of the connectives,  $\Theta$ will be simply represented by $|\Theta|$. Given two finite signatures $\Theta_1$ and $\Theta_2$ then $\Theta_1 \cup \Theta_2$ and $\Theta_1 \cap\Theta_2$ are also finite signatures.

A {\em $\Theta$-algebra}, or an {\em algebra over $\Theta$}, is a pair $\mathcal{A}=\langle A,\mathcal{O}\rangle$ such that  $A$ is a non-empty set called the {\em universe} or {\em domain} of $\mathcal{A}$ and, if $\#\in\Theta_{n}$, then $\mathcal{O}(\#)$ is a function from $A^{n}$ to $A$. A {\em  logical matrix} over $\Theta$ is a triple $\mathcal{M}=\langle A, D,\mathcal{O}\rangle$  such that $\langle A,\mathcal{O}\rangle$  is a $\Theta$-algebra and $\emptyset \neq D \subseteq A$. A {\em valuation} over $\mathcal{M}$ is a homomorphism  $v:For(\Theta) \to \mathcal{A}$ of $\Theta$-algebras, that is, a function $v:For(\Theta) \to A$ such that $v(\#(\varphi_1,\ldots,\varphi_n))=\mathcal{O}(\#)(v(\varphi_1),\ldots,v(\varphi_n))$ for $\# \in \Theta_n$ and $\varphi_1,\ldots,\varphi_n \in For(\Theta)$. The logic generated by $\mathcal{M}$ is given as follows: $\Gamma \models_{\mathcal{M}}\varphi$ iff, for every valuation $v$ over $\mathcal{M}$, if  $v(\gamma)\in D$ for every $\gamma \in \Gamma$ then $v(\varphi)\in D$. Given a class $\mathbb{M}$ of matrices over $\Theta$, the logic generated by $\mathbb{M}$ is defined as follows: $\Gamma\models_{\mathbb{M}}\varphi$ iff $\Gamma \models_{\mathcal{M}}\varphi$ for every $\mathcal{M} \in \mathbb{M}$.

A {\em $\Theta$-multialgebra} or a {\em multialgebra over $\Theta$}, is   a pair $\mathcal{A}=\langle A,\mathcal{O}\rangle$  such that  $A$ is a non-empty set called the {\em universe} or {\em domain} of $\mathcal{A}$ and, if $\#\in\Theta_{n}$, then $\mathcal{O}(\#)$ is a function from $A^{n}$ to $\wp(A)\setminus\{\emptyset\}$ (the set of non-empty subsets of $A$). A {\em non-deterministic matrix} (or an {\em Nmatrix}) over $\Theta$ is a triple $\mathcal{M}=\langle A, D,\mathcal{O}\rangle$  such that $\langle A,\mathcal{O}\rangle$  is a $\Theta$-multialgebra and $\emptyset \neq D \subseteq A$. A {\em valuation} over $\mathcal{M}$ is a function $v:For(\Theta) \to A$ such that $v(\#(\varphi_1,\ldots,\varphi_n)) \in \mathcal{O}(\#)(v(\varphi_1),\ldots,v(\varphi_n))$ for $\# \in \Theta_n$ and $\varphi_1,\ldots,\varphi_n \in For(\Theta)$.
The logic generated by an Nmatrix (as well as by a class of Nmatrices) is defined as in the case of logical matrices, but now by considering valuations over Nmatrices instead of matrices.

It is worth noting that a $\Sigma$-algebra can be seen as a $\Sigma$-multialgebra such that every set $\mathcal{O}(\#)(z)$ is a singleton, for every $\# \in \Theta_n$ and every $z=(z_1,\ldots,z_n) \in A^n$. Hence, a logical matrix $\mathcal{M}$ is, in particular, an Nmatrix, and any valuation over $\mathcal{M}$ (as a matrix) is a valuation over it (seen as an Nmatrix).  These facts will be used along the paper.

\section{The 4-valued logic \pyn}

In this section the logic \pyn, the expansion of Belnap-Dunn's logic \fde\ by adding an implication mentioned in the Introduction,  will be briefly described.

\begin{definition} [The logic \pyn\ as a matrix logic] \label{IDM4-def}
Let $\Sigma_{\fde}=\{\land,\lor,\neg\}$ be the signature of \fde, and let $\Sigma_4=\{\land,\lor,\to,\neg\}$ in which $\to$ is a binary connective representing an implication.
The logic \pyn\ is given by the logical matrix $\mathcal{M}_4=\langle A_4,D_4,\mathcal{O}_4^\to\rangle$  over $\Sigma_4$, where $A_4=\{{\bf 1},\bo,\nei,{\bf 0}\}$ and $D_4=\{{\bf 1},\bo\}$, and the operations are defined as follows (by simplicity, we will identify each connective $\#$ with its interpretation $\mathcal{O}_4^\to(\#)$):
\begin{center}
\begin{tabular}{|c|c|c|c|c|}
\hline
 $\land$ & {\bf 1}   & \bo  & \nei & {\bf 0} \\
 \hline \hline
    {\bf 1}     & {\bf 1}   & \bo 		& \nei 	& {\bf 0}  \\ \hline
     \bo    & \bo   & \bo 	& {\bf 0} 	& {\bf 0}  \\ \hline
     \nei    & \nei   & {\bf 0} 	& \nei & {\bf 0}  \\ \hline
     {\bf 0}    & {\bf 0}   & {\bf 0} 		& {\bf 0} 	& {\bf 0}  \\ \hline
  \end{tabular}
\hspace{0.3cm}
\begin{tabular}{|c|c|c|c|c|}
\hline
 $\lor$ & {\bf 1}   & \bo  & \nei & {\bf 0} \\
 \hline \hline
    {\bf 1}     & {\bf 1}   & {\bf 1} 		& {\bf 1} 	& {\bf 1}  \\ \hline
     \bo    & {\bf 1}   & \bo 	& {\bf 1} 	& \bo  \\ \hline
     \nei    & {\bf 1}   & {\bf 1} 	& \nei & \nei  \\ \hline
     {\bf 0}    & {\bf 1}   & \bo 		&  \nei 	& {\bf 0}  \\ \hline
  \end{tabular}
\end{center}

\

\begin{center}
\begin{tabular}{|c|c|c|c|c|}
\hline
 $\to$ & {\bf 1}   & \bo & \nei & {\bf 0} \\
 \hline \hline
    {\bf 1}     & {\bf 1}   & \bo 		& \nei 	& {\bf 0}  \\ \hline
     \bo    & {\bf 1}   & \bo 	& \nei 	& {\bf 0}  \\ \hline
     \nei    & {\bf 1}   & {\bf 1} 	& {\bf 1} & {\bf 1}  \\ \hline
     {\bf 0}    & {\bf 1}   & {\bf 1}	& {\bf 1}	& {\bf 1} \\ \hline
  \end{tabular}
\hspace{0.3cm}
\begin{tabular}{|c||c|} \hline
 $\varphi$ & $\neg \varphi$\\
 \hline \hline
    {\bf 1}   & {\bf 0}  \\ \hline
     \bo  & \bo  \\ \hline
     \nei  & \nei \\ \hline
     {\bf 0}  & {\bf 1} \\ \hline
  \end{tabular}
\end{center}
\end{definition}

\begin{remark} \label{rem-pynk}
As observed in the Introduction, the  logic \pyn\ was originally presented by A. Pynko in 1999 (see~\cite[p.~70]{pyn:99}).  Recently, this logic was proposed in~\cite[p.~165]{hazen} under the name \fdeto. It is worth noting that the reduct of \pyn\ to  $\Sigma_{\fde}$ is precisely \fde. 
\end{remark}

\

\noindent  In the context of \fde\ it is convenient to identify its truth-values with ordered pairs over ${\bf 2}=\{0,1\}$ such that ${\bf 1}\simeq(1,0)$; $\bo\simeq(1,1)$; $\nei\simeq(0,0)$;  and ${\bf 0}\simeq(0,1)$). Then, it is possible to represent the operations in  $\mathcal{M}_4$ by means of the following twist structure over {\bf 2} (here, $z=(z_1,z_2)$ and $w=(w_1,w_2)$ are in ${\bf 2} \times {\bf 2}$):

\begin{itemize}
\item[] $z \land w = (z_1\sqcap w_1, \ z_2\sqcup w_2)$;
\item[] $z \vee w = (z_1\sqcup w_1,\ z_2 \sqcap w_2)$;
\item[] $\neg z= (z_2,z_1)$;
\item[] $z \to w = (z_1\Rightarrow w_1,z_1 \sqcap w_2)$.
\end{itemize}

\noindent
Notice that $D_4=\{z \in {\bf 2} \times {\bf 2} \ : \ z_1=1\}$.

As in the case of \fde, the intuitive reading for these pairs is that the first coordinate corresponds to the assertion of a given item $\varphi$ of information. In turn, the second coordinate corresponds to  denying $\varphi$, that is, $\neg\varphi$. In the context of Carnielli and Rodrigues's logics of evidence and truth, the coordinates of such pairs correspond to evidence in favor of $\varphi$ and  against $\varphi$ (i.e., in favor of $\neg \varphi$), respectively.

The 4-valued twist structures semantics  for \pyn\ (which coincides with its 4-valued logical matrix) can be extended to a family of twist structures defined as above, but now defined over an arbitrary classical implicative lattice.\footnote{An {\em implicative lattice} is a lattice with top element $1$ expanded with an implication $\Rightarrow$ such that $x \sqcap z \leq y$ iff $z \leq x \Rightarrow y$ (which implies that the lattice is distributive). If, in addition,  $x \sqcup (x \Rightarrow y)=1$ (or, equivalently, $(x \Rightarrow y) \Rightarrow x=x$) then it is called a {\em classical implicative lattice} (see~\cite{curry:77}). The class of classical implicative lattices is the algebraic semantics of positive classical propositional logic \cplp.}

For every $\Gamma \cup \{\varphi\} \subseteq For(\Sigma_4)$ we write $\Gamma \models_{\pyn}\varphi$ to indicate that $\varphi$ is a semantical consequence of $\Gamma$ w.r.t.  the 4-valued Nmatrix  $\mathcal{M}_4$ of \pyn.

\section{Ivlev non-normal modal logics}

In this section, Ivlev's modal logics will be recalled.

As discussed in the Introduction, with the aim of considering modal systems without the validity of the necessitation rule \nec\ (that is, non-normal modal logics), on the one hand, and without requiring the congruentionality or hyperintensionality of $\square$, on the other,\footnote{That is, as we discussed in the Introduction, the logical equivalence between $\varphi$ and $\psi$ does not implies, in general, the logical equivalence between $\square \varphi$ and $\square\psi$ (see examples in Remarks~\ref{Box-not-congr}). In the terminology of algebraic logic, this means that the logic is not {\em self-extensional}.}  Yu. Ivlev proposes in the 1970's several modal systems with these features, but still having interesting properties and philosophical interpretations. Ivlev's systems are semantically characterized by finite non-deterministic matrices (or Nmatrices). Thus, Ivlev's Nmatrix semantics was one of the pioneering works on Nmatrices, afterwards formally introduced by  A. Avron and I. Lev in the 2000's.


In~\cite{con:cerro:per:15} Ivlev-like versions of modal systems {\bf KT}, {\bf KT4}, {\bf KT45} and {\bf KTB} were proposed under the names \Tm\ (originally named $Sa^+$ by Ivlev), \Tqm, \Tqsm, \TBm, respectively. Each of these systems was semantically characterized by an specific 4-valued Nmatrix, each one with the same  domain $A_m=\{T,t,f,F\}$  and designated values $D_m=\{T,t\}$. Let us consider the case of \Tm.
According to Ivlev, these 4 truth-values have the following interpretation:
\begin{itemize}
\item[-] $T$ means `necessarily true'
\item[-]  $t$ means `true but not necessarily true'
\item[-]  $f$ means `false but not necessarily false'; and
\item[-]  $F$ means `necessarily false' or `impossible'.
\end{itemize}

Before presenting the Nmatrix for \Tm, let us recall  the analysis of Ivlev's approach to  modal logic made in~\cite[Section~1]{con:cerro:per:20}. Thus, consider the modal concepts of \emph{necessarily true} (represented by  $\mathbb{N}$), \emph{possibly true} (represented by  $\mathbb{P}$) and \emph{actually (or factually) true} (represented by  $\mathbb{A}$).
Consider the modal concept of \emph{contingent} as being coextensive with the intersection of the concepts of possibly true and possibly false. By assuming the validity of axiom \axT, being necessarily true implies being actually (or factually) true, and being necessarily false implies being actually (or factually) false. Hence, Ivlev's four truth-values are obtained as follows:

\begin{itemize}
  \item[] \ $T$: necessarily (hence, actually) true;
  \item[] \  $t$: contingently and actually true;
  \item[] \  $f$: contingently and actually false;
  \item[] \  $F$: necessarily (hence, actually) false (or impossible).
\end{itemize}

These truth-values can be formalized in a modal language (by assuming, as usual, the equivalences $\sneg\square p \equiv \lozenge \sneg p$ and $\sneg\lozenge p \equiv \square \sneg p$). By {\axT, $p \wedge \square p \equiv \square p$ and $p \wedge \lozenge p \equiv p$. In turn,  `$p$ is contingent' is $C(p) \defi \lozenge p \wedge\lozenge \sneg p \equiv C(\sneg p)$. Hence:
\begin{itemize}
\item[-] `$p$ is contingently and actually true' is 
$$p \wedge C(p)= p  \wedge \lozenge p \wedge \lozenge \sneg p \equiv p  \wedge \lozenge \sneg p \equiv p  \wedge \sneg \square p.$$

\item[-] `$p$ is contingently and actually false' is 
$$\sneg p \wedge C(\sneg p) \equiv \sneg p  \wedge \lozenge p \wedge \lozenge \sneg p \equiv \sneg p  \wedge \lozenge p.$$

\end{itemize}

\noindent
In terms of $\mathbb{A}$, $\mathbb{N}$ and $\mathbb{P}$ altogether:\\

\noindent
$T$: $\square p \equiv p \wedge  \square p \wedge \lozenge p$; \\[1mm]
$t$:  $p \wedge C(p) \equiv  p  \wedge \sneg \square p \equiv p  \wedge \sneg \square p \wedge  \lozenge p$; \\[1mm]
$f$:  $\sneg p \wedge C(\sneg p) \equiv \sneg p  \wedge \lozenge p \equiv \sneg p  \wedge \sneg\square p \wedge \lozenge p$; \\[1mm]
$F$:  $\sneg\lozenge p \equiv \sneg p \wedge \sneg\square p \wedge \sneg\lozenge  p$.

\

\noindent
The following Venn-diagram taken from~\cite[Section~1]{con:cerro:per:20} displays the relationship between the truth-values and the modal concepts of \emph{actual}, \emph{necessary} and \emph{possible} as discussed above:\\[3mm]

\begin{center}
\begin{tikzpicture}[thick] 
\draw (2,-1.8) rectangle (-2,1.8) node[below right] {}; 
\draw (0,0) ellipse (0.5cm and 0.3cm) node[above,shift={(0,0.7)}] {$\mathbb{A}$};
\draw (0,0) ellipse (1.2cm and 0.8cm) node[above,shift={(0,1.2)}] {$\mathbb{P}$};
\draw (0,0) ellipse (1.8cm and 1.3cm) node[shift={(0,0.5)}] {$\mathbb{N}$}; 
	\node at (0,0) {$T$}; 
	\node at (0.8,0) {$t$}; 
	\node at (1.5,0) {$f$};	
	\node at (-1.5,-1.4) {$F$}; 
\end{tikzpicture}
\end{center}

\begin{definition} [Nmatrix for Ivlev's modal logic \Tm] \label{Tm-mat-def}
Let  $\Sigma_m=\{\to, \sim, \square\}$. The modal logic \Tm\ is given by the Nmatrix $\mathcal{M}_T=\langle {A}_m,D_m,\mathcal{O}_T\rangle$  over $\Sigma_m$ such that $A_m=\{T,t,f,F\}$, $D_m=\{T,t\}$, and  where the multioperators  are defined as follows:
{\scriptsize
\begin{center}
\begin{tabular}{|c|c|c|c|c|}
\hline
 $\to$ & $T$   & $t$ & $f$ & $F$ \\
 \hline \hline
    $T$     & $T$   & $t$ 		& $f$ 	& $F$  \\ \hline
     $t$    & $T$   & $\{T, t\}$ 	& $f$ 	& $f$  \\ \hline
     $f$    & $T$   & $\{T, t\}$ 	& $\{T, t\}$ & $t$  \\ \hline
     $F$    & $T$   & $T$ 		& $T$ 	& $T$  \\ \hline
  \end{tabular}
\hspace{0.3cm}
\begin{tabular}{|c||c||c|} \hline
 $\varphi$ & $\sneg \varphi$ & $\square \varphi$ \\
 \hline \hline
    $T$   & $F$  & $\{T, t\}$   \\ \hline
     $t$  & $f$  & $\{f, F\}$    \\ \hline
     $f$  & $t$  & $\{f, F\}$  \\ \hline
     $F$  & $T$  & $\{f, F\}$   \\ \hline
  \end{tabular}
\end{center}}
\end{definition}

\

\noindent
As mentioned in the Introduction, by using the notion of swap structures introduced in~\cite[Chapter~6]{car:con:16}, Ivlev's Nmatrix was explained in an analytical way in~\cite{con:gol:19} as follows: first, by technical simplicity (given that the signatures considered by  Ivlev do not include $\lozenge$ as a primitive connective), replace the modal concept $\mathbb{P}$ of possibility by the modal concept of {\em impossibility}, denoted by $\mathbb{I}$. In formal terms, `$p$ is impossible' is represented by the sentence $\square\sneg p$, which is equivalent to $\sneg\lozenge p$. Hence, the four truth-values can be defined within this framework as follows:
\begin{itemize}
\item[-] $T:\square p \equiv p \wedge  \square p \wedge \sneg\square\sneg p$ \ \ (belongs to $\mathbb{A}$ and $\mathbb{N}$, and does not belong to $\mathbb{I}$);
\item[-] $t:  p  \wedge \sneg \square p \wedge  \sneg\square\sneg p$ \ \ (belongs to $\mathbb{A}$ and belongs to neither $\mathbb{N}$ nor $\mathbb{I}$);
\item[-] $f: \sneg p  \wedge \sneg\square p \wedge \sneg\square\sneg p$ \ \ (belongs to neither $\mathbb{A}$ nor $\mathbb{N}$ nor $\mathbb{I}$;
\item[-] $F: \square \sneg p \equiv \sneg p \wedge \sneg\square p \wedge \square\sneg  p$ \ \  (belongs to neither to $\mathbb{A}$ nor $\mathbb{N}$, and belongs to $\mathbb{I}$).
\end{itemize}
This produces a new version of the Venn diagram above:

\begin{center}
\begin{tikzpicture}[thick] 
\draw (3.0,-1.7) rectangle (-3.0,1.7) node[below right] {}; 
\draw (-1.2,0) ellipse (1.5cm and 1.2cm) node[above,shift={(0,1.15)}] {$\mathbb{A}$};
\draw (-1.2,0) ellipse (0.7cm and 0.5cm) node[above,shift={(0,0.45)}] {$\mathbb{N}$};
\draw (1.2,0) ellipse (0.7cm and 0.5cm) node[above,shift={(0,0.45)}] {$\mathbb{I}$};
	\node at (-1.2,0) {$T$}; 
	\node at (1.2,0) {$F$}; 
	\node at (-1.6,-0.8) {$t$}; 
	\node at (0.4,-1.4) {$f$}; 
\end{tikzpicture}
\end{center}

Then, Ivlev's Nmatrix can be analytically described as follows: consider  truth-values as being triples $z=(z_1,z_2,z_3)$ (called {\em snapshots}) such that each coordinate  $z_i$ represents a truth-value over a Boolean algebra \A\ for the formulas $\varphi$, $\square\varphi$ and $\square\sneg\varphi$, respectively.\footnote{As we shall see, considering the 2-element Boolean algebra $\mathcal{A}_2$ with domain ${\bf 2}=\{0,1\}$ is enough. The meaning of `represents a truth-value' can be made precise, in the case of $\mathcal{A}_2$, by means of (non-truth-functional) bivaluations, as it was shown in~\cite[Section~6.4]{car:con:16} and, with a more general perspective, in~\cite[Section~5.2]{con:tol:22} and~\cite{con:rod:22}, in the context of swap structures semantics for \lfis\ and \lets. In Section~\ref{finalsect} this point will be briefly discussed.}  Because of  axiom \axT, the domain of snapshots (which are 3-dimensional truth-values) is 
$$\mathbb{B}_{\mathcal{A}} = \{(a_1, a_2, a_3) \in A^3 \ : \ a_2 \leq a_1 ~~\mbox{and}~~ a_1 \sqcap a_3 = 0\}.$$

\begin{definition} \label{NMatIvlev} Let \A\ be a Boolean algebra. A swap structure for \Tm\ is a multialgebra $\mathcal{B}_{\A} = \langle \mathbb{B}_{\mathcal{A}},\hat{\to}, \hat{\sim}, \hat{\square} \rangle$ over $\Sigma_m=\{\to, \sim, \square\}$ where the multioperators are defined as follows:\\[1mm]
(i) $z\hat{\to} w = \{u\in \mathbb{B}_{\mathcal{A}} \ : \ u_1=z_1\Rightarrow w_1, \ u_3=z_2\sqcap w_3,$\\
	\hspace*{3.5cm} $z_3 \sqcup w_2 \leq u_2 \leq (z_2 \Rightarrow w_2) \sqcap (w_3 \Rightarrow z_3) \}$;\\[1mm]
(ii) $\hat{\sneg} z = \{(\sneg z_1,z_3,z_2)\}$;\\[1mm]
(iii) $\hat{\square} z = \{u\in \mathbb{B}_{\mathcal{A}} \ : \ u_1=z_2\}$.\\[1mm]
Here, $\Rightarrow$, $\sqcup$, $\sqcap$ and $\sneg$ denote respectively the implication, supremum, infimum and Boolean complement in \A. This gives rise to a Nmatrix $\mathcal{M}_{\A}$ in which the designated values are $\mathbb{D}_{\mathcal{A}} = \{a \in \mathbb{B}_{\A} \ : \ a_1=1\}$.
\end{definition}

\noindent
In a more concise way, the multioperations can be specified as follows:

\begin{itemize}
\item[] $z \hat{\to} w = (z_1\Rightarrow w_1,\ z_3 \sqcup w_2 \leq \_ \leq (z_2 \Rightarrow w_2) \sqcap (w_3 \Rightarrow z_3),z_2\sqcap w_3)$
\item[] $\hat{\sneg} z= (\sneg z_1,z_3,z_2)$
\item[] $\hat{\square} z = (z_2,\_,\_ )$
\end{itemize}

\noindent
The symbol `$\_$' denotes that the respective position of the snapshot can be freely chosen, provided that it satisfies the restrictions (in the case of the second coordinate of $\hat{\to}$) and that  the resulting triple belongs to $\mathbb{B}_{\mathcal{A}}$. The description above for the multioperator $\hat{\#}$ in $\mathcal{B}_{\A}$  is called the {\em formal specification} of $\hat{\#}$, given that it is expresssed by means of Boolean terms of the form  $t(z,w)$ (and, in the case of the second coordinate of $\hat{\to}$, by using also the order $\leq$). Observe that the formal expressions for the twist structures semantics for \pyn\ presented after Remark~\ref{rem-pynk} is the formal specification of that operators, which are defined over pairs.
%
 
%

It is worth noting that, when $\A=\A_2$, the 2-element Boolean algebra with domain ${\bf2}$, then $\mathcal{M}_{\mathcal{A}_2}$ is nothing else than Ivlev's characteristic 4-valued Nmatrix $\mathcal{M}_T$ for \Tm\ as introduced in Definition~\ref{Tm-mat-def}, in which $T=(1,1,0)$, $t=(1,0,0)$, $f=(0,0,0)$ and $F=(0,0,1)$. This identification will be used from now on.

For every $\Gamma \cup \{\varphi\} \subseteq For(\Sigma_m)$ we write $\Gamma \models_{\Tm}\varphi$ to denote that $\Gamma$ entails  $\varphi$ w.r.t. the semantics given by the 4-valued Nmatrix $\mathcal{M}_T$. That is: for every valuation $v$ over $\mathcal{M}_T$, if $v(\gamma) \in D_m$ for every $\gamma \in \Gamma$ then $v(\varphi) \in D_m$. As proved in~\cite{con:gol:19}, this is equivalent to say that $\Gamma$ entails  $\varphi$ w.r.t. the semantics given by the class of  Nmatrices  for \Tm\ of the form $\mathcal{M}_\A$.

\begin{remark} \label{conj-disj-Tm}
In order to deal with conjunction and disjunction in \Tm, the following terms are considered: $\varphi \land \psi \defi \sneg(\varphi \to \sneg\psi)$ and $\varphi \vee \psi \defi \sneg\varphi \to \psi$. It is easy to see that the associated multioperators in the swap structure $\mathcal{B}_{\A}$ are specified as follows:

\begin{itemize}
\item[] $z \hat{\land} w = (z_1\sqcap w_1, \ z_2\sqcap w_2,\ z_3 \sqcup w_3 \leq \_ \leq (z_2 \Rightarrow w_3) \sqcap (w_2 \Rightarrow z_3))$
\item[] $z \hat{\vee} w = (z_1\sqcup w_1,\ z_2 \sqcup w_2 \leq \_ \leq (z_3 \Rightarrow w_2) \sqcap (w_3 \Rightarrow z_2),z_3\sqcap w_3)$
\end{itemize}
In particular, in $\mathcal{M}_T$ the multioperators are defined as follows:
{\scriptsize
\begin{center}
\begin{tabular}{|c|c|c|c|c|}
\hline
 $\land$ & $T$   & $t$ & $f$ & $F$ \\
 \hline \hline
    $T$     & $T$   & $t$ 		& $f$ 	& $F$  \\ \hline
     $t$    & $t$   & $t$ 	& $\{F, f\}$ 	& $F$  \\ \hline
     $f$    & $f$   & $\{F, f\}$ 	& $\{F, f\}$ & $F$  \\ \hline
     $F$    & $F$   & $F$ 		& $F$ 	& $F$  \\ \hline
  \end{tabular}
\hspace{0.3cm}
\begin{tabular}{|c|c|c|c|c|}
\hline
 $\lor$ & $T$   & $t$ & $f$ & $F$ \\
 \hline \hline
    $T$     & $T$   & $T$ 		& $T$ 	& $T$  \\ \hline
     $t$    & $T$   & $\{T, t\}$ 	& $\{T, t\}$	& $t$  \\ \hline
     $f$    & $T$   & $\{T, t\}$ 	& $f$ & $f$  \\ \hline
     $F$    & $T$   & $t$ 		& $f$ 	& $F$  \\ \hline
  \end{tabular}
\end{center}}

\

These definitions will be useful in order to construct the combination of \Tm\ with \pyn.
\end{remark}

\section{Combining swap structures: the case of modal logic \INT} \label{sectINT}

Having motivated and presented the extension \pyn\ of \fde\, and the Ivlev modal logic \Tm, in this section it will be introduced our main proposal: the  semantical  combination of Ivlev-like modal logics with extensions of \fde\ by means of a novel technique for combining logics presented by means of swap structures. This technique will be exemplified by defining the modal logic \INT, which is the result of combining Ivlev's \Tm\ with \pyn, an extension of Nelson's \nel\ (which, in turn, extends \fde). It is worth noting that, while \Tm\ is semantically characterized by a 4-valued swap structure,  \pyn\ is characterized by a 4-valued twist structure, a particular (deterministic) case of swap structures. As it will shown in Section~\ref{conserva}, \INT\ can be considered as a genuine combination of \Tm\ and \pyn, provided that it is a conservative expansion of both logics.

The basic idea of the technique is called {\em superposition} of snapshots, defined in a general setting as follows.\footnote{To simplify the presentation, and to obtain a decision procedure for the combined logic, it will be considered the combination of the finite-valued characteristic Nmatrix for each of the logics being combined, corresponding to the swap structure over the 2-element Boolean algebra. As discussed in Section~\ref{finalsect} below, the method can be applied to the whole family of Nmatrices for each of the given logics, producing a class of swap structures over Boolean algebras.} For  $i=1,2$ assume that the  logic ${\bf L}_i$, defined over a finite signature $\Theta_i$, is characterized by a swap structures semantics given by a  finite  Nmatrix $\mathcal{M}_i=\langle A_i,D_i,\mathcal{O}_i\rangle$ such that the underlying multialgebra $\mathcal{A}_i=\langle A_i,\mathcal{O}_i\rangle$ is a swap structure over the 2-element Boolean algebra $\A_2$  where the snapshots in $A_i \subseteq {\bf 2}^{n_i+1}$ represent the possible 0-1 values assigned to the components of the sequence of (schema) formulas $(\xi,\psi^i_1(\xi), \ldots,\psi^i_{n_i}(\xi))$. Here, $\xi$ is a schema variable representing an arbitrary formula over the signature $\Theta_i$, and each $\psi^i_j$ is a schema formula  over the signature $\Theta_i$ depending exclusively on $\xi$.\footnote{For instance, in ${\bf L}_1=\Tm$ we have $n_1=2$, $\psi^1_1(\xi)=\square\xi$ and $\psi^1_2(\xi)=\square\sneg\xi$. In turn,  in ${\bf L}_2=\pyn$ we have $n_2=1$ and $\psi^2_1(\xi)=\neg\xi$.}
 The superposition of snapshots of ${\bf L}_1$ and ${\bf L}_2$ is defined as the finite set  of  snapshots  representing altogether the possible 0-1 values which can be assigned to the components of the sequence of formulas $(\xi,\psi^1_1(\xi), \ldots,\psi^1_{n_1}(\xi), \psi^2_{1}(\xi), \ldots,\psi^2_{n_2}(\xi))$, where now $\xi$ represents an arbitrary formula over the signature $\Theta_1 \cup \Theta_2$.\footnote{As mentioned previously, the notion of snapshots as representing all the possible 0-1 values assigned to the formulas corresponding to its components can be rigorously formalized by means of bivaluations. In Section~\ref{finalsect} this question will be briefly discussed.} This gives rise to a finite multialgebra $\A$ over the signature $\Theta_1 \cup \Theta_2$ presented as a swap structure whose domain is the set $A \subseteq {\bf 2}^{n_1+n_2+1}$ of (superposed) snapshots. This means that the requirements for the components of the snapshots in  ${\bf L}_i$ used to construct the domain  $A_i$ ($i=1,2$) are used  together in the construction of the domain $A$.  The multioperator $\#^\A$ over $A$ associated to a connective $\#$ of $\Theta_i$ preserves the formal specifications  from its interpretation $\#^{i}$ in $\A_i$ (when $i=2$, subscripts corresponding to the coordinates $\neq 1$ of the snapshots need to be adjusted in the specifications). In case  $\#$ is shared, that is, it belongs to $\Theta_1 \cap \Theta_2$, it is assumed that  the formal specification of the first coordinate of $\#^1$ and $\#^2$ coincide (a basic coherence requirement for defining the combination of the swap structures).\footnote{For instance, the formal specification for the first coordinate of the interpretation of $\to$ in the (N)matrices of \pyn\ and \Tm\ is the same, namely: $z_1 \Rightarrow w_1$. Then, $\to$ can be shared in the combination of \pyn\ and \Tm.}
 
In the examples to be analyzed in this paper, 3-dimensional snapshots for logic ${\bf L}_1 \defi \Tm$  (as well as its axiomatic extensions), representing 0-1 values for formulas $(\xi,\square \xi, \square\sneg \xi)$ over $\Sigma_m$, will  be superposed to 2-dimensional snapshots for logic ${\bf L}_2 \defi \pyn$, representing 0-1 values for formulas $(\xi,\neg \xi)$ over $\Sigma_4$.
The superposition of snapshots in this case can be represented as follows (here, snapshots for $\Tm$ are represented horizontally, while snapshots for \pyn\ are represented vertically):

\begin{center}

\begin{tikzpicture}
\tikzstyle{every path}=[very thick]

\edef\sizetape{0.7cm}
\tikzstyle{tmtape}=[draw,minimum size=\sizetape]
\tikzstyle{tmhead}=[draw,minimum size=\sizetape]
\tikzstyle{tmhead1}=[on chain=1]

\begin{scope}[start chain=1 going right,node distance=-0.15mm]
    \node [on chain=1,tmtape] (input) {$\ \ \xi \ \ $};
    \node [on chain=1,tmtape] {$\ \square\xi \ $};
    \node [on chain=1,tmtape] {$\square\sneg\xi$};
    \node [on chain=1] {\textbf{Snapshot for \Tm}};
\end{scope}

\node [tmhead,yshift=-.3cm] at (input.south) (head) {$\ \neg\xi \ $};

\node [tmhead1,yshift=-.9cm] at (input.south) (head){\hspace*{-16mm}\textbf{Snapshot}};

\node [tmhead1,yshift=-1.4cm] at (input.south) (head){\hspace*{-16mm}\textbf{for \pyn}};

\end{tikzpicture}
\end{center}

\noindent
Hence, the snapshots obtained by superposition are 4-tuples $z=(z_1,z_2,z_3,z_4)$ such that each coordinate $z_i$ (with values 0 or 1) represents a possible truth-value for formulas $\xi$, $\square\xi$, $\square\sneg\xi$ and $\neg \xi$, respectively, over the signature $\Sigma\defi \Sigma_m \cup \Sigma_4=\{\land, \lor, \to,$\\ 
$\neg,\sneg, \square\}$. Observe that $\to$ is the only shared connective (and, as observed above, this connective is shareable in these logics). As it was previously mentioned, besides the requirements for the components of the snapshots (inherited from the  original swap structures)\footnote{For instance: $z_2 \leq z_1$, coming from \Tm\ (observe that \pyn\ has no restrictions for the components of its snapshots, since $A_4={\bf 2}^2$).} required to define the domain of the combined swap structure, some additional requirements will be imposed for the superposed snapshots, motivated by conceptual considerations. 

First, \fde\ and its extensions \nel\ and \pyn\ can be interpreted in informational terms. On the other hand, the  original interpretation for the modality $\square$ in \Tm\ was given in alethic terms (hence $\square \varphi$ means that $\varphi$ is necessarily true). However, other interpretation could be given for \Tm, more compatible with an informational perspective. We can assume that $\varphi$ means that $\varphi$ was observed (or informed, or reported), while $\square \varphi$ means that $\varphi$ was confirmed (or verified, or certified). Hence, all the axioms of \Tm, including \axT, make sense with this interpretation (note that \nec, as well as the converse of \axT, would not be sound with this interpretation). Now, in \pyn\ we could give the following interpretation in informational terms: $\varphi$ means predisposition (or propensity, or inclination)  to accept $\varphi$, while $\neg\varphi$ represents predisposition (or propensity, or inclination)  to reject $\varphi$. By assuming that propensity to reject propensity to reject $\varphi$ is equivalent to propensity to accept $\varphi$, it gives an interpretation of \pyn\ compatible with the one just given for \Tm\ in informational terms. Note that all the De Morgan laws for $\neg$ in \pyn\ (including the one for implication) are compatible with this interpretation. 

Now, in principle exactly 8 snapshots should be generated by superposition, given that each original snapshot for \Tm\ splits into two ones, according to the value of $\neg\varphi$, which contains additional informational status of $\varphi$ (predisposition to accept/reject $\varphi$).
However, in case $\varphi$ is a confirmed information, which corresponds to  $T=(1,1,0)$ in \Tm, there could be no predisposition to reject $\varphi$, and so the fourth coordinate (representing $\neg\varphi$) can only be $0$. Analogously, if  confirmedly $\varphi$ is not the case (or, equivalently, $\sneg\varphi$ is confirmed), corresponding to $F=(0,0,1)$ in \Tm, there could be no propensity to accept $\varphi$ (i.e., only the propensity to reject $\varphi$ would make sense), and so the fourth coordinate (representing $\neg\varphi$) can only be $1$. This produces a universe of 6 snapshots, with exactly 3 of them being designated (the ones with first coordinate 1). As observed above, the multialgebra (swap structure) will be defined over the union $\Sigma$ of both signatures, where $\to$ is shared.

The six truth-values associated to the superposed snapshots are the following:
$$\begin{array}{ccc}
T_0=(1,1,0,0) & \ \ \ & f_0=(0,0,0,0)\\
t_0=(1,0,0,0) & \ \ \ & f_{1}=(0,0,0,1)\\
t_1=(1,0,0,1) & \ \ \ &  F_1=(0,0,1,1)
\end{array}$$
The meaning of the truth-values is the following:\\[1mm]
$T_0$: confirmed (hence observed) and without propensity to reject;\\
$t_0$: observed but not confirmed, and without propensity to reject;\\
$t_1$: observed but not confirmed, and with propensity to reject;\\
$f_0$: observed (but not confirmed) that it is not the case, and without\\
\hspace*{5mm} propensity to reject;\\
$f_1$: observed (but not confirmed) that it is not the case, and with propensity to reject;\\
$F_1$: confirmed (hence observed) that it is not the case, and  with propensity to reject.

The new concept `propensity to reject $p$' is formalized as $\neg p$. Hence, by considering, besides {\em observed} ($\mathbb{O}$), {\em confirmed} ($\mathbb{C}$) and {\em confirmed that it is not the case} ($\mathbb{CN}$), coming from the new interpretation of \Tm, the new concept of  `propensity to reject' ($\mathbb{PR}$) coming from \pyn, the truth values can be formalized  in terms of these concepts as follows:\\[1mm]

\noindent
 $T_0$: $\square p \equiv p \wedge  \square p \wedge \sneg\square\sneg p \land \sneg \neg p$ \  (belongs to $\mathbb{O}$ and $\mathbb{C}$, and belongs to neither $\mathbb{CN}$ nor  $\mathbb{PR}$);\\[1mm]
$t_0$: $p  \wedge \sneg \square p \wedge  \sneg\square\sneg p \land \sneg\neg p$ \ (belongs to $\mathbb{O}$, and belongs to neither $\mathbb{C}$ nor $\mathbb{CN}$ nor $\mathbb{PR}$);\\[1mm]
$t_1$: $p  \wedge \sneg \square p \wedge  \sneg\square\sneg p \land \neg p$ \ (belongs to $\mathbb{O}$ and $\mathbb{PR}$, and belongs to neither $\mathbb{C}$ nor $\mathbb{CN}$);\\[1mm]
$f_0$: $\sneg p  \wedge \sneg\square p \wedge \sneg\square\sneg p \land \sneg\neg p$ \  (belongs to neither $\mathbb{O}$ nor $\mathbb{C}$ nor $\mathbb{CN}$ nor $\mathbb{PR}$);\\[1mm]
$f_1$: $\sneg p  \wedge \sneg\square p \wedge \sneg\square\sneg p \land \neg p$ \  (belongs to $\mathbb{PR}$, and belongs to neither $\mathbb{O}$ nor $\mathbb{C}$ nor $\mathbb{CN}$);\\[1mm]
$F_1$: $\square \sneg p \equiv \sneg p \wedge \sneg\square p \wedge \square\sneg  p \land \neg p$ \ (belongs to $\mathbb{CN}$ and $\mathbb{PR}$, and belongs to neither $\mathbb{O}$ nor $\mathbb{C}$).\\[1mm]

This can be displayed through  the following Venn diagram:

\begin{center}
\begin{tikzpicture}[thick] 
\draw (3.0,-1.7) rectangle (-3.0,1.7) node[below right] {}; 
\draw (-1.2,0) ellipse (1.5cm and 1.2cm) node[above,shift={(0,1.15)}] {$\mathbb{O}$};
\draw (1.2,0) ellipse  (1.5cm and 1.2cm) node[above,shift={(0,1.15)}] {$\mathbb{PR}$}; 
\draw (-1.2,0) ellipse (0.7cm and 0.5cm) node[above,shift={(0,0.45)}] {$\mathbb{C}$};
\draw (1.2,0) ellipse (0.7cm and 0.5cm) node[above,shift={(0,0.45)}] {$\mathbb{CN}$};
	\node at (-1.2,0) {$T_0$}; 
	\node at (0.0,0) {$t_1$}; 
	\node at (1.2,0) {$F_1$}; 
	\node at (-1.6,-0.8) {$t_0$}; 
	\node at (1.6,-0.8) {$f_1$}; 
	\node at (0.0,-1.4) {$f_0$}; 
\end{tikzpicture}
\end{center}

\begin{definition} [The combined Nmatrix for $\mathcal{M}_{\INT}$] \label{defNmatINT} \ \\ 
Let $\Sigma$ be the signature $\{\land, \lor, \to, \neg,\sneg, \square\}$.
The Nmatrix  $\mathcal{M}_{\INT}= \langle \textsc{B}_{T}, \textrm{D}, \mathcal{O}\rangle$  for \INT\ over $\Sigma$ is defined by the following swap structure:
$$\textsc{B}_{T}=\{z \in {\bf 2}^4 \ : \ z_2 \leq z_1, \ z_1 \sqcap z_3=0, \ z_2 \sqcap z_4=0   \ \mbox{ and } \ z_3 \leq z_4 \}.$$ 
That is, $\textsc{B}_{T} = \big\{T_0, \, t_0, \, t_1, \, f_0, \, f_1, \, F_1\big\}$, where these triples are defined as above.\footnote{Observe that the restrictions for the components imposed in \Tm\ are preserved in \INT\ (\pyn\ has no restrictions on the components of its snapshots). In addition, the new restrictions $z_2 \sqcap z_4=0$  and $z_3 \leq z_4$ were considered here only by conceptual reasons, not being mandatory for this process of combination of logics (see discussion at the end of Section~\ref{conserva}).}  
The set of designated elements of $\mathcal{M}_{\INT}$ is $\textrm{D}=\big\{z \in \textsc{B}_{T} \ : \ z_1=1\big\} = \big\{T_0, \, t_1, \, t_0\big\}$, hence $\textrm{ND}=\big\{z \in \textsc{B}_{T} \ : \ z_1\neq 1\big\} = \big\{f_1, \, f_0, \, F_1\big\}$ is the set of non-designated truth-values.  The multioperator $\mathcal{O}(\#)=\tilde{\#}$ interpreting the connective $\#$ is defined as follows, for  $\# \in \Sigma$ and for every $z$ and $w$ in $\textsc{B}_{T}$:

\noindent
(i) $z \tilde{\land} w = \{u\in \textsc{B}_{T} \ : \ u_1=z_1\sqcap w_1, \  u_2=z_2\sqcap w_2, \ u_4=z_4 \sqcup w_4,$\\[2mm]
	\hspace*{1.7cm}$\mbox{ and } \ z_3 \sqcup w_3 \leq u_3 \leq (z_2 \Rightarrow w_3) \sqcap (w_2 \Rightarrow z_3)\}$;\\[2mm]
(ii) $z \tilde{\lor} w = \{u\in \textsc{B}_{T} \ : \ u_1=z_1\sqcup w_1, \ u_3= z_3 \sqcap w_3,   \  u_4=z_4 \sqcap w_4,$\\[2mm]
	\hspace*{1.7cm}$\mbox{ and } \ z_2 \sqcup w_2 \leq u_2 \leq (z_3 \Rightarrow w_2) \sqcap (w_3 \Rightarrow z_2)\}$;\\[2mm]
(iii) $z\tilde{\to} w = \{u\in \textsc{B}_{T} \ : \ u_1=z_1\Rightarrow w_1, \ u_3=z_2\sqcap w_3,  \ u_4=z_1\sqcap w_4,$\\[2mm]
	\hspace*{1.7cm}$\mbox{ and } \ z_3 \sqcup w_2 \leq u_2 \leq (z_2 \Rightarrow w_2) \sqcap (w_3 \Rightarrow z_3)\}$;\\[2mm]
(iv) $\tilde{\sneg} z =   \{u\in \textsc{B}_{T} \ : \ u_1=\sneg z_1, \  u_2=z_3, \ u_3=z_2, \ \mbox{ and } \ z_2 \leq u_4 \leq \sneg z_3\}$;\\[2mm]
(v) $\tilde{\neg} z = \{(z_4,z_3,z_2,z_1)\}$;\\[2mm]
(vi) $\tilde{\square} z = \{u\in \textsc{B}_{T} \ : \ u_1=z_2\}$.
\end{definition}

\

\noindent The formal specification of the multioperations for the Nmatrix $\mathcal{M}_{\INT}$ is as follows:

\noindent
(1) $z\tilde{\land} w = (z_1\sqcap w_1, \ z_2\sqcap w_2, \ z_3 \sqcup w_3 \leq \_ \leq (z_2 \Rightarrow w_3) \sqcap (w_2 \Rightarrow z_3),\ z_4 \sqcup w_4)$;\\[1mm]
(2) $z\tilde{\lor} w = (z_1\sqcup w_1, \ z_2 \sqcup w_2 \leq \_ \leq (z_3 \Rightarrow w_2) \sqcap (w_3 \Rightarrow z_2), \ z_3 \sqcap w_3, \ z_4 \sqcap w_4)$;\\[1mm]
(3) $z \tilde{\to} w = (z_1\Rightarrow w_1,\ z_3 \sqcup w_2 \leq \_ \leq (z_2 \Rightarrow w_2) \sqcap (w_3 \Rightarrow z_3),z_2\sqcap w_3, \ z_1\sqcap w_4)$;\\[1mm]
(4) $\tilde{\sneg} z = (\sneg z_1,z_3,z_2, z_2 \leq \_ \leq \sneg z_3)$;\\[1mm]
(5) $\tilde{\neg} z = (z_4,z_3,z_2,z_1)$;\\[1mm]
(6) $\tilde{\square} z = (z_2,\_,\_ ,\_ )$.

\noindent By using the same notation adopted right after Definition~\ref{NMatIvlev},  the symbol `$\_$' means that the respective coordinate of the snapshot can be filled out by any value in {\bf 2}, provided that the resulting 4-tuple belongs to $\textsc{B}_{T}$ (that is, it is a valid snapshot).

\begin{remarks} \label{obs-swap}  \ \\
(1) By considering the projection on the first 3 coordinates, the formal specifications for the multioperators over $\Sigma_m$ coincides with the ones for \Tm. In turn, by projecting on the first and fourth coordinates, the specifications in the signature $\Sigma_4$ coincides with the ones for \pyn. That is, the specifications of  the given logics were preserved.\\[1mm]
(2)  Observe that the formal specification for the first coordinate of the multioperator $\tilde{\to}$ associated to $\to$, the only connective being shared, coincide in the swap structures for \pyn\ and \Tm: it is given by the term $t(z,w)=z_1 \Rightarrow w_1$. This allows to perform the combination of both multialgebras in this signature.\\[1mm] 
(3) Despite the only shared conective is $\to$, the definition of the multialgebra for \INT\ takes into account the multioperations $\hat{\land}$ and $\hat{\vee}$ defined in \Tm\ (recall Remark~\ref{conj-disj-Tm}). Hence, \INT\ should be seen as being obtained by combining \pyn\ with the expansion of \Tm\ by  $\land$ and $\vee$, and by sharing $\{\land,\vee,\to\}$, which are shareable (the first coordinate of the interpretation of $\#$ in both logics is $z_1 \# w_1$ for $\#\in\{\land,\vee,\to\}$). This justifies the definition of $\tilde{\land}$ and $\tilde{\vee}$ in \INT. \\[1mm]
(4) The restrictions on the 4-tuples in $\textsc{B}_{T}$ will be represented by some axioms and theorems of the Hilbert calculus for \INT\ to be defined in Section~\ref{Hilbert}. Thus, $z_2 \leq z_1$ and $z_1 \sqcap z_3=0$ will correspond to axiom \axT\ of Definition~\ref{sysB4}. In turn,  items~(1) and~(5) of Proposition~\ref{propbaslog} will correspond to the restrictions $z_2 \sqcap z_4=0$ and $z_3 \leq z_4$, respectively (see Section~\ref{Hilbert}).
It is worth noting that the conditions over the second coordinate of the multioperation $z \tilde{\to} w$ will correspond to axioms \axK, \axKu\ and \axMu\ of the Hilbert calculus for \INT, to be introduced in Definition~\ref{IvNelLHil} below.
\end{remarks}

\noindent The (non-deterministic) truth-tables of the Nmatrices for the logics above can be displayed as follows:

{\scriptsize
\begin{center}
\begin{tabular}{|c|c|c|c|c|c|c|}
\hline
 $\to$ & $T_0$  & $t_0$  & $t_1$ & $f_0$ & $f_1$  & $F_1$ \\[1mm]
 \hline 
    $T_0$     & $T_0$ & $t_0$ & $t_1$ & $f_0$ & $f_1$ & $F_1$    \\[1mm] \hline
     $t_0$    & $T_0$  & $T_0, t_0$ & $t_1$ & $f_0$ & $f_1$ & $f_1$  \\[1mm] \hline
     $t_1$    & $T_0$  & $T_0, t_0$ & $t_1$ & $f_0$ & $f_1$ & $f_1$  \\[1mm] \hline
     $f_0$    & $T_0$  & $T_0, t_0$ & $T_0, t_0$ & $T_0, t_0$ & $T_0, t_0$ & $t_0$  \\[1mm] \hline
     $f_1$    & $T_0$  & $T_0, t_0$ & $T_0, t_0$ & $T_0, t_0$ & $T_0, t_0$ & $t_0$  \\[1mm] \hline
     $F_1$    & $T_0$  & $T_0$ & $T_0$ & $T_0$ & $T_0$ & $T_0$  \\[1mm] \hline
\end{tabular}
\hspace{0.3cm}
\begin{tabular}{|c|c|c|c|} \hline
$\quad$ & $\sneg$  & $\neg$  & $\square$\\[1mm]
 \hline
    $T_0$   & $F_1$   & $F_1$    & \textrm{D}  \\[1mm] \hline
     $t_0$   & $f_0, f_1$   & $f_1$   & \textrm{ND}\\[1mm] \hline
     $t_1$   &$f_0, f_1$  &$t_1$   & \textrm{ND} \\[1mm] \hline
     $f_0$   & $t_0, t_1$    & $f_0$   & \textrm{ND} \\[1mm] \hline
     $f_1$   & $t_0, t_1$   & $t_0$   & \textrm{ND} \\[1mm] \hline
     $F_1$   & $T_0$   & $T_0$   & \textrm{ND} \\[1mm] \hline
\end{tabular}

\end{center}
}

{\scriptsize
\noindent
\begin{center}
\begin{tabular}{|c|c|c|c|c|c|c|}
\hline
$\land$ & $T_0$  & $t_0$  & $t_1$ & $f_0$ & $f_1$  & $F_1$ \\[1mm]
 \hline
    $T_0$     & $T_0$ & $t_0$ & $t_1$ & $f_0$ & $f_1$ & $F_1$    \\[1mm] \hline
     $t_0$    & $t_0$  & $t_0$ & $t_1$ & $f_0$ & $F_1, f_1$ & $F_1$  \\[1mm] \hline
     $t_1$    & $t_1$  & $t_1$ & $t_1$ & $F_1, f_1$ & $F_1, f_1$ & $F_1$  \\[1mm] \hline
     $f_0$    & $f_0$  & $f_0$ & $F_1, f_1$ & $f_0$ & $F_1, f_1$ & $F_1$  \\[1mm] \hline
     $f_1$    & $f_1$  & $F_1, f_1$ & $F_1, f_1$ & $F_1, f_1$ & $F_1, f_1$ & $F_1$  \\[1mm] \hline
     $F_1$    & $F_1$  & $F_1$ & $F_1$ & $F_1$ & $F_1$ & $F_1$  \\[1mm] \hline
\end{tabular}

\end{center}
}

{\scriptsize
\noindent
\begin{center}
\begin{tabular}{|c|c|c|c|c|c|c|}
\hline
$\lor$ & $T_0$  & $t_0$  & $t_1$ & $f_0$ & $f_1$  & $F_1$ \\[1mm]
 \hline 
    $T_0$     & $T_0$ & $T_0$ & $T_0$ & $T_0$ & $T_0$ & $T_0$    \\[1mm] \hline
     $t_0$    & $T_0$  & $T_0, t_0$ & $T_0, t_0$ & $T_0, t_0$ & $T_0, t_0$ & $t_0$  \\[1mm] \hline
     $t_1$    & $T_0$  & $T_0, t_0$ & $t_1$ & $T_0, t_0$ & $t_1$ & $t_1$  \\[1mm] \hline
     $f_0$    & $T_0$  & $T_0, t_0$ & $T_0, t_0$ & $f_0$ & $f_0$ & $f_0$  \\[1mm] \hline
     $f_1$    & $T_0$  & $T_0, t_0$ & $t_1$ & $f_0$ & $f_1$ & $f_1$  \\[1mm] \hline
     $F_1$    & $T_0$  & $t_0$ & $t_1$ & $f_0$ & $f_1$ & $F_1$  \\[1mm] \hline
\end{tabular}
\end{center}
}

\

\begin{remarks} \label{Box-not-congr} \ \\
(1) Observe that $\square(\varphi \to \psi)$ is not equivalent in general  to $\square(\sneg\varphi \vee \psi)$ in either \INT\ or \Tm. To see this, consider two different propositional variables $p$ and $q$, and let $v$ be a valuation over the 6-valued Nmatrix for \INT\ such that $v(p)=f_1$ , $v(q)=t_1$ and $v(\sneg p)=t_0 \in \sneg f_1$. Then, $v(p \to q) \in f_1 \to t_1 = \{T_0,t_0\}$. From this, $v(\sneg p \vee q) \in  t_0 \vee t_1 = \{T_0,t_0\}$. Hence, by taking $v$ such that $v(p \to q)=T_0$ and  $v(\sneg p \vee q)=t_0$ we have that $v(\square(p \to q)) \in \textrm{D}$ while $v(\square(\sneg p \vee q)) \notin \textrm{D}$. The same property holds for the 4-valued Ivlev's system \Tm. Indeed, it is very simple to adapt the previous  counterexample, by replacing $t_0$ and $t_1$ by $t$; $T_0$ by $T$ and $f_1$ by $f$, and by observing that $\sneg p \vee q=\sneg\sneg p \to q$ in \Tm\ (recall Remark~\ref{conj-disj-Tm}). Then, we have that both $v(\sneg p \vee q)=v(\sneg\sneg p \to q)$ and $v(p \to q)$ belong to $f \to t = \{T,t\}$. Hence, it is enough taking $v(\sneg p \vee q)=t$  and $v(p \to q)=T$, obtaining so that $v(\square(p \to q)) \in \textrm{D}$ while $v(\square(\sneg p \vee q)) \notin \textrm{D}$.  \\[1mm]
(2) In a similar way, it is easy to show that $\square(\sneg \varphi \vee \sneg \psi)$ is not equivalent in general  to $\square \sneg(\varphi \land \psi)$ in either \INT\ or \Tm. Indeed, in \Tm, this corresponds to the equivalence between $\square(\sneg\sneg \varphi \to \sneg \psi)$ and $\square (\varphi  \to \sneg  \psi)$ which, as observed above, does not hold in general. In the 6-valued system \INT, it is enough considering $\varphi=p$ and $\psi=q$ where $p$ and $q$ are two different propositional variables, and a valuation $v$ such that $v(p)=f_0$, $v(q)=f_1$, $v(\sneg \varphi \vee \sneg \psi)=T_0$ and $v(\sneg(\varphi \land \psi))=t_0$.\\[1mm]
(3) It can be proven analogously that $\square(\neg \varphi \vee \neg \psi)$ is not equivalent in general to $\square \neg(\varphi \land \psi)$ in \INT. To see this, consider $\varphi=\psi=p$ for  a propositional variable $p$, and let $v$ be a valuation such that $v(p)=f_1$, $v(\neg \varphi \vee \neg \psi)=T_0$ and $v(\neg(\varphi \land \psi))=t_0$. This is always possible, since $\neg f_1 \vee \neg f_1 = t_0 \vee t_0=\{T_0,t_0\}$ and so we can choose $v(\neg \varphi \vee \neg \psi)=T_0$. In turn, $f_1 \land f_1=\{F_1,f_1\}$. Hence, we can take $v(p \land p)=f_1$  and so $v(\neg(\varphi \land \psi))\in \neg f_1=\{t_0\}$. From this, $v(\square(\neg \varphi \vee \neg \psi)) \in \textrm{D}$ while $v(\square \neg(\varphi \land \psi))  \notin \textrm{D}$.\\[1mm]
(4) Note that $\neg(\varphi \to \psi)$ is equivalent to $\varphi \land \neg \psi$ in \pyn\ and \INT. However, in general $\neg\neg(\varphi \to \psi)$ is not equivalent to $\neg(\varphi \land \neg \psi)$ in either \pyn\ or \INT. Indeed, let $\varphi=\psi=p$ for a propositional variable $p$, and let $v$ be  valuation over the 4-valued matrix for  \pyn\ such that $v(p)=\nei$. Then, $v(\neg\neg(p \to p))={\bf 1} \in D_4$, but $v(\neg(p \land \neg p))=\nei \notin D_4$. For \INT\ take a valuation $v'$ over the 6-valued Nmatrix for \INT\ such that $v'(p)=f_0$. Then, $v'(\neg\neg(p \to p)) \in \{T_0, t_0\} \subseteq D$, but $v'(\neg(p \land \neg p))=f_0 \notin D$.

The examples above show that neither \INT\ nor \Tm\ nor \pyn\ are self-extensional, given that $\square$ and $\neg$ do not preserve logical equivalences. That is, $\square$ and $\neg$ are hyperintensional operators.
\end{remarks}

\

\noindent As mentioned in the introduction, there is a close relationship between the combined logic \INT\ (as well as its axiomatic extensions to be analyzed in Section~\ref{other-logics}) and the logics and evidence and truth (\lets) originally proposed in~\cite{kolkata} (see also~\cite{car:rod:19}). Indeed, besides \INT\ being an expansion of \fde, it is possible to define, in a natural way, a classicality connective $\circ$ w.r.t. $\neg$ as follows: $\cons\varphi \defi (\varphi \lor \neg\varphi) \land \sneg(\varphi \land \neg\varphi)$. This connective is interpreted by the following multioperator in $\mathcal{M}_{\INT}$:

\

{\scriptsize
\begin{center}
\begin{tabular}{|c|c|c|c|c|c|c|} \hline
$\varphi$ & $\sneg\varphi$  & $\neg\varphi$  & $\varphi \lor \neg\varphi$  & $\varphi \land \neg\varphi$  & $\sneg(\varphi \land \neg\varphi)$  & $\cons\varphi \defi (\varphi \lor \neg\varphi) \land \sneg(\varphi \land \neg\varphi)$ \\[1mm]
 \hline 
    $T_0$   & $F_1$   & $F_1$    & $T_0$  & $F_1$ & $T_0$ & $T_0$ \\[1mm] \hline
     $t_0$   & $f_0, f_1$   & $f_1$   & $T_0, t_0$ & $F_1, f_1$ & \textrm{D} & \textrm{D}  \\[1mm] \hline
     $t_1$   &$f_0, f_1$  & $t_1$   & $t_1$  & $t_1$ & $f_0, f_1$  & $F_1, f_1$    \\[1mm] \hline
     $f_0$   & $t_0, t_1$    & $f_0$   & $f_0$ & $f_0$ & $t_0, t_1$ & \textrm{ND}  \\[1mm] \hline
     $f_1$   & $t_0, t_1$   & $t_0$   & $T_0, t_0$ & $F_1, f_1$ & \textrm{D} & \textrm{D}  \\[1mm] \hline
     $F_1$   & $T_0$   & $T_0$   & $T_0$  & $F_1$ & $T_0$ & $T_0$ \\[1mm] \hline
\end{tabular}
\end{center}}

\

\noindent
Clearly, $\circ$ simultaneously recovers the explosion law and the excluded middle:  $\cons\varphi$ derives $\varphi \vee\neg\varphi$, and from $\varphi \land \neg\varphi\land\cons\varphi$ everything follows. Thus \INT, as well as the axiomatic extensions to be studied in Section~\ref{other-logics}, are modal \lets. Observe that, for every valuation $v$ over  $\mathcal{M}_{\INT}$,
$v(\cons\varphi) \in \textrm{D}$ iff $v(\varphi) \notin\{t_1,f_0\}$. This means that $t_1$ and $f_0$ are the only paradefinite truth-values. As we shall see in Section~\ref{conserva}, they correspond respectively to \bo\ and \nei, and they describe the only doubtful informational states, as $\circ$ indicates.

\section{Hilbert calculus for \INT} \label{Hilbert}

In this section, a sound and complete Hilbert calculus for \INT\ will be presented. It will be obtained by combining the corresponding Hilbert calculi for \pyn\ and for \Tm, together with some suitable bridge axioms which guarantee the coherence with the proposed semantical approach. Let us start by considering Hilbert calculi for \pyn\ and \Tm.

A Hilbert calculus for \pyn\ can be obtained  by adding to the axiomatization of $\nel$ given in~\cite{odin:05} the Peirce law $\varphi \vee (\varphi \to \psi)$. That is:

\

\begin{definition} [Logic \pyn] \label{sysB4} The logic \pyn\ is presented in the signature $\Sigma_4=\{\land,\lor,\to,\neg\}$ through the following Hilbert  calculus (where $\varphi \sse \psi$ is an abbreviation for $(\varphi \to \psi) \land (\psi \to \varphi)$):\\[2mm]
{\bf Axiom schemas:}\\[3mm]
	$\begin{array}{ll}
	\axu & \varphi \to (\psi \to \varphi)\\[2mm]
	\axd & (\varphi \to (\psi \to \gamma)) \to ((\varphi \to \psi) \to (\varphi \to \gamma))\\[2mm]
	\axt & \varphi \to (\psi \to (\varphi \land \psi))\\[2mm]
	\axq & (\varphi \land \psi) \to \varphi	\\[2mm]
	\axc & (\varphi \land \psi) \to \psi\\[2mm]
	\axs &  \varphi \to (\varphi \lor \psi)\\[2mm]
	\axst & \psi \to (\varphi \lor \psi)\\[2mm]
	\axo & (\varphi \to \gamma) \to ((\psi \to \gamma) \to	((\varphi \lor \psi) \to \gamma))\\[3mm]
\end{array}
$
\noindent	$\begin{array}{ll}
	\axDN & \neg\neg\varphi \sse \varphi\\[2mm]
	\axdmu & \neg(\varphi \lor \psi) \sse (\neg \varphi \land \neg \psi)\\[2mm]
	\axdmd & \neg(\varphi \land \psi) \sse (\neg \varphi \lor \neg \psi)  \\[2mm]
	\axdmt & \neg(\varphi \to \psi) \sse (\varphi \land \neg \psi) \\[2mm]
\axouimp	& \varphi \lor (\varphi \to \psi)
\end{array}
$

\noindent	$\begin{array}{ll}
	\mbox{{\bf Inference rule:}}\\[2mm]
	
	\MP: & \displaystyle\frac{\varphi \ \ \ \ \varphi\to	\psi}{\psi}
\end{array}
$
\end{definition}

\

\noindent For every $\Gamma \cup \{\varphi\} \subseteq For(\Sigma_4)$ we write $\Gamma \vdash_{\pyn}\varphi$ to denote that $\varphi$ is derivable from $\Gamma$ in the Hilbert calculus for \pyn.

\

\begin{remark} 
As already mentioned in the Introduction, the extension of \pyn\ by adding a bottom formula $\bot$ such that $\neg\bot$ is valid  was proposed by Odintsov, in the context of the algebraic study of \nel,   under the name of \bfour. In the axiomatic formulation of \bfour\ given in~\cite{odin:wan:17}, the Peirce's law \axouimp\ is presented  as $((\varphi \to \psi) \to \varphi) \to \varphi$, but that formulation is equivalent to the present one. Observe that \axu-\axo\ plus \MP\ corresponds to positive intuitionistic propositional  logic \iplp, hence by adding \axouimp\ we get positive classical propositional  logic \cplp. By removing from the system above axiom  \axouimp\ it is obtained a Hilbert calculus for \nel.
\end{remark}

\

\noindent
Now, a Hilbert calculus for Ivlev's logic \Tm\ will be presented. By convenience, the original Ivlev's axiomatization will be slightly modified.

\

\begin{definition} [Ivlev 4-valued modal logic \Tm] \label{ivev-sys} The logic \Tm\ is defined by the following Hilbert calculus over  the signature  $\Sigma_m=\{\to,\sneg,\square\}$ (where the disjunction and conjunction are defined in terms of $\sneg$ and $\to$ as  $\varphi \vee \psi \defi \sneg\varphi \to \psi$ and  $\varphi \land \psi \defi \sneg(\varphi \to \sneg\psi)$, respectively. Hence, $\varphi \sse \psi$ is an abbreviation for $\sneg((\varphi \to \psi) \to \sneg(\psi \to \varphi))$):\\[2mm]
{\bf Axiom schemas:}\\[3mm]
$\begin{array}{ll}
	\axu & \varphi \to (\psi \to \varphi)\\[2mm]
	\axd & (\varphi \to (\psi \to \gamma)) \to ((\varphi \to \psi) \to (\varphi \to \gamma))\\[2mm]
	\axn & (\sneg \psi \to \sneg \varphi) \to ((\sneg \psi \to \varphi) \to \psi)\\[2mm]
	\axK & \square (\varphi \to \psi) \to (\square \varphi \to \square \psi)\\[2mm]
	\axKu & \square (\varphi \to \psi) \to (\square \sneg\psi \to \square \sneg \varphi))\\[3mm]
\end{array}$
$\begin{array}{ll}
	\axKd &  \square \sneg(\varphi \to \psi) \sse (\square \varphi \land \square \sneg \psi)\\[2mm]
	\axMu & (\square \sneg \varphi \vee \square \psi) \to \square(\varphi \to \psi)\\[2mm]
	\axT & \square \varphi \to \varphi\\[2mm]
	\axDNu & \square \varphi \sse \square \sneg \sneg \varphi
\end{array}$	

\noindent	$\begin{array}{ll}
	\mbox{{\bf Inference rule:}}\\[2mm]
	
	\MP: & \displaystyle\frac{\varphi \ \ \ \ \varphi\to	\psi}{\psi}\\[2mm]
\end{array}
$

\

\noindent For every $\Gamma \cup \{\varphi\} \subseteq For(\Sigma_m)$ we write $\Gamma \vdash_{\Tm}\varphi$ to denote that $\varphi$ is derivable from $\Gamma$ in the Hilbert calculus for \Tm.
\end{definition}

\noindent
Observe that the Hilbert calculus given by \axu, \axd\ and \axn\ plus \MP\ characterizes classical propositional  logic \cpl\ over the signature $\{\to,\sneg\}$, therefore \cpl\ is contained in \Tm. From this, the (defined) conjunction and disjunction connectives behave exactly as in classical logic, given that \Tm\ conservatively expands \cpl.

Now we are ready to introduce axiomatically the paradefinite modal systems we propose, by combining simultaneously the features of the Ivlev-like modal systems of Definition~\ref{ivev-sys} with the paradefinite system \pyn\ introduced in Definition~\ref{sysB4}. Since $\land$ and $\lor$ will be primitive connectives  (because of its \pyn-fragment), some interaction axioms must be added, provided that the operator  $\square$ is not congruential, that is, it does not preserve logical equivalences in general.  Six specific axioms involving the interaction between the two negations $\sneg$, $\neg$ and the modal operator $\square$ (axioms \axNDN-\axNde\ below) must be included, in order to satisfy the conceptual requirements about the nature of the truth values of the combined system mentioned in Section~\ref{sectINT}.
See the comments in Remark~\ref{obs-defisys} below.

\begin{definition} [Paradefinite Ivlev-like modal logic \INT\ over \fde] \label{IvNelLHil}
The  system of paradefinite Ivlev-like modal logic \INT\ is defined over the signature $\Sigma=\{\land,\lor,\to,\sneg,\neg,\square\}$  as the Hilbert calculus obtained from the one for \pyn\ (recall Definition~\ref{sysB4}) by	adding all the  axiom schemas from the Hilbert calculus for \Tm\ (recall Definition~\ref{ivev-sys},  taking into account that $\land$ in  \axKd\ and $\vee$ in \axMu\ denote now the primitive connectives for conjunction and disjunction of $\Sigma$ instead of being abbreviations, and that the abbreviation for $\sse$ in axioms \axKd\ and \axDNu\ is the one adopted in  Definition~\ref{sysB4} instead of the one  considered in Definition~\ref{ivev-sys}), plus the following axiom schemas (in which $\sse$ is as in  Definition~\ref{sysB4}):

$\begin{array}{ll}
\axC	& \square(\varphi \land \psi) \sse (\square\varphi \land \square\psi) \\[2mm]
\axnD	& \square \sneg(\varphi \vee \psi) \sse  \square(\sneg\varphi \land \sneg\psi) \\[2mm]
\axnC	&  (\square\sneg\varphi \vee \square\sneg\psi) \to \square \sneg(\varphi \land \psi) \\[2mm]
\axD	& (\square\varphi \vee \square\psi) \to \square(\varphi \lor \psi) \\[2mm]
\axNDN	 & \square \varphi \sse \square \sneg \neg \varphi
\end{array}$
$\begin{array}{ll}
\axNTN	 & \square \sneg\varphi \sse \square \sneg \neg\neg \varphi\\[2mm]
\axNN	 & \square \sneg(\varphi \land \psi) \to (\square \varphi \to \square\sneg\psi)\\[2mm]
\axNT	 & \square \sneg(\varphi \land \psi) \to (\square \psi \to \square\sneg\varphi)\\[2mm]
\axNo	 & \square (\varphi \lor \psi) \to (\square \sneg\varphi \to \square\psi)\\[2mm]
\axNde	 & \square (\varphi \lor \psi) \to (\square \sneg\psi \to \square\varphi)\\
\end{array}
$
\end{definition}

\

\begin{proposition} \label{propbaslog} The following formulas are theorems of \INT:\\[1mm]
$\begin{array}{ll}
(1) & \square \varphi \to \sneg\neg \varphi.\\[1mm]
(2) & \square \varphi \sse \square \neg \sneg \varphi.\\[1mm]
(3) & \square \varphi \sse \square \neg \neg \varphi.\\[1mm]
(4) & \square \neg\varphi \sse \square\sneg \varphi.\\[1mm]
(5) & \square\sneg\varphi \to \neg\varphi.
\end{array}$
\hspace{1cm}$\begin{array}{ll}
(6) & \square\neg(\varphi \lor\psi) \sse \square(\neg\varphi \land \neg\psi). \\[1mm]
(7) & \square\neg(\varphi \to\psi) \sse \square(\varphi \land \neg\psi). \\[1mm]
(8) & (\square\neg\varphi \vee \square\neg\psi) \to \square \neg(\varphi \land \psi).\\[1mm]
(9) & \square\varphi \to \neg\sneg \varphi.\\[1mm]
(10) & \neg\sneg \varphi \to \sneg\square\sneg \varphi.
\end{array}$
\end{proposition}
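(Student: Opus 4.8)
The plan is to derive everything from a single key lemma, namely item~(4), and then to exploit the fact that although $\square$ (and $\neg$) are \emph{not} congruential, the classical connectives $\land$, $\lor$, $\to$ and $\sneg$ are, since \INT\ contains both \cplp\ (on $\{\land,\lor,\to\}$) and \cpl\ (on $\{\to,\sneg\}$). Consequently the replacement theorem is available for these connectives, and $\sse$ behaves as a transitive, symmetric congruence for them. The delicate point throughout is that I may never replace provably equivalent formulas \emph{under} a $\square$ or under $\neg$, so every manipulation of a boxed formula must be licensed by an explicit modal axiom (\axK--\axMu, \axC--\axNde) rather than by congruence.

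First I would prove (4), $\square\neg\varphi \sse \square\sneg\varphi$. Instantiating \axNDN\ at $\neg\varphi$ gives $\square\neg\varphi \sse \square\sneg\neg\neg\varphi$, while \axNTN\ reads $\square\sneg\varphi \sse \square\sneg\neg\neg\varphi$; chaining these two $\square$-biconditionals by transitivity yields (4). With (4) in hand the short items fall out immediately: (1) is \axNDN\ followed by \axT\ at $\sneg\neg\varphi$; (2) comes from (4) instantiated at $\sneg\varphi$ composed with \axDNu; (3) from (4) at $\neg\varphi$ composed with \axNDN; (5) from (4) followed by \axT\ at $\neg\varphi$; and (9) from (2) followed by \axT\ at $\neg\sneg\varphi$. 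Each of these is a two-step chain of implications or biconditionals, using only transitivity and the relevant instance of \axT.

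The De Morgan--type items (6)--(8) follow the same template: use (4) to trade the inner $\neg$ for $\sneg$, apply the corresponding Ivlev/bridge axiom for $\sneg$, and finally re-insert $\neg$ via (4) under a classical connective. Concretely, for (6) I chain $\square\neg(\varphi\lor\psi)\sse\square\sneg(\varphi\lor\psi)$ (by (4)), then \axnD, then \axC\ to split the box, then (4) applied twice under the congruential $\land$, then \axC\ again; (7) is identical but starting from \axKd\ for the implication; (8) uses (4) under $\lor$, then \axnC, then (4) once more. Here the classicality of $\land$ and $\lor$ is precisely what permits the inner substitutions by (4), so no illegitimate substitution under $\square$ or $\neg$ ever occurs.

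The last item, (10), is the only one that is not a direct forward chain. I would obtain it by classical contraposition: instantiating (1) at $\sneg\varphi$ gives $\square\sneg\varphi \to \sneg\neg\sneg\varphi$, and since the $\{\to,\sneg\}$-fragment is classical, contraposition turns this into $\neg\sneg\varphi \to \sneg\square\sneg\varphi$, which is (10). The main obstacle is conceptual rather than computational: one must resist the reflex of replacing equivalents inside $\square$ and inside $\neg$ (both hyperintensional, by Remarks~\ref{Box-not-congr}), and instead route every boxed manipulation through lemma~(4) together with the explicit modal axioms. Organizing the whole proof around (4) is exactly what reduces each item to a short transitivity chain over $\sse$ and $\to$.
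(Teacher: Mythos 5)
Your proposal is correct and takes essentially the same route as the paper's proof: your derivation of (4) is verbatim the paper's chain through \axNDN\ and \axNTN, items (6)--(8) use exactly the paper's sequence of (4), \axnD/\axC, \axKd\ and \axnC\ with replacement only under classical connectives, and (10) is obtained by the same classical contraposition of (1) instantiated at $\sneg\varphi$. The only cosmetic difference is organizational --- you route (2), (3) and (9) through the key lemma (4) (resp.\ through (2) and \axT), where the paper unfolds the same \axDNu/\axNTN/\axNDN\ chains directly and derives (9) from \axDNu\ together with (5) --- which changes nothing of substance.
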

\begin{proof} Along this proof, we will make implicit use of the deduction metatheorem in \INT. Moreover,  some obvious steps in the derivations in \INT\ presented below will be omitted, for the sake of simplicity.\\
(1) Consider the following derivation in \INT:\\
\indent 1. $\square \varphi$ \ (Hyp)\\
\indent 2. $\square \varphi \to \square \sneg \neg \varphi$ \ \axNDN\\
\indent 3.  $\square \sneg \neg \varphi$ \ (1, 2, \MP)\\
\indent 4.  $\square \sneg \neg \varphi \to \sneg \neg \varphi$ \ \axT\\
\indent 5. $\sneg \neg \varphi$  \ (3, 4, \MP)\\[1mm]
(2)  $\square \varphi \equiv_{\axDNu} \square \sneg \sneg \varphi \equiv_{\axNTN} \square \sneg \neg \neg \sneg \varphi  \equiv_{\axNDN}  \square \neg \sneg \varphi$.\\[1mm]
(3) $\square \varphi \equiv_{\axNDN} \square \sneg \neg \varphi  \equiv_{\axNTN} \square \sneg \neg\neg\neg \varphi \equiv_{\axNDN} \square \neg\neg\varphi$.\\[1mm]
(4) $\square \neg \varphi \equiv_{\axNDN} \square \sneg \neg\neg \varphi \equiv_{\axNTN} \square \sneg \varphi$.\\[1mm]
(5)  Consider the following derivation in \INT:\\
\indent 1. $\square \sneg\varphi$ \ (Hyp)\\
\indent 2. $\square \sneg \varphi \to \square \neg \varphi$ \ (by (4))\\
\indent 3.  $\square \neg \varphi$ \ (1, 2, \MP)\\
\indent 4.  $\square \neg \varphi \to \neg \varphi$ \ \axT\\
\indent 5. $\neg \varphi$  \ (3, 4, \MP)\\[1mm]
(6) $\square\neg(\varphi \lor\psi) \equiv_{(4)} \square\sneg(\varphi \lor\psi) \equiv_{\axnD,\axC} \square\sneg \varphi \land \square\sneg\psi \equiv_{(4)} \square\neg \varphi \land \square\neg\psi \equiv_{\axC} \square(\neg\varphi \land \neg\psi)$. \\[1mm]
(7) $\square\neg(\varphi \to\psi) \equiv_{(4)} \square\sneg(\varphi \to\psi) \equiv_{\axKd} \square\ \varphi \land \square\sneg\psi \equiv_{(4)} \square \varphi \land \square\neg\psi \equiv_{\axC} \square(\varphi \land \neg\psi)$. \\[1mm]
(8)  Consider the following derivation in \INT:\\
\indent 1. $(\square\neg\varphi \vee \square\neg\psi)$ \ (Hyp)\\
\indent 2. $(\square\neg\varphi \vee \square\neg\psi) \to (\square\sneg\varphi \vee \square\sneg\psi)$ \ (by (4))\\
\indent 3.  $(\square\sneg\varphi \vee \square\sneg\psi)$ \ (1, 2, \MP)\\
\indent 4.  $(\square\sneg\varphi \vee \square\sneg\psi) \to \square \sneg(\varphi \land \psi)$ \ \axnC\\
\indent 5. $\square \sneg(\varphi \land \psi)$  \ (3, 4, \MP)\\
\indent 6. $\square \sneg(\varphi \land \psi) \to \square \neg(\varphi \land \psi)$ \ (by (4))\\
\indent 7.  $\square \neg(\varphi \land \psi)$ \ (5, 6, \MP)\\[1mm]
(9)  Consider the following derivation in \INT:\\
\indent 1. $\square \varphi$ \ (Hyp)\\
\indent 2. $\square \varphi \to \square  \sneg\sneg \varphi$ \ \axDNu\\
\indent 3.  $\square  \sneg\sneg \varphi$ \ (1, 2, \MP)\\
\indent 4.  $\square  \sneg\sneg\varphi \to \neg\sneg \varphi$ \ (by (5))\\
\indent 5. $\neg\sneg \varphi$  \ (3, 4, \MP)\\[1mm]
(10)  Consider the following derivation in \INT:\\
\indent 1. $\square \sneg\varphi \to \sneg\neg\sneg \varphi$ \ (by (1))\\
\indent 2. $(\square \sneg\varphi \to \sneg\neg\sneg \varphi) \to (\neg\sneg \varphi \to \sneg\square \sneg\varphi)$ \ (by \cplp)\\
\indent 3.  $\neg\sneg \varphi \to \sneg\square \sneg\varphi$ \ (1, 2, \MP).
\end{proof}

\begin{remarks} \label{obs-defisys} \ \\
(1) Recall that Ivlev defined the disjunction $\vee$ in terms of $\sneg$ and $\to$ as it is usually done in classical logic, namely $\varphi \vee \psi \defi \sneg\varphi \to \psi$. Hence, by replacing $\varphi$ by $\sneg\varphi$ in \axMu, and by \axDNu, we get the theorem $(\square \varphi \vee \square \psi) \to \square(\varphi \vee \psi)$ in the calculus for \Tm\  (using Ivlev's  abbreviations). This shows that the role of  axiom \axMu\ in Ivlev's approach is (at least, see item~(5) below) to validate this property of $\square$ w.r.t. the defined disjunction. Since disjunction is a primitive connective in \INT, it is necessary to add explicitly axiom \axD\ in order to guarantee this basic property of $\square$ w.r.t. the primitive disjunction. Indeed, despite $\varphi \vee \psi$ being equivalent in this system to $\sneg\varphi \to \psi$ (given that it contains classical logic), it is not true in general that  $\square(\varphi \vee \psi)$ is equivalent to $\square(\sneg\varphi \to \psi)$. The reason is that $\square$ does not preserves logical equivalences in general (see Remarks~\ref{Box-not-congr} above). \\
(2) Analogously, conjunction is defined in Ivlev's \Tm\ as in classical logic, namely  $\varphi \land \psi \defi \sneg(\varphi \to \sneg\psi)$. By replacing $\psi$ by $\sneg\psi$ in \axKd, and by \axDNu, we get in \Tm\ the theorem $(\square \varphi \land \square \psi) \sse \square(\varphi \land \psi)$. This shows that the aim of Ivlev's axiom  \axKd\  is (at least, see item~(4) below)) to validate the distributivity property of $\square$ w.r.t. conjunction, besides expressing a desirable logical property by itself. Given that conjunction is a primitive connective in \INT, the role of axiom \axC\ is to guarantee this basic property of $\square$, while keeping the original Ivlev's axiom \axKd. Observe once again that, despite $\varphi \land \psi$ being equivalent in this system to $\sneg(\varphi \to \sneg\psi)$,  $\square$ does not preserve logical equivalences in general, hence axiom \axC\ is necessary.\\
(3) Still concerning axiom \axKd\, if we substitute in it $\varphi$ by $\sneg\varphi$ we obtain, in \Tm's notation, the theorem $\square \sneg(\varphi \vee \psi) \sse (\square \sneg\varphi \land \square \sneg \psi)$. So, the aim of axiom \axnD\ is to preserve this property when $\vee$ is primitive, while keeping the original Ivlev's presentation  (observe that the conjunction on the right-hand side of  \axKd\ occurs outside the scope of $\square$, hence it is irrelevant whether this occurrence of $\land$ is the definite or the primitive conjunction).\\
(4) By replacing $\psi$ by $\sneg\psi$ in \axMu, and by \axDNu, we get in system \Tm\ the theorem $(\square \sneg\varphi \vee \square \sneg\psi) \to \square\sneg(\varphi \land \psi)$ (by using \Tm's notation). Hence, the role played by \axnC\  is to keep this property when $\land$ is a primitive connective (observe that the disjunction in  \axMu\ occurs out of the scope of $\square$, hence it is irrelevant whether it is the defined or the primitive $\vee$).\\
(5) In the same vein, axioms \axNN-\axNde\ are clearly derivable in the Hilbert calculus for \Tm\ (when considering the corresponding  abbreviations for $\land$, $\vee$ and $\sse$ in the signature $\Sigma_m$). Since these theorems of \Tm,  expressed in the signature of \INT, are essential for the proof of the Truth Lemma (Proposition~\ref{TruthL} below), they were included in the Hilbert calculus for \INT.
\end{remarks}

\section{Soundness and completeness}

In this section  it will be shown that the Hilbert calculus for \INT\ is sound and complete w.r.t. the Nmatrix $\mathcal{M}_{\INT}$.

\begin{theorem} [Soundness] \label{sound-INT}
Let $\Gamma \cup \{\varphi\}$ be a set of formulas. Then, $\Gamma \vdash_{\INT} \varphi$ implies that $\Gamma \models_{\INT} \varphi$.
\end{theorem}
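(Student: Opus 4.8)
The plan is to argue by induction on the length of a derivation of $\varphi$ from $\Gamma$ in the Hilbert calculus for \INT. I fix an arbitrary valuation $v$ over $\mathcal{M}_{\INT}$ with $v(\gamma)\in\textrm{D}$ for every $\gamma\in\Gamma$, and show $v(\varphi)\in\textrm{D}$. As usual this reduces to two facts: (i) every instance of every axiom schema is designated under \emph{every} valuation, and (ii) the rule \MP\ preserves designated values. For (ii), recall that a snapshot is designated exactly when its first coordinate is $1$, and that by clause~(3) of the formal specification the first coordinate of $z\,\tilde{\to}\,w$ is the deterministic value $z_1\Rightarrow w_1$. Hence if $v(\psi)$ and $v(\psi\to\varphi)$ are both designated, then with $z=v(\psi)$ and $w=v(\varphi)$ we have $z_1=1$ and $z_1\Rightarrow w_1=1$, forcing $w_1=1$, i.e.\ $v(\varphi)\in\textrm{D}$. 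So the whole burden of the argument is the validity of the axioms.

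For the axioms I would exploit the coordinate structure recorded in Remarks~\ref{obs-swap}(1), together with the fact that designation is uniformly ``first coordinate equal to $1$'' in all three Nmatrices $\mathcal{M}_4$, $\mathcal{M}_T$ and $\mathcal{M}_{\INT}$. The key structural remark is that the first coordinate of the output of a multioperator never reads the fourth coordinate of an input \emph{except} through $\tilde{\neg}$ (whose first coordinate is $z_4$); moreover, inspecting specifications~(1)--(6), the first three coordinates of the value of any $\neg$-free formula depend only on the first three coordinates of its subformulas. Consequently the map $\varphi\mapsto(v(\varphi)_1,v(\varphi)_2,v(\varphi)_3)$ is a valuation over the swap structure for \Tm\ (expanded by the $\hat{\land},\hat{\vee}$ of Remark~\ref{conj-disj-Tm}) on $\Sigma_m\cup\{\land,\lor\}$-formulas, while $\varphi\mapsto(v(\varphi)_1,v(\varphi)_4)$ is a valuation over $\mathcal{M}_4$ on $\Sigma_4$-formulas. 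Since both projections preserve the first coordinate, an axiom written purely in one of these sublanguages is designated in $\mathcal{M}_{\INT}$ as soon as it is designated in the corresponding component matrix. This disposes at once of the \pyn-axioms \axu--\axo, \axDN, \axdmu--\axdmt, \axouimp\ (valid in $\mathcal{M}_4$), of the \Tm-axioms \axu, \axd, \axn, \axK, \axKu, \axKd, \axMu, \axT, \axDNu\ (valid in $\mathcal{M}_T$), and of the $\neg$-free bridge axioms \axC, \axnD, \axnC, \axD, \axNN, \axNT, \axNo, \axNde, which by Remarks~\ref{obs-defisys}(1)--(5) are theorems of \Tm\ and hence valid in its swap structure semantics.

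The only axioms genuinely coupling the modal coordinates with the \pyn-negation coordinate are \axNDN\ ($\square\varphi\sse\square\sneg\neg\varphi$) and \axNTN\ ($\square\sneg\varphi\sse\square\sneg\neg\neg\varphi$), which I would verify directly from specifications~(4)--(6). Writing $z=v(\varphi)$ and using that $\tilde{\neg}z=(z_4,z_3,z_2,z_1)$ is deterministic and involutive, one computes that $v(\sneg\neg\varphi)$ has second coordinate $z_2$, so $v(\square\varphi)$ and $v(\square\sneg\neg\varphi)$ share first coordinate $z_2$; likewise $v(\neg\neg\varphi)=z$, whence $v(\square\sneg\varphi)$ and $v(\square\sneg\neg\neg\varphi)$ share first coordinate $z_3$. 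Thus in each case both implications constituting the $\sse$ have equal antecedent and consequent first coordinates, so that first coordinate is $1$, and so is that of their conjunction; hence both axioms are designated. Combining this with (i) for the remaining schemas and with the soundness of \MP\ completes the induction.

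I expect the main obstacle to be bookkeeping rather than conceptual: one must carefully confirm that the two projection maps of Remarks~\ref{obs-swap}(1) really are valuations into $\mathcal{M}_4$ and into the $\land,\lor$-expansion of $\mathcal{M}_T$ (i.e.\ that the relevant coordinates are closed and deterministic enough for the homomorphism condition to survive projection), and that the $\neg$-free bridge axioms are indeed \Tm-theorems as asserted. Once these reductions are in place, the only honest computation left is the short coordinate chase for \axNDN\ and \axNTN.
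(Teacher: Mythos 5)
Your proof is correct, and while it shares the inevitable skeleton with the paper's proof (induction on derivations; \MP\ is sound because the first coordinate of $\tilde{\to}$ is deterministically $z_1 \Rightarrow w_1$; every axiom instance has first coordinate $1$), your treatment of the axioms takes a genuinely more structural route. The paper verifies the axioms directly by coordinate computations: classical axioms because the first coordinate of each connective is the corresponding Boolean operation, the Nelson-negation axioms by an (unspelled) adaptation of twist-structure soundness for \nel, and the modal axioms by explicit coordinate chases for \axK, \axKu, \axT\ and \axNTN, with the remaining cases dismissed ``by a similar reasoning.'' You instead isolate two projection lemmas --- coordinates $(1,2,3)$ yield a valuation into the $\hat{\land},\hat{\vee}$-expansion of $\mathcal{M}_T$ on $\neg$-free formulas, and coordinates $(1,4)$ yield a valuation into $\mathcal{M}_4$ on $\Sigma_4$-formulas --- so that all pure axioms reduce to validity in the component semantics, the $\neg$-free bridge axioms \axC--\axD\ and \axNN--\axNde\ reduce to \Tm-theoremhood via Remarks~\ref{obs-defisys}, and only \axNDN\ and \axNTN\ require a hand computation (which you carry out correctly, matching the paper's own check of \axNTN). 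What your approach buys is modularity and a precise replacement for the paper's ``similar reasoning''; moreover your projections are exactly the reverses of the embedding maps $h$ and $g$ used in the paper's conservativity proofs (Lemmas~\ref{lemma-M4} and~\ref{lemma-Tm}), so the same machinery serves twice. What it costs is extra dependencies: you lean on soundness of \Tm\ w.r.t.\ $\mathcal{M}_T$ (known from the cited literature, not reproved here), on the derivability claims of Remarks~\ref{obs-defisys}(1)--(5), and on the bookkeeping you flag yourself --- in particular, since axiom schemas are instantiated by arbitrary $\Sigma$-formulas possibly containing $\neg$, you must treat the instantiating formulas as atoms, check that the projected partial assignment on the parse tree extends to a full valuation of the component Nmatrix, and use Remark~\ref{conj-disj-Tm} together with the determinism and involutivity of $\hat{\sneg}$ to transfer \Tm-theorems stated with defined $\land,\vee$ to their primitive-connective versions; all of these checks go through, so this is honest overhead rather than a gap.
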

\begin{proof} Suppose that $\Gamma \vdash_{\INT} \varphi$.
It is enough to prove that, if $\delta$ is an instance of an axiom of \INT\ and $v$ is a valuation over $\mathcal{M}_{\INT}$ then $v(\delta)_1=1$, where $v(\delta)_1$ denotes the first coordinate of $v(\delta)$. Indeed, by proving this, and by the fact that $v(\delta) \in  \textrm{D}$ and  $v(\delta \to \psi) \in  \textrm{D}$ implies that $v(\psi) \in  \textrm{D}$, the result follows  by induction on the length of any derivation of $\varphi$ from $\Gamma$ in the Hilbert calculus for \INT. Thus, let  $\delta$ be an instance of an axiom of \INT\ and let $v$ be a valuation over $\mathcal{M}_{\INT}$. If $\delta$ is an instance of one of the axioms \axu-\axn\ or \axouimp\ of \cpl\ then clearly $v(\delta)_1=1$, by definition of the multioperations in $\mathcal{M}_{\INT}$: it is enough to observe that, in $v(\delta)_1$, every connective of $\delta$ in $\{\land, \vee,\to, \sneg\}$ is interpreted as the corresponding Boolean operator in {\bf 2}. For instance, $v((\gamma \land \psi) \to \psi)_1=(v(\gamma)_1 \sqcap v(\psi)_1) \Rightarrow v(\psi)_1$, which is always $1$ in {\bf 2} for any value  $v(\gamma)$ and $v(\psi)$. Concerning axioms \axDN, \axdmu, \axdmd\ and \axdmt\ inherited from \nel, they are also valid because of the definition of the multioperators in $\mathcal{M}_{\INT}$: it follows by an easy adaptation of the fact that the Hilbert calculus of \nel\ is sound w.r.t. twist structures semantics over implicative lattices (see, for instance, \cite[Theorem~2.2]{odin:05}). Now, suppose that $\delta$ is a modal axiom (that is, an axiom for $\square$). It can be proven that $v(\delta)_1=1$ because of the restrictions on the snapshots in $\textsc{B}_{T}$, by the definition of the multioperators in  $\mathcal{M}_{\INT}$, and by the following facts: $v(\psi \# \gamma)_1=v(\psi)_1 \# v(\gamma)_1$ ($\# \in \{\land, \vee, \to\}$) and $v(\sneg\psi)_1=\sneg v(\psi)_1$ (where, by abuse of notation, the symbols used for the classical connectives and for  the Boolean operators interpreting them are the same); $v(\square\psi)_1=v(\psi)_2$; and $v(\square\sneg\psi)=v(\psi)_3$. For instance, let $\delta=\square(\psi \to \gamma) \to (\square \psi \to \square\gamma)$ be an instance of axiom \axK. Then, since $v(\psi \to \gamma) \in v(\psi) \tilde{\to} v(\gamma)$, $v(\square(\psi \to \gamma))_1=v(\psi \to \gamma)_2 \leq v(\psi)_2 \Rightarrow v(\gamma)_2=v(\square\psi)_1\Rightarrow v(\square\gamma)_1=v(\square \psi \to \square\gamma)_1$, hence $v(\delta)_1=1$. By analogous reasoning, $v(\square(\psi \to \gamma))_1 \leq v(\square \sneg\psi \to \square \sneg \varphi)_1$, which proves that every instance of \axKu\ is valid. In turn, $v(\square\psi)_1=v(\psi)_2 \leq v(\psi)_1$, by the fact that $v(\psi) \in \textsc{B}_{T}$, proving that every instance of \axT\ is valid. Concerning \axNTN, observe that $v(\square\sneg\neg\neg\psi)_1=v(\neg\neg\psi)_3=v(\neg\psi)_2=v(\psi)_3=v(\square\sneg\psi)_1$. By a similar reasoning it can be proven that every modal axiom of \INT\ is valid in $\mathcal{M}_{\INT}$. This completes the proof.
\end{proof}

\noindent
In order to prove completeness, some  standard  notions and results from (Tarskian) abstract logic  need to be used.
Recall that, given a Tarskian and finitary logic {\bf L}\footnote{A logic {\bf L} with consequence relation $\vdash_{\bf L}$ is {\em Tarskian} if it satisfies the following:~(1)~if $\alpha \in \Gamma$ then $\Gamma \vdash_{\bf L} \alpha$; (2)~if $\Gamma \vdash_{\bf L} \alpha$ and $\Gamma \subseteq \Delta$ then $\Delta \vdash_{\bf L} \alpha$; and~(3)~if $\Delta \vdash_{\bf L} \alpha$ and $\Gamma \vdash_{\bf L} \beta$ for every $\beta \in \Delta$ then $\Gamma \vdash_{\bf L} \alpha$. A Tarskian logic {\bf L} is  {\em finitary} if it satisfies: (4)~if $\Gamma \vdash_{\bf L} \alpha$ then there exists  a finite subset $\Gamma_0$ of $\Gamma$ such that $\Gamma_0 \vdash_{\bf L} \alpha$.} and a set of formulas $\Delta \cup \{ \varphi\}$ of {\bf L}, $\Delta$ is {\em $\varphi$-saturated in} {\bf L} if:~(1)~$\Delta \nvdash_{\bf L} \varphi$; and~(2)~if $\psi \notin \Delta$ then $\Delta,\psi \vdash_{\bf L}\varphi$.

\begin{remarks} \ \\\label{saturated-rem}
(1) It is easy to prove that, if $\Delta$ is $\varphi$-saturated in a  Tarskian logic {\bf L} then it is a closed theory in {\bf L}, that is: $\psi \in \Delta$ iff $\Delta \vdash_{\bf L} \psi$.\\
(2) By a well-known result due to Lindenbaum and \L o\'s,  if $\Gamma \cup \{ \varphi\}$ is a set of formulas of a Tarskian and finitary logic {\bf L} such that $\Gamma \nvdash_{\bf L} \varphi$, there exists a $\varphi$-saturated set $\Delta$  such that $\Gamma \subseteq \Delta$. A proof of this fact can be found, for instance, in~\cite[Theorem~22.2]{woj:84} or in~\cite[Theorem~2.2.6]{car:con:16}. Clearly, the logic associated to the Hilbert calculus of \INT\ is Tarskian and finitary, hence the result by Lindenbaum and \L o\'s mentioned above hold for it.
\end{remarks}

\begin{proposition} \label{equiv-compo} In the Hilbert calculus for \INT\  the following holds: \\ 
(1) $\theta_{T_0}(\varphi)\defi\square \varphi$ is equivalent to $\varphi \wedge  \square \varphi \wedge \sneg\square\sneg \varphi \land \sneg \neg \varphi$;\\
(2) $\theta_{t_1}(\varphi)\defi \varphi \land \neg\varphi$ is equivalent to $\varphi   \wedge \sneg \square \varphi \wedge  \sneg\square\sneg \varphi \land \neg \varphi$;\\
(3) $\theta_{t_0}(\varphi)\defi \varphi \land \sneg\neg \varphi \land \sneg\square \varphi$ is equivalent to $\varphi  \wedge \sneg \square \varphi \wedge  \sneg\square\sneg \varphi \land \sneg\neg \varphi$;\\
(4) $\theta_{f_1}(\varphi)\defi \sneg \varphi \land \neg \varphi \land \sneg\square \sneg \varphi$ is equivalent to $\sneg \varphi  \wedge \sneg\square \varphi \wedge \sneg\square\sneg \varphi \land \neg \varphi$;\\
(5) $\theta_{f_0}(\varphi)\defi \sneg \varphi \land \sneg\neg \varphi$ is equivalent to $\sneg \varphi  \wedge \sneg\square \varphi \wedge \sneg\square\sneg \varphi \land \sneg\neg \varphi$;\\
(6) $\theta_{F_1}(\varphi)\defi \square \sneg \varphi$ is equivalent to $\sneg \varphi \wedge \sneg\square \varphi \wedge \square\sneg \varphi \land \neg \varphi$.
\end{proposition}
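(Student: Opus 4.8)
The plan is to prove each of the six biconditionals by splitting it into its two implications and reducing everything to classical reasoning in the fragment $\{\to,\sneg,\land\}$ together with a short list of modal bridges. Since \INT\ conservatively contains \cpl\ on the classical connectives, I may freely use the deduction metatheorem, conjunction introduction and elimination, and classical contraposition (in particular $\sneg\sneg\alpha\sse\alpha$).

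First I would assemble the bridges that will supply the ``missing'' conjuncts. Axiom \axT\ gives $\square\varphi\to\varphi$ and, instantiating $\varphi$ by $\sneg\varphi$, $\square\sneg\varphi\to\sneg\varphi$; contraposing these yields $\sneg\varphi\to\sneg\square\varphi$ and $\varphi\to\sneg\square\sneg\varphi$. Feeding both \axT-instances into classical explosion shows that $\square\varphi$ and $\square\sneg\varphi$ are incompatible, i.e. $\sneg(\square\varphi\land\square\sneg\varphi)$ is a theorem, whence $\square\varphi\to\sneg\square\sneg\varphi$ and $\square\sneg\varphi\to\sneg\square\varphi$. Finally, Proposition~\ref{propbaslog}(1) and~(5) give $\square\varphi\to\sneg\neg\varphi$ and $\square\sneg\varphi\to\neg\varphi$, whose contrapositives are $\neg\varphi\to\sneg\square\varphi$ and $\sneg\neg\varphi\to\sneg\square\sneg\varphi$.

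With these in hand each equivalence is routine bookkeeping. In every case the backward direction (long form $\to$ short form) holds because the conjuncts of the short form already occur among those of the long form, so a single conjunction elimination closes it. For the forward direction (short form $\to$ long form) I would extract the short-form conjuncts and then derive the remaining long-form conjuncts from the bridges: e.g. for $T_0$, from $\square\varphi$ I get $\varphi$ by \axT, $\sneg\square\sneg\varphi$ by incompatibility, and $\sneg\neg\varphi$ by Proposition~\ref{propbaslog}(1); for $f_0$, from $\sneg\varphi\land\sneg\neg\varphi$ I get $\sneg\square\varphi$ by the contrapositive of \axT\ and $\sneg\square\sneg\varphi$ by the contrapositive of Proposition~\ref{propbaslog}(5). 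The cases $t_1$, $t_0$, $f_1$ and $F_1$ are entirely analogous, each needing one or two of the same bridges, after which a chain of conjunction introductions assembles the long form.

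The work is genuinely routine; the only step that is not a single axiom or a direct item of Proposition~\ref{propbaslog} is the incompatibility $\sneg(\square\varphi\land\square\sneg\varphi)$, which must be obtained by combining the two instances of \axT\ with classical explosion rather than read off directly. Everything else reduces to conjunction manipulation and contraposition inside the classical fragment, so I expect no real difficulty beyond keeping track of which bridge supplies which coordinate.
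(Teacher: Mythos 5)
Your proposal is correct and follows essentially the same route as the paper, which proves item~(1) from \axT\ and Proposition~\ref{propbaslog}(1) by classical reasoning and leaves the remaining items to the reader as ``similar''; your collection of bridges (the two \axT-instances, their contrapositives, the $\square\varphi$/$\square\sneg\varphi$ incompatibility, and Proposition~\ref{propbaslog}(1),(5) with contrapositives) is exactly what those remaining cases need, and each case checks out. The only cosmetic difference is that you route $\square\varphi \to \sneg\square\sneg\varphi$ through an explicit incompatibility lemma, where the paper gets it directly by applying \axT\ a second time together with contraposition.
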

\begin{proof}
(1) By \axT\ and the properties of classical logic, $\square \varphi$ is equivalent to $\varphi \wedge  \square \varphi$. By \axT\ again and by classical logic, the latter is equivalent to $\varphi \wedge  \square \varphi \land \sneg\square\sneg \varphi$.  By Proposition~\ref{propbaslog}(1), $\square \varphi$ is equivalent to $\square \varphi \land \sneg\neg\varphi$. Putting all together, and by classical logic, $\square \varphi$ is equivalent to  $\varphi \wedge  \square \varphi \wedge \sneg\square\sneg \varphi \land \sneg \neg \varphi$. The proof of the other items is similar, and is left to the reader.
\end{proof}

\noindent Semantically, each formula $\theta_a(p)$  characterizes the truth-value $a$ in the following sense:
 
\begin{proposition} \label{ident-snap}
Let $a \in \{T_0,t_1,t_0, f_1, f_0, F_1\}$. Then, the formulas $\theta_a(p)$ satisfy the following property: for every formula $\varphi$ and for every valuation $v$ over the Nmatrix $\mathcal{M}_{\INT}$, $v(\theta_a(\varphi)) \in \textrm{D}$ iff $v(\varphi)=a$.
\end{proposition}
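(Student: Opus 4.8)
The plan is to reduce the statement to a purely coordinate-wise computation of first coordinates. Fix a valuation $v$ over $\mathcal{M}_{\INT}$. Since $v$ takes values in $\textsc{B}_{T}$, the snapshot $v(\varphi)$ is one of the six elements $T_0,t_0,t_1,f_0,f_1,F_1$, and membership of $v(\theta_a(\varphi))$ in $\textrm{D}$ depends only on whether its first coordinate equals $1$. The crucial point — already exploited in the soundness proof (Theorem~\ref{sound-INT}) — is that the first coordinate of every multioperator of $\mathcal{M}_{\INT}$ is \emph{deterministic}: it does not depend on the non-deterministic choices made in the other coordinates. Hence $v(\theta_a(\varphi))_1$ is a fixed Boolean term in the four coordinates of $v(\varphi)$, so the equivalence $v(\theta_a(\varphi))\in\textrm{D}\iff v(\varphi)=a$ can be checked by a finite inspection.

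Concretely, I would first record the first-coordinate projection identities, writing $v(\varphi)=(z_1,z_2,z_3,z_4)$. For $\#\in\{\land,\lor,\to\}$ one has $v(\psi\,\#\,\gamma)_1=v(\psi)_1\,\#\,v(\gamma)_1$; moreover $v(\sneg\psi)_1=\sneg v(\psi)_1$, $v(\square\psi)_1=v(\psi)_2$, $v(\neg\psi)_1=v(\psi)_4$, and (since $\tilde{\sneg}$ sends the third coordinate of $z$ to the second of its output) $v(\square\sneg\psi)_1=v(\sneg\psi)_2=v(\psi)_3$. Applying these to the definitions of the $\theta_a$ given in Proposition~\ref{equiv-compo} yields, directly, the six Boolean expressions
$v(\theta_{T_0}(\varphi))_1=z_2$, $v(\theta_{F_1}(\varphi))_1=z_3$, $v(\theta_{t_1}(\varphi))_1=z_1\sqcap z_4$, $v(\theta_{t_0}(\varphi))_1=z_1\sqcap\sneg z_2\sqcap\sneg z_4$, $v(\theta_{f_1}(\varphi))_1=\sneg z_1\sqcap\sneg z_3\sqcap z_4$, and $v(\theta_{f_0}(\varphi))_1=\sneg z_1\sqcap\sneg z_4$.

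It then remains to verify, for each $a$, that the corresponding expression equals $1$ exactly on the snapshot $a$. This is where I would invoke both the explicit list $\textsc{B}_{T}=\{T_0,t_0,t_1,f_0,f_1,F_1\}$ and the defining restrictions $z_2\le z_1$, $z_1\sqcap z_3=0$, $z_2\sqcap z_4=0$, $z_3\le z_4$. For instance, $z_2=1$ forces (by $z_2\le z_1$ and the enumeration) $v(\varphi)=T_0$; $z_3=1$ forces (by $z_1\sqcap z_3=0$) $v(\varphi)=F_1$; $z_1\sqcap z_4=1$ singles out $t_1$ among $\{T_0,t_0,t_1\}$; and so on for the remaining three, each conjunction of coordinate constraints isolating precisely one of the six tuples. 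The only real work, and hence the main (though routine) obstacle, is this bookkeeping: confirming that in every case the coordinate conditions pick out a \emph{unique} snapshot, which is guaranteed by the four inequalities defining $\textsc{B}_{T}$ together with Proposition~\ref{propbaslog} (items (1) and (5)), that underlie the restrictions $z_2\sqcap z_4=0$ and $z_3\le z_4$. There is no conceptual difficulty beyond this finite verification.
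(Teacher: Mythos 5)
Your proof is correct and takes essentially the same route as the paper's: the paper's ``immediate from the truth-table'' argument is precisely your finite first-coordinate verification (resting on the determinism of the first coordinates and on $v(\psi \land \gamma)\in\textrm{D}$ iff both conjuncts are in $\textrm{D}$), merely tabulated value-by-value instead of written as Boolean terms in $z_1,\ldots,z_4$. One minor remark: your appeal to Proposition~\ref{propbaslog} is superfluous, since the restrictions $z_2\sqcap z_4=0$ and $z_3\leq z_4$ are built directly into the definition of $\textsc{B}_{T}$ in Definition~\ref{defNmatINT}, so the uniqueness check needs no proof-theoretic input, only the enumeration of the six snapshots.
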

\begin{proof}
Immediate, from the truth-table below and by the fact that $v(\varphi \land \psi) \in \textrm{D}$ iff  $v(\varphi) \in \textrm{D}$ and $v(\psi) \in \textrm{D}$ for every valuation $v$ over  $\mathcal{M}_{\INT}$.

{\scriptsize
\begin{center}
\begin{tabular}{|c|c|c|c|c|c|c|c|} \hline
$p$  & $\neg p$  & $\sneg p$  & $\sneg\neg p$  & $\square p$ & $\sneg \square p$ & $\square \sneg p$ & $\sneg \square \sneg p$\\[1mm]
 \hline 
    $T_0$   & $F_1$  & $F_1$ & $T_0$ & \textrm{D}  & \textrm{ND} & \textrm{ND} & \textrm{D}\\[1mm] \hline
     $t_0$  & $f_1$ & $f_0, f_1$ & $t_0, t_1$ &   \textrm{ND} & \textrm{D} & \textrm{ND} & \textrm{D} \\[1mm] \hline
     $t_1$  & $t_1$  & $f_0, f_1$ & $f_0, f_1$   & \textrm{ND} & \textrm{D}  & \textrm{ND} & \textrm{D}  \\[1mm] \hline
     $f_0$  & $f_0$ & $t_0, t_1$ &  $t_0, t_1$  & \textrm{ND} & \textrm{D}  & \textrm{ND}  & \textrm{D}\\[1mm] \hline
     $f_1$  & $t_0$ & $t_0, t_1$ & $f_0, f_1$ & \textrm{ND} & \textrm{D}  & \textrm{ND} & \textrm{D}\\[1mm] \hline
     $F_1$  & $T_0$  & $T_0$ & $F_1$  & \textrm{ND} & \textrm{D}  & \textrm{D} & \textrm{ND}\\[1mm] \hline
\end{tabular}
\end{center}}
\end{proof}

\begin{proposition}  \label{Delta-satT}
Let $\Delta$ be a $\delta$-saturated set in \INT. Then, it satisfies the following properties:\\
(1) It is a closed theory in \INT, that is: $\psi \in \Delta$ iff $\Delta \vdash_{\INT} \psi$.\\
(2) $\sneg\psi \in \Delta$ iff $\psi \notin \Delta$.\\
(3) $\neg\neg\psi \in \Delta$ iff $\psi \in \Delta$.\\
(4) $\psi \land \gamma \in \Delta$ iff $\psi \in \Delta$ and  $\gamma \in \Delta$.\\
(5) $\psi \lor \gamma \in \Delta$ iff either $\psi \in \Delta$ or $\gamma \in \Delta$.\\
(6) $\psi \to \gamma \in \Delta$ iff either $\psi \notin \Delta$ or $\gamma \in \Delta$.\\
(7) $\neg(\psi \lor \gamma) \in \Delta$ iff $\neg\psi \in \Delta$ and  $\neg\gamma \in \Delta$.\\
(8) $\neg(\psi \land \gamma) \in \Delta$ iff either $\neg\psi \in \Delta$ or $\neg\gamma \in \Delta$.\\
(9) $\neg(\psi \to \gamma) \in \Delta$ iff $\psi \in \Delta$ and $\neg\gamma \in \Delta$.\\
(10) If	$\square \varphi \in \Delta$ then $\varphi \in \Delta$.\\
(11) $\square \varphi \in \Delta$ iff $\square \sneg \sneg \varphi \in \Delta$.\\
(12) $\square \sneg \neg \varphi \in \Delta$ iff $\square \varphi \in \Delta$.\\
(13) $\square \sneg \neg\neg \varphi \in \Delta$ iff $\square \sneg\varphi \in \Delta$.\\
(14) $\square(\varphi \land \psi) \in \Delta$ iff $\square\varphi \in \Delta$ and $\square\psi \in \Delta$. \\
(15) If $\square\varphi \in \Delta$ or $\square\psi \in \Delta$ then  $\square(\varphi \lor \psi) \in \Delta$. \\
(16) If $\square (\varphi \to \psi) \in \Delta$ then  $\square \varphi \to \square \psi \in \Delta$ and  $\square \sneg\psi \to \square \sneg \varphi \in \Delta$.\\
(17) $\square \sneg(\varphi \vee \psi) \in \Delta$ iff  $\square\sneg\varphi \in \Delta$ and $\square\sneg\psi \in \Delta$. \\
(18) If $\square\sneg\varphi \in \Delta$ or $\square\psi \in \Delta$ then $\square(\varphi \to \psi) \in \Delta$. \\
(19) If $\square\sneg\varphi \in \Delta$ or $\square\sneg\psi \in \Delta$ then $\square \sneg(\varphi \land \psi) \in \Delta$. \\
(20) $\square \sneg(\varphi \to \psi) \in \Delta$ iff $\square \varphi \in \Delta$ and $\square \sneg \psi \in \Delta$.\\
(21) For every $\psi$, exactly one of the formulas $\theta_a(\psi)$ belongs to $\Delta$, for $a \in \{T_0,t_1,t_0, f_1, f_0, F_1\}$.
\end{proposition}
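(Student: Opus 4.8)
The plan is to exploit the two defining clauses of a $\delta$-saturated set together with the closed-theory property, handling the twenty-one items in three blocks: the propositional items (1)--(9), the modal items (10)--(20), and the synthesizing item (21). Item (1) is immediate from Remarks~\ref{saturated-rem}(1). For the propositional block I would first record that, since \INT\ contains \cpl\ on $\{\to,\sneg\}$ and positive classical logic \cplp\ on $\{\land,\lor,\to\}$, the whole $\{\land,\lor,\to,\sneg\}$-fragment behaves classically; in particular $\varphi\lor\gamma$ is provably equivalent to $\sneg\varphi\to\gamma$, $\varphi\land\gamma$ to $\sneg(\varphi\to\sneg\gamma)$, and $\psi\lor\sneg\psi$ is a theorem. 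Given this, items (2), (4), (6) follow by closure and classical reasoning, the only genuinely non-closure ingredient being the disjunction property (5): if $\psi\lor\gamma\in\Delta$ but $\psi,\gamma\notin\Delta$, then the saturation clause gives $\Delta,\psi\vdash_{\INT}\delta$ and $\Delta,\gamma\vdash_{\INT}\delta$, whence by the proof-by-cases axiom \axo\ one obtains $\Delta\vdash_{\INT}\delta$, contradicting saturation. The backward directions of (2) and (6) then reduce to (5) via the classical theorems $\psi\lor\sneg\psi$ and $\sneg\psi\to(\psi\to\gamma)$. Items (3), (7), (8), (9) follow by closure directly from the \pyn-axioms \axDN, \axdmu, \axdmd\ and \axdmt.

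For the modal block (10)--(20) I would argue uniformly by closure: each item asserts membership of a formula obtained by \MP\ from a modal axiom, or from a theorem of Proposition~\ref{propbaslog}, applied to hypotheses lying in $\Delta$. Thus (10) uses \axT, (11) uses \axDNu, (12) and (13) use \axNDN\ and \axNTN, (14) uses \axC, (16) uses \axK\ and \axKu, (17) uses \axnD\ and \axC, and (20) uses \axKd. For the items with a disjunctive hypothesis, namely (15), (18) and (19), I would first pass from $\square\varphi\in\Delta$ (or $\square\psi\in\Delta$) to $\square\varphi\vee\square\psi\in\Delta$ by \axs/\axst\ and closure, and then apply \axD, \axMu\ and \axnC\ respectively.

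The heart of the proposition is item (21). The plan is to associate to an arbitrary $\psi$ the tuple $z=(z_1,z_2,z_3,z_4)\in{\bf 2}^4$ given by $z_1=1$ iff $\psi\in\Delta$, $z_2=1$ iff $\square\psi\in\Delta$, $z_3=1$ iff $\square\sneg\psi\in\Delta$, and $z_4=1$ iff $\neg\psi\in\Delta$, and to verify that $z$ always lies in $\textsc{B}_{T}$. Concretely, $z_2\leq z_1$ is item (10); $z_1\sqcap z_3=0$ holds because \axT\ yields $\square\sneg\psi\to\sneg\psi$, so $\square\sneg\psi\in\Delta$ forces $\psi\notin\Delta$ by item (2); $z_2\sqcap z_4=0$ comes from Proposition~\ref{propbaslog}(1) together with item (2); and $z_3\leq z_4$ comes from Proposition~\ref{propbaslog}(5). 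Hence $z$ equals exactly one of the six snapshots $T_0,t_1,t_0,f_1,f_0,F_1$. Finally, by Proposition~\ref{equiv-compo} each $\theta_a(\psi)$ is provably equivalent to the conjunction of coordinate-formulas encoding the pattern $z=a$; since $\Delta$ is closed and closed under conjunction (item (4)), and $\sneg X\in\Delta$ iff $X\notin\Delta$ (item (2)), we obtain $\theta_a(\psi)\in\Delta$ iff $z=a$, so exactly one $\theta_a(\psi)$ lies in $\Delta$.

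I expect the main obstacle to be precisely this coherence check inside item (21): showing that the induced tuple $z$ never violates any of the four constraints defining $\textsc{B}_{T}$. This is exactly where the bridge axioms of Definition~\ref{IvNelLHil} and their consequences in Proposition~\ref{propbaslog} are indispensable; without \axNDN--\axNde\ and the derived items (1) and (5) of Proposition~\ref{propbaslog}, the tuple $z$ could fall outside $\textsc{B}_{T}$ and the clean six-way partition would fail. Once the coherence is secured, the matching with the formulas $\theta_a$ through Proposition~\ref{equiv-compo} is routine bookkeeping.
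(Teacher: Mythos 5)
Your proof is correct, and for items (1)--(20) it follows the same route as the paper, which compresses these items into a single sentence (``by the axioms and rules of the Hilbert calculus for \INT\ and by Proposition~\ref{propbaslog}''); your expansion via \axo, the deduction metatheorem, the derived theorem $\psi \lor \sneg\psi$, and the pairing of disjunctive hypotheses with \axD, \axMu\ and \axnC\ is exactly the intended bookkeeping. Where you genuinely diverge is item (21). The paper runs a nested case analysis on membership of $\square\psi$, $\psi$, $\neg\psi$ and $\square\sneg\psi$ in $\Delta$, in each of the six branches exhibiting the \emph{short} form of one $\theta_a(\psi)$ directly inside $\Delta$ (e.g.\ $\theta_{T_0}(\psi)=\square\psi$ in the case $\square\psi\in\Delta$, $\theta_{t_1}(\psi)=\psi\land\neg\psi$ via item (4), etc.); uniqueness is then a separate remark: by Proposition~\ref{equiv-compo} two distinct long forms contain $\sneg$-complementary conjuncts, which a saturated set cannot contain. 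You instead build the characteristic tuple $z\in{\bf 2}^4$ of $\psi$, verify the four defining constraints of $\textsc{B}_{T}$ explicitly ($z_2\leq z_1$ by \axT; $z_1\sqcap z_3=0$ by \axT\ and item (2); $z_2\sqcap z_4=0$ by Proposition~\ref{propbaslog}(1); $z_3\leq z_4$ by Proposition~\ref{propbaslog}(5)), and then obtain existence and uniqueness in one stroke from the long-form characterization $\theta_a(\psi)\in\Delta$ iff $z=a$. The two arguments use the same ingredients---Propositions~\ref{equiv-compo} and~\ref{propbaslog} together with items (1)--(4)---but package them differently: the paper's case tree is shorter because the constraint-checking is absorbed into the proof of Proposition~\ref{equiv-compo}, whereas your version makes visible at the level of (21) itself why no tuple outside $\textsc{B}_{T}$ can arise (exactly the role of the bridge axioms \axNDN--\axNde, as you correctly identify), and it has the incidental advantage that your $z$ is precisely the value $v_\Delta(\psi)$ later defined in the Truth Lemma (Proposition~\ref{TruthL}), so the completeness step becomes a pure restatement of your construction.
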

\begin{proof}
Item (1) holds by Remarks~\ref{saturated-rem}(1). Items~(2)-(20) follow by the axioms and rules of the Hilbert calculus for \INT\ (recall Definition~\ref{IvNelLHil}) and by Proposition~\ref{propbaslog}. With respect to item~(21), let $\psi$ be a formula.\\
{\bf Case 1:}  $\square\psi \in \Delta$. This means that, $\theta_{T_0}(\psi) \in \Delta$.\\
{\bf Case 2:} $\square\psi \notin \Delta$.\\
{\bf Case 2.1} $\psi \in \Delta$.\\
{\bf Case 2.1.1:}  $\neg\psi \in \Delta$. By item~(4), $\psi \land \neg\psi \in \Delta$ and so $\theta_{t_1}(\psi) \in \Delta$.\\
{\bf Case 2.1.2}  $\neg\psi \notin \Delta$. By items~(2) and~(4), $\theta_{t_0}(\psi) \in \Delta$.\\
{\bf Case 2.2} $\psi \notin \Delta$.\\
{\bf Case 2.2.1:}  $\square\sneg\psi \in \Delta$. Hence, $\theta_{F_1}(\psi) \in \Delta$.\\
{\bf Case 2.2.2:}  $\square\sneg\psi \notin \Delta$. By item~(2) it follows that $\sneg\square\sneg\psi \in \Delta$ and $\sneg\psi \in \Delta$, since $\psi \notin \Delta$.\\
{\bf Case 2.2.2.1:}  $\neg\psi \in \Delta$. By items~(2) and~(4), $\theta_{f_1}(\psi) \in \Delta$.\\
{\bf Case 2.2.2.2:}  $\neg\psi \notin \Delta$.  By items~(2) and~(4), $\theta_{f_0}(\psi) \in \Delta$.

This shows that, for every $\psi$, $\theta_a(\psi) \in \Delta$ for some $a \in \{T_0,t_1,t_0, f_1, f_0, F_1\}$. By Proposition~\ref{equiv-compo},  if $\theta_a(\psi) \in \Delta$ and  $\theta_b(\psi) \in \Delta$ then $a=b$, given that  $\Delta$ does not contain any contradiction w.r.t. the classical negation $\sneg$.
This completes the proof.
\end{proof}

Before stating the Truth lemma, which is fundamental for the proof of completeness, it will be convenient to stipulate some notation.

\begin{notation} \label{notat}
(1) for $x \in \{0,1\}$ and a formula $\varphi$ the expression $x\varphi$ will denote the formula $\varphi$, if $x=1$, and $\sneg\varphi$, if $x=0$.\\
(2) Let $\gamma_1(p) \defi p$; $\gamma_2(p) \defi \square p$; $\gamma_3(p) \defi \square\sneg p$; and $\gamma_4(p) \defi \neg p$. \\
(3) Let $a=(a_1,a_2,a_3,a_4) \in \textsc{B}_{T}$. Then, $\theta_a(\varphi)$ will be denoted by $\bigwedge_{i=1}^4 a_i \gamma_i(\varphi)$.
\end{notation}

\begin{proposition} [Truth lemma] \label{TruthL} Let $\Delta$ be a $\delta$-saturated set in \INT. Let $v_\Delta:For(\Sigma) \to B$ be a function defined as follows: $v_\Delta(\psi)=a$ iff $\theta_a(\psi) \in \Delta$. Then, $v_\Delta$ is a well-defined function, and it is a valuation over the Nmatrix $\mathcal{M}_{\INT}$ such that, for every formula $\psi$, $v_\Delta(\psi) \in \textrm{D}$ iff $\psi \in \Delta$.
\end{proposition}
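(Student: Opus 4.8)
The plan is to first pin down the coordinate-wise meaning of the candidate valuation. By Notation~\ref{notat} we have $\theta_a(\psi)=\bigwedge_{i=1}^{4}a_i\gamma_i(\psi)$ with $\gamma_1(\psi)=\psi$, $\gamma_2(\psi)=\square\psi$, $\gamma_3(\psi)=\square\sneg\psi$ and $\gamma_4(\psi)=\neg\psi$. Using Proposition~\ref{Delta-satT}(2) and~(4), the membership $\theta_a(\psi)\in\Delta$ holds exactly when $\gamma_i(\psi)\in\Delta$ for each $i$ with $a_i=1$ and $\gamma_i(\psi)\notin\Delta$ for each $i$ with $a_i=0$. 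Well-definedness of $v_\Delta$ is then immediate from Proposition~\ref{Delta-satT}(21), which guarantees that exactly one $\theta_a(\psi)$ lies in $\Delta$; moreover each value lands in $\textsc{B}_T$. The key fact I would isolate is the \emph{coordinate characterization}: writing $z=v_\Delta(\psi)$, one has $z_i=1$ iff $\gamma_i(\psi)\in\Delta$. The final clause of the statement then drops out at once, since $z_1=1$ iff $\psi\in\Delta$, i.e. $v_\Delta(\psi)\in\textrm{D}$ iff $\psi\in\Delta$.

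The substance of the proof is showing that $v_\Delta$ respects every multioperator, that is, $v_\Delta(\#(\ldots))\in\tilde{\#}(v_\Delta(\ldots))$ for each $\#\in\Sigma$. I would proceed connective by connective, fixing $z=v_\Delta(\psi)$ (and $w=v_\Delta(\gamma)$ for binary $\#$) and checking each coordinate of $v_\Delta(\#(\ldots))$ against the formal specifications~(1)--(6); by the coordinate characterization each check becomes a membership assertion about $\Delta$. The deterministic coordinates are routine. The first coordinates of $\tilde{\land},\tilde{\lor},\tilde{\to}$ reduce to Proposition~\ref{Delta-satT}(4),(5),(6) and that of $\tilde{\sneg}$ to~(2); the ``fixed'' inner coordinates reduce to the \textit{iff}-items~(14),(17),(20) and to~(7),(8),(9); the entire $\tilde{\neg}$-column $(z_4,z_3,z_2,z_1)$ follows from the coordinate characterization together with Proposition~\ref{propbaslog}(4) and Proposition~\ref{Delta-satT}(12),(3), while the third coordinate of $\tilde{\sneg}$ uses~(11). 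For $\tilde{\square}$ only the first coordinate is constrained, namely $z_2$, and the remaining free coordinates cause no trouble because $v_\Delta(\square\psi)$ is automatically a legal snapshot in $\textsc{B}_T$.

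The delicate part, which I expect to be the main obstacle, is the non-deterministic coordinates, i.e. those displayed with bounds $\;\text{lower}\leq\_\leq\text{upper}\;$. Each such coordinate requires two separate membership arguments, and supplying precisely these is the whole role of the bridge axioms, so the crux is matching each bound to its axiom. For the second coordinate of $\tilde{\to}$ the lower bound $z_3\sqcup w_2\leq\square(\psi\to\gamma)$ is Proposition~\ref{Delta-satT}(18) (axiom \axMu) and the upper bound is~(16) (axioms \axK, \axKu); for the second coordinate of $\tilde{\lor}$ the lower bound is~(15) (axiom \axD) and the upper bound is given by \axNo\ and \axNde; for the third coordinate of $\tilde{\land}$ the lower bound is~(19) (axiom \axnC) and the upper bound is \axNN\ and \axNT; finally, for the fourth coordinate of $\tilde{\sneg}$, namely $z_2\leq\_\leq\sneg z_3$, the two bounds are exactly Proposition~\ref{propbaslog}(9) and~(10). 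Once each pair of bounds is fitted to its axiom, the verification closes uniformly, and since every value $v_\Delta(\chi)$ already belongs to $\textsc{B}_T$ the resulting snapshot automatically satisfies the defining restrictions of the swap structure, which completes the proof.
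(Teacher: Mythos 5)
Your proposal is correct and follows essentially the same route as the paper's proof: well-definedness via Proposition~\ref{Delta-satT}(21), the coordinate characterization $z_i=1$ iff $\gamma_i(\psi)\in\Delta$, and then a connective-by-connective check in which you match each deterministic coordinate and each bound of the non-deterministic coordinates to exactly the same items of Proposition~\ref{Delta-satT} and Proposition~\ref{propbaslog} (and hence the same bridge axioms) that the paper uses. The only cosmetic difference is that the paper additionally derives via \axT\ the redundant conjuncts corresponding to the snapshot-validity constraints (e.g.\ $c_2\leq a_1\sqcup b_1$ for $\tilde{\lor}$), which you correctly observe are automatic since $v_\Delta$ takes values in $\textsc{B}_{T}$.
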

\begin{proof}
The function $v_\Delta$ is well-defined, because of  Proposition~\ref{Delta-satT}(21). By definition,  by using Notation~\ref{notat} and by Proposition~\ref{Delta-satT}(4),  $v_\Delta(\varphi)=a$ iff, for $1 \leq i \leq 4$ $a_i \gamma_i(\varphi) \in \Delta$. That is, by Proposition~\ref{Delta-satT}(2),
$$v_\Delta(\varphi)=a \ \mbox{ iff, for $1 \leq i \leq 4$: } \ \gamma_i(\varphi) \in \Delta \ \mbox{ iff } \ a_i=1.$$
Now, it will be proven that  $v_\Delta$ is a valuation over the Nmatrix  $\mathcal{M}_{\INT}$.\\[1mm]
(Negation $\sneg$):  Let $v_\Delta(\varphi)=a$ and $v_\Delta(\sneg \varphi)=b$. Then, by using Notation~\ref{notat}:
\begin{itemize}
\item[-] $\gamma_1(\sneg \varphi)=\sneg \varphi= \sneg \gamma_1(\varphi)$, hence $\gamma_1(\sneg \varphi) \in \Delta$ iff $\gamma_1(\varphi) \notin \Delta$. From this,  $b_1= \sneg a_1$.
\item[-] $\gamma_2(\sneg \varphi)= \square \sneg \varphi= \gamma_3(\varphi)$, hence $\gamma_2(\sneg \varphi) \in \Delta$ iff $\gamma_3(\varphi) \in \Delta$. From this, $b_2= a_3$.
\item[-] $\gamma_3(\sneg \varphi)= \square \sneg\sneg \varphi \equiv\square \varphi= \gamma_2(\varphi)$, hence $\gamma_3(\sneg \varphi) \in \Delta$ iff $\gamma_2(\varphi) \in \Delta$. From this, $b_3= a_2$.
\item[-] $\gamma_4(\sneg \varphi)= \neg \sneg \varphi$, $\gamma_2(\varphi)=\square \varphi$ and $\gamma_3(\varphi)=\square \sneg\varphi$. By Proposition~\ref{propbaslog}(9),  $\square\varphi \to \neg\sneg \varphi \in \Delta$. Hence: 
 if $\gamma_2(\varphi) \in \Delta$ then $\gamma_4(\sneg \varphi) \in \Delta$. That is, $a_2=1$ implies that $b_4=1$ and so $a_2 \leq  b_4$. Now, by Proposition~\ref{propbaslog}(10), $\neg\sneg \varphi \to \sneg\square\sneg \varphi \in \Delta$. Hence, if $\gamma_4(\sneg \varphi) \in \Delta$ then $\sneg \gamma_3(\varphi) \in \Delta$ and so $\gamma_3(\varphi) \notin \Delta$, by  Proposition~\ref{Delta-satT}(2). That is, $b_4=1$ implies that $a_3=0$ and so $\sneg a_3=1$; hence, $b_4 \leq \sneg a_3$.
\end{itemize}
This  shows that $v_\Delta(\sneg \varphi) \in \sneg v_\Delta(\varphi)$.\\[1mm]
(Negation $\neg$): Let $v_\Delta(\varphi)=a$ and $v_\Delta(\neg \varphi)=b$. Then, by reasoning as above:
\begin{itemize}
\item[-] $\gamma_1(\neg \varphi)=\neg \varphi= \gamma_4(\varphi)$, hence $b_1= a_4$.
\item[-] $\gamma_2(\neg \varphi)= \square \neg \varphi \equiv  \square \sneg \varphi= \gamma_3(\varphi)$, hence $b_2= a_3$.
\item[-] $\gamma_3(\neg \varphi)= \square \sneg\neg \varphi \equiv\square \varphi= \gamma_2(\varphi)$, hence $b_3= a_2$.
\item[-] $\gamma_4(\neg \varphi)= \neg \neg \varphi \equiv \varphi= \gamma_1(\varphi)$, hence $b_4= a_1$.
\end{itemize}
This  shows that $v_\Delta(\neg \varphi) \in \neg v_\Delta(\varphi)$.\\[1mm]
(Necessity $\square$): Let $v_\Delta(\varphi)=a$ and $v_\Delta(\square \varphi)=b$. Then:
\begin{itemize}
\item[-] $\gamma_1(\square \varphi)=\square \varphi=\gamma_2(\varphi)$, hence $b_1= a_2$.
\end{itemize}
This  shows that $v_\Delta(\square \varphi) \in \square v_\Delta(\varphi)$.\\[1mm]
(Conjunction $\land$): Let $v_\Delta(\varphi)=a$, $v_\Delta(\psi)=b$ and $v_\Delta(\varphi \land \psi)=c$. Then:
\begin{itemize}
\item[-] $\gamma_1(\varphi \land \psi)=\varphi \land \psi= \gamma_1(\varphi) \land \gamma_1(\psi)$, hence $c_1= a_1 \sqcap b_1$.
\item[-] $\gamma_2(\varphi \land \psi)= \square(\varphi \land \psi) \equiv \square\varphi \land \square \psi= \gamma_2(\varphi) \land \gamma_2(\psi)$, hence $c_2= a_2 \sqcap b_2$.
\item[-] $\gamma_3(\varphi \land \psi)= \square\sneg(\varphi \land \psi)$, $\gamma_3(\varphi)= \square\sneg\varphi$, $\gamma_3(\psi)= \square\sneg\psi$, $\gamma_2(\varphi)= \square\varphi$, $\gamma_2(\psi)= \square\psi$, $\gamma_1(\varphi)= \varphi$, and $\gamma_1(\psi)= \psi$.
Since  $\beta=(\square\sneg\varphi \lor \square\sneg \psi) \to \square\sneg (\varphi \land \psi)$ is provable in \INT, by \axnC, it follows that $\beta \in \Delta$. That is, $(\gamma_3(\varphi) \vee \gamma_3(\psi)) \to \gamma_3(\varphi \vee \psi) \in \Delta$. From this, if either  $\gamma_3(\varphi)  \in \Delta$ or $\gamma_3(\psi)  \in \Delta$ then $\gamma_3(\varphi \vee \psi) \in \Delta$. That is,  $a_3 \sqcup b_3 \leq c_3$. Now, $\square\sneg(\varphi \land \psi) \to \sneg(\varphi \land \psi) \in \Delta$, by \axT, and $\sneg(\varphi \land \psi)  \to (\varphi \to \sneg\psi) \in \Delta$, by \cplp, thus $\square\sneg(\varphi \land \psi) \to (\varphi \to  \sneg\psi) \in \Delta$, by \cplp. Hence, if $\gamma_3(\varphi \land \psi) \in \Delta$ and $\gamma_1(\varphi) \in \Delta$ then $\gamma_1(\psi) \not\in \Delta$. That is, $c_3=1$ and $a_1=1$ implies that $\sneg b_1=1$ or, equivalently, $c_3 \leq a_1 \Rightarrow \sneg b_1$.  In turn, by axioms \axNN\ and \axNT, $\square \sneg(\varphi \land \psi) \to ((\square \varphi \to \square\sneg\psi) \land (\square \psi \to \square\sneg\varphi)) \in \Delta$. By reasoning as above, this implies that $c_3 \leq (a_2 \Rightarrow b_3) \sqcap (b_2 \Rightarrow a_3)$. From this,
$$a_3 \sqcup b_3 \leq c_3 \leq (a_1 \Rightarrow \sneg b_1) \sqcap (a_2 \Rightarrow b_3) \sqcap (b_2 \Rightarrow a_3).$$
\item[-] $\gamma_4(\varphi \land \psi)= \neg (\varphi \land \psi) \equiv \neg \varphi \vee \neg \psi = \gamma_4(\varphi) \vee  \gamma_4(\psi)$, hence $c_4= a_4 \sqcup b_4$.
\end{itemize}
This  shows that $v_\Delta(\varphi \land \psi) \in v_\Delta(\varphi) \land v_\Delta(\psi)$.\\[1mm]
(Disjunction $\lor$): Let $v_\Delta(\varphi)=a$, $v_\Delta(\psi)=b$ and $v_\Delta(\varphi \lor \psi)=c$. Then:
\begin{itemize}
\item[-] $\gamma_1(\varphi \lor \psi)=\varphi \lor \psi= \gamma_1(\varphi) \lor \gamma_1(\psi)$, hence $c_1= a_1 \sqcup b_1$.
\item[-] $\gamma_2(\varphi \lor \psi)= \square(\varphi \lor \psi)$. But  $(\square \varphi \lor \square \psi) \to \square (\varphi \lor  \psi) \in \Delta$, by \axD. That is, $(\gamma_2(\varphi) \vee \gamma_2(\psi)) \to \gamma_2(\varphi \vee \psi) \in \Delta$. Hence, if either  $\gamma_2(\varphi)  \in \Delta$ or $\gamma_2(\psi)  \in \Delta$ then $\gamma_2(\varphi \vee \psi) \in \Delta$. That is,  $a_2 \sqcup b_2 \leq c_2$.  Now,  by \axT, $\square(\varphi \lor \psi) \to (\varphi \lor \psi) \in \Delta$, thus $c_2 \leq a_1 \sqcup b_1$.  In turn, by  axioms \axNo\ and \axNde, $\square (\varphi \lor \psi) \to  ((\square \sneg\varphi \to \square\psi) \land (\square \sneg\psi \to \square\varphi)) \in \Delta$. This implies that $c_2 \leq (a_3 \Rightarrow b_2) \sqcap (b_3 \Rightarrow a_2)$. Hence,
$$a_2 \sqcup b_2 \leq c_2 \leq (a_1 \sqcup b_1) \sqcap (a_3 \Rightarrow b_2) \sqcap (b_3 \Rightarrow a_2).$$
\item[-] $\gamma_3(\varphi \lor \psi)= \square\sneg(\varphi \lor \psi) \equiv  \square\sneg\varphi \land \square\sneg\psi =  \gamma_3(\varphi) \land \gamma_3(\psi)$. That is,  $c_3=a_3 \sqcap b_3$.
\item[-] $\gamma_4(\varphi \lor \psi)= \neg (\varphi \lor \psi) \equiv \neg \varphi \land \neg \psi = \gamma_4(\varphi) \land  \gamma_4(\psi)$, hence $c_4= a_4 \sqcap b_4$.
\end{itemize}
This  shows that $v_\Delta(\varphi \lor \psi) \in v_\Delta(\varphi) \lor v_\Delta(\psi)$.\\[1mm]
(Implication $\to$): Let $v_\Delta(\varphi)=a$, $v_\Delta(\psi)=b$ and $v_\Delta(\varphi \lor \psi)=c$. Then:
\begin{itemize}
\item[-] $\gamma_1(\varphi \to \psi)=\varphi \to \psi= \gamma_1(\varphi) \to \gamma_1(\psi)$, hence $c_1= a_1 \Rightarrow b_1$.
\item[-] $\gamma_2(\varphi \to \psi)= \square(\varphi \to \psi)$. But  $\square(\varphi \to \psi) \to ((\varphi \to \psi) \land (\square \varphi \to \square \psi) \wedge (\square \sneg \psi \to \square \sneg \varphi)) \in \Delta$ and $(\square\sneg\varphi \vee \square\psi) \to \square(\varphi \to \psi) \in \Delta$, since they are provable in \INT. That is, $(\gamma_3(\varphi) \vee \gamma_2(\psi)) \to \gamma_2(\varphi \to \psi) \in \Delta$, and   $\gamma_2(\varphi \to \psi) \to ((\gamma_1(\varphi) \to \gamma_1(\psi)) \land (\gamma_2(\varphi) \to \gamma_2(\psi)) \land (\gamma_3(\psi) \to \gamma_3(\varphi))) \in \Delta$. Hence, $a_3 \sqcup b_2 \leq c_2 \leq (a_1 \Rightarrow b_1) \sqcap (a_2 \Rightarrow b_2) \sqcap(b_3 \Rightarrow a_3)$.
\item[-] $\gamma_3(\varphi \to \psi)= \square\sneg(\varphi \to \psi) \equiv  \square\varphi \land \square\sneg\psi =  \gamma_2(\varphi) \land \gamma_3(\psi)$. That is,  $c_3=a_2 \sqcap b_3$.
\item[-] $\gamma_4(\varphi \to \psi)= \neg (\varphi \to \psi) \equiv \varphi \land \neg \psi = \gamma_1(\varphi) \land  \gamma_4(\psi)$, hence $c_4= a_1 \sqcap b_4$.
\end{itemize}
This  shows that $v_\Delta(\varphi \to \psi) \in v_\Delta(\varphi) \to v_\Delta(\psi)$.\\

From the facts above, it follows that the function $v_\Delta$ is a valuation over the Nmatrix $\mathcal{M}_{\INT}$. Moreover, by the very definitions, $v_\Delta(\psi) \in \textrm{D}$ iff $\psi \in \Delta$, for every formula $\psi$.
\end{proof}

\begin{theorem} [Completeness]
Let $\Gamma \cup \{\varphi\}$ be a set of formulas. Then, $\Gamma \models_{\INT} \varphi$ implies that $\Gamma \vdash_{\INT} \varphi$.
\end{theorem}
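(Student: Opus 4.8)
The plan is to establish the contrapositive: assuming $\Gamma \nvdash_{\INT} \varphi$, I will construct a single valuation over $\mathcal{M}_{\INT}$ that designates all of $\Gamma$ but not $\varphi$, thereby witnessing $\Gamma \nvDash_{\INT} \varphi$. Since the Hilbert calculus for \INT\ uses only axiom schemas and the finitary rule \MP, its associated consequence relation is Tarskian and finitary, as already noted in Remarks~\ref{saturated-rem}(2). Hence the Lindenbaum--\L o\'s lemma applies: from $\Gamma \nvdash_{\INT} \varphi$ I obtain a $\varphi$-saturated set $\Delta$ with $\Gamma \subseteq \Delta$.

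Next I would feed this $\Delta$ into the Truth lemma (Proposition~\ref{TruthL}). That result does all the heavy lifting: it guarantees that the function $v_\Delta$, defined by $v_\Delta(\psi)=a$ iff $\theta_a(\psi)\in\Delta$, is well defined (by Proposition~\ref{Delta-satT}(21)) and is a genuine valuation over the Nmatrix $\mathcal{M}_{\INT}$, satisfying $v_\Delta(\psi)\in\textrm{D}$ iff $\psi\in\Delta$ for every formula $\psi$.

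It then remains only to read off the two sides. For each $\gamma\in\Gamma$ we have $\gamma\in\Delta$ (since $\Gamma\subseteq\Delta$), so $v_\Delta(\gamma)\in\textrm{D}$; thus $v_\Delta$ designates every premise. On the other hand, $\Delta$ being $\varphi$-saturated gives $\Delta\nvdash_{\INT}\varphi$, and since a saturated set is a closed theory (Remarks~\ref{saturated-rem}(1), or Proposition~\ref{Delta-satT}(1)) this forces $\varphi\notin\Delta$, whence $v_\Delta(\varphi)\notin\textrm{D}$. Consequently $v_\Delta$ is a valuation over $\mathcal{M}_{\INT}$ under which all of $\Gamma$ is designated but $\varphi$ is not, i.e.\ $\Gamma\nvDash_{\INT}\varphi$. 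Taking the contrapositive yields the theorem.

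As for difficulty: the theorem itself is essentially a short assembly step, because the genuinely delicate content has been absorbed into the Truth lemma --- in particular the verification that $v_\Delta$ respects every multioperator of $\mathcal{M}_{\INT}$, which depends on the bridge axioms \axNDN--\axNde\ together with Proposition~\ref{propbaslog}. The only points worth double-checking at this stage are that the calculus is indeed finitary (so that Lindenbaum--\L o\'s is available) and that $\varphi$-saturation delivers $\varphi\notin\Delta$; both are immediate, so I anticipate no real obstacle in this final step.
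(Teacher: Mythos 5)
Your proposal is correct and follows exactly the paper's own argument: contraposition via the Lindenbaum--\L o\'s lemma (Remarks~\ref{saturated-rem}(2)) to obtain a $\varphi$-saturated $\Delta \supseteq \Gamma$, then Proposition~\ref{TruthL} to extract the valuation $v_\Delta$ separating $\Gamma$ from $\varphi$. Your added observations --- that $\varphi \notin \Delta$ follows from saturation plus closure, and that the real work lives in the Truth lemma --- merely make explicit what the paper leaves implicit, so there is nothing to correct.
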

\begin{proof}
Suppose that $\Gamma \nvdash_{\INT} \varphi$. Then, by Remarks~\ref{saturated-rem}(2), there exists a $\varphi$-saturated set $\Delta$ such that $\Gamma \subseteq \Delta$. Let $v_\Delta$ be the valuation over the Nmatrix $\mathcal{M}_{\INT}$ defined as in Proposition~\ref{TruthL}. Then, $v_\Delta(\psi) \in \textrm{D}$ for every $\psi \in \Gamma$, but $v_\Delta(\varphi) \notin \textrm{D}$. This shows that $\Gamma \not\models_{\INT} \varphi$.
\end{proof}

\section{\INT\ as a combined logic: conservativeness results} \label{conserva}

At the begining of Section~\ref{sectINT} it was presented a technique for combining, under certain coherence hypothesis, two finite Nmatrices presented as swap structures over $\A_2$. The logic \INT\ was obtained by applying this technique, and so this logic  can be seen as the result of a process of {\em combination  of logics}, specifically a combination of logic \pyn\ with logic \Tm. 

Combining logics is a relatively new field in contemporary logic.\footnote{General references on combining logics can be found, for instance, in~\cite{car:con:07b} and~\cite{car:con:gab:gou:ser:08}.} In a  bottom-up perspective, a process for combining logics produces a new one, which is intended to be minimal in some sense. Hence, if a logic {\bf L} is a combination of given logics ${\bf L}_1$ and ${\bf L}_2$  it should be expected that: (1) {\bf L} extends both ${\bf L}_1$ and ${\bf L}_2$; and (2) {\bf L} is a minimal extension of both ${\bf L}_1$ and ${\bf L}_2$ (in some sense). For instance, some methods (as, for instance, the well-known method of {\em fibring}) may require {\bf L} to be the least conservative extension of both ${\bf L}_1$ and ${\bf L}_2$ (see~\cite{gab:99}).

In order to support the idea that \INT\ is obtained from \pyn\ and \Tm\ by a process of combination of logics, it will be proven in this section that \INT\ is a conservative expansion of both logics (see Theorems~\ref{Int-conserv-Pyn} and~\ref{conservaTm} below).

\begin{definition}
Let $\mathcal{M}_1=\langle A_1,D_1,\mathcal{O}_1\rangle$ and $\mathcal{M}_2=\langle A_2,D_2,\mathcal{O}_2\rangle$ be Nmatrices over a signature $\Theta$. We say that $\mathcal{M}_1$ is a subNmatrix of $\mathcal{M}_2$ if $A_1 \subseteq A_2$, $D_1=D_2 \cap A_1$ and, for every connective $\# $ of $\Theta$ and every $z \in A_1^n$, $\mathcal{O}_1(\#)(z) \subseteq \mathcal{O}_2(\#)(z)$, where $n$ is the arity of $\#$. In particular, $\mathcal{O}_1(\#) \subseteq \mathcal{O}_2(\#)$ for every $0$-ary connective $\#$.
\end{definition}

\begin{definition}\label{expaB4}
Let $\mathcal{A}_4^\to=\langle A_4,\mathcal{O}_4^\to\rangle$ be the algebra over $\Sigma_4$, considered as a multialgebra, underlying  the  characteristic matrix $\mathcal{M}_4=\langle A_4,D_4,\mathcal{O}_4^\to\rangle$ of \pyn, where $A_4=\{{\bf 1},\bo,\nei,{\bf 0}\}$ and $D_4=\{{\bf 1},\bo\}$ (recalling that ${\bf 1}=(1,0)$; $\bo=(1,1)$; $\nei=(0,0)$;  and ${\bf 0}=(0,1)$).
Let $h:A_4 \to \textsc{B}_{T}$ be a function such that $h(z_1,z_2)=(z_1,0,0,z_2)$ (that is, $h({\bf 1})=t_0$; $h(\bo)=t_1$; $h(\nei)=f_0$; and $h({\bf 0})=f_1$). 
Let $h(\mathcal{A}_4^{\to})\defi\langle h(A_4),\mathcal{O}'_4\rangle$ be the multialgebra over $\Sigma_4$ induced by $h$ and $\mathcal{A}_4^{\to}$, where $h(A_4)=\{t_0,t_1,f_0,f_1\}$. That is: $\mathcal{O}'_4(\neg)h(z)\defi h(\mathcal{O}_4^{\to}(\neg)(z))$ for every  $z \in A_4$, and $h(z)\mathcal{O}'_4(\#)h(w)\defi h(z\mathcal{O}_4^{\to}(\#)w)$ for every $\# \in \{\land,\lor,\to\}$ and $z,w \in A_4$.\footnote{This is well-defined, since $h$ is a bijection from $A_4$ to $h(A_4)$.} Let $h(\mathcal{M}_4)\defi \langle h(A_4),h(D_4),\mathcal{O}'_4\rangle$ be the  Nmatrix over $\Sigma_4$  induced by  $h$ and $\mathcal{M}_4^{\to}$, where $h(D_4)=\{t_0,t_1\}$.
\end{definition}

\begin{proposition} \label{expB4sub}
$h(\mathcal{M}_4)$ is a subNmatrix of the reduct of $\mathcal{M}_{\INT}$ to $\Sigma_4$.
\end{proposition}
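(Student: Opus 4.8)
The plan is to verify the three conditions in the definition of subNmatrix directly. The two structural conditions are immediate: since $h(A_4)=\{t_0,t_1,f_0,f_1\}\subseteq \textsc{B}_{T}$ and $\textrm{D}=\{T_0,t_0,t_1\}$, we obtain $\textrm{D}\cap h(A_4)=\{t_0,t_1\}=h(D_4)$, as required. It then remains to check, for each connective $\#\in\Sigma_4=\{\land,\lor,\to,\neg\}$, that the (singleton) value of $\mathcal{O}'_4(\#)$ at any tuple from $h(A_4)$ is contained in the corresponding multioperator $\tilde{\#}$ of the reduct of $\mathcal{M}_{\INT}$ to $\Sigma_4$. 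Note that $h(\mathcal{M}_4)$ is deterministic, since its operations are induced from the matrix $\mathcal{M}_4$, so this inclusion says that the single output of $\mathcal{O}'_4(\#)$ is a member of the output set of $\tilde{\#}$.

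The key observation driving the computation is that $h(A_4)$ is exactly the set of snapshots in $\textsc{B}_{T}$ whose second and third coordinates vanish: writing $z=(a,0,0,b)$ and $w=(c,0,0,d)$ for arbitrary elements of $h(A_4)$, one has $z_2=z_3=w_2=w_3=0$. By Definition~\ref{expaB4}, $h(z_1,z_2)=(z_1,0,0,z_2)$, so for $\#\in\{\land,\lor,\to\}$ the operation $\mathcal{O}'_4(\#)$ sends $(z,w)$ to $h\big((a,b)\,\#\,(c,d)\big)=(e,0,0,g)$, where $(e,g)$ is computed by the \pyn\ twist-structure specification given after Remark~\ref{rem-pynk}; similarly $\mathcal{O}'_4(\neg)(z)=h(\neg(a,b))=(b,0,0,a)$.

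Next I would show $(e,0,0,g)\in z\,\tilde{\#}\,w$ by inspecting the formal specification of the multioperations in Definition~\ref{defNmatINT} at these inputs. By Remarks~\ref{obs-swap}(1), projecting the specification of $\tilde{\#}$ onto the first and fourth coordinates reproduces the \pyn\ operation (identifying \pyn's second coordinate with the fourth coordinate of $\mathcal{M}_{\INT}$); hence the first and fourth coordinates of the candidate value $(e,0,0,g)$ agree with those prescribed by $\tilde{\#}$. For the remaining two coordinates, substituting $z_2=z_3=w_2=w_3=0$ forces one of them to be $0$ (the third coordinate for $\tilde{\to}$, the second for $\tilde{\land}$, the third for $\tilde{\lor}$, and both for $\tilde{\neg}$, which is deterministic), while the other is constrained only to lie in the interval $[0,1]$, since its lower bound ($z_3\sqcup w_2$ for $\tilde{\to}$, $z_3\sqcup w_3$ for $\tilde{\land}$, $z_2\sqcup w_2$ for $\tilde{\lor}$) collapses to $0$ and its upper bound collapses to $1$. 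Choosing that free coordinate to be $0$ yields precisely $(e,0,0,g)$. Finally, that $(e,0,0,g)$ is a legitimate snapshot, i.e. lies in $\textsc{B}_{T}$, is automatic: $(e,0,0,g)=h(e,g)\in h(A_4)\subseteq\textsc{B}_{T}$, the constraints $z_2\leq z_1$, $z_1\sqcap z_3=0$, $z_2\sqcap z_4=0$ and $z_3\leq z_4$ being trivially met when the middle coordinates are $0$.

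There is essentially no hard obstacle here; the argument is a routine but careful verification, case by case on the four connectives of $\Sigma_4$. The one point deserving attention is confirming that each nondeterministic middle coordinate of $\tilde{\land}$, $\tilde{\lor}$ and $\tilde{\to}$ can be selected equal to $0$ at the relevant inputs, so that the deterministic value produced by $h$ indeed lands inside the (possibly larger) output set of $\mathcal{M}_{\INT}$; this is exactly what the collapse of the bounds to the interval $[0,1]$ guarantees.
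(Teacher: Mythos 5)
Your proof is correct and follows exactly the route the paper intends: the paper's own proof is just ``Straightforward, from the definitions,'' and your case-by-case verification --- that the first and fourth coordinates of $h$-values match the $\tilde{\#}$ specifications, that the middle coordinates are either forced to $0$ or free with bounds collapsing to $[0,1]$ (so $0$ may be chosen), and that $(e,0,0,g)=h(e,g)\in\textsc{B}_{T}$ --- is precisely the elaboration left to the reader. Nothing to correct.
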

\begin{proof}
Straightforward, from the definitions.
\end{proof}

\begin{lemma} \label{lemma-M4}
Let $v:For(\Sigma_4) \to A_4$ be a valuation over the Nmatrix $\mathcal{M}_4$ of \pyn. Then, there exists a valuation  $\bar{v}:For(\Sigma) \to \textsc{B}_{T}$ over the Nmatrix  $\mathcal{M}_{\INT}$ such that $\bar{v}(\varphi)=h(v(\varphi))$ for every $\varphi \in For(\Sigma_4)$. Moreover, $\bar{v}(\delta) \in \textrm{D}$ iff $v(\delta) \in D_4$, for every $\delta \in For(\Sigma_4)$. 
\end{lemma}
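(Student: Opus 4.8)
The plan is to build $\bar v$ by recursion on the structure of $\Sigma$-formulas, which is legitimate since $For(\Sigma)$ is the absolutely free algebra over $\Sigma$. I would set $\bar v(p)\defi h(v(p))$ on propositional variables and, at each inductive step for a formula $\#(\psi_1,\dots,\psi_n)$ with $\#\in\Sigma$, choose some element of the set $\tilde\#(\bar v(\psi_1),\dots,\bar v(\psi_n))$. Because every multioperation of $\mathcal{M}_{\INT}$ returns a non-empty set, such a choice always exists, so any function defined this way is automatically a valuation over $\mathcal{M}_{\INT}$. Hence the entire content of the lemma is to make these choices coherently on $\Sigma_4$-formulas so that $\bar v(\varphi)=h(v(\varphi))$ holds there.

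To this end I would prescribe the choice explicitly: whenever $\varphi$ is a $\Sigma_4$-formula (built using only $\land,\lor,\to,\neg$), set $\bar v(\varphi)\defi h(v(\varphi))$; for any other $\Sigma$-formula pick an arbitrary admissible value. I then prove, by induction on $\Sigma_4$-formulas, that the first clause is consistent with being a valuation, i.e.\ that $h(v(\varphi))$ genuinely lies in the required multioperation output. The base case is immediate since $h(v(p))\in h(A_4)\subseteq\textsc{B}_{T}$. For the inductive step, write $\varphi=\#(\psi_1,\dots,\psi_n)$ with $\#\in\Sigma_4$ and each $\psi_i$ a $\Sigma_4$-formula, so that $\bar v(\psi_i)=h(v(\psi_i))$ by hypothesis; I must check that $h(v(\varphi))\in\mathcal{O}(\#)(h(v(\psi_1)),\dots,h(v(\psi_n)))$, where $\mathcal{O}$ is the interpretation in $\mathcal{M}_{\INT}$.

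This is exactly where Proposition~\ref{expB4sub} does the work. Because $\mathcal{M}_4$ is deterministic, the induced multialgebra $h(\mathcal{M}_4)$ of Definition~\ref{expaB4} is single-valued, so $\mathcal{O}'_4(\#)(h(v(\psi_1)),\dots,h(v(\psi_n)))=\{h(\mathcal{O}_4^{\to}(\#)(v(\psi_1),\dots,v(\psi_n)))\}=\{h(v(\varphi))\}$, using that $v$ is a valuation over $\mathcal{M}_4$. Since $h(\mathcal{M}_4)$ is a subNmatrix of the reduct of $\mathcal{M}_{\INT}$ to $\Sigma_4$, we have $\mathcal{O}'_4(\#)(z)\subseteq\mathcal{O}(\#)(z)$ for $z=(h(v(\psi_1)),\dots,h(v(\psi_n)))$, whence $h(v(\varphi))\in\mathcal{O}(\#)(h(v(\psi_1)),\dots,h(v(\psi_n)))$, as needed. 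This closes the induction and yields $\bar v(\varphi)=h(v(\varphi))$ for every $\varphi\in For(\Sigma_4)$.

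Finally, for the designated-value clause I would simply observe that $h$ preserves the first coordinate, since $h(z_1,z_2)=(z_1,0,0,z_2)$. As $D_4=\{z:z_1=1\}$ and $\textrm{D}=\{z\in\textsc{B}_{T}:z_1=1\}$, for any $\delta\in For(\Sigma_4)$ we get $\bar v(\delta)=h(v(\delta))\in\textrm{D}$ iff the first coordinate of $v(\delta)$ equals $1$ iff $v(\delta)\in D_4$. The only genuinely delicate point in the whole argument is the consistency of the prescribed choice on $\Sigma_4$-formulas, namely keeping the recursion inside the image $h(A_4)$ while still respecting the non-deterministic $\mathcal{M}_{\INT}$-operations; everything else is bookkeeping, and this point is handled cleanly by the determinism of $h(\mathcal{M}_4)$ together with the subNmatrix inclusion.
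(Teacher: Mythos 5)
Your proof is correct and follows essentially the same route as the paper: define $\bar v$ as $h\circ v$ on $\Sigma_4$-formulas and admissibly elsewhere, then verify the valuation condition on $\Sigma_4$-formulas via Definition~\ref{expaB4} (determinism of $h(\mathcal{M}_4)$) together with the subNmatrix inclusion of Proposition~\ref{expB4sub}, and conclude the designated-value claim from $h$ preserving the first coordinate. The only cosmetic difference is that you organize the verification as an explicit structural induction, whereas the paper states the same checks case by case on the outermost connective.
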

\begin{proof} 
Recall that $\mathcal{M}_{\INT}= \langle \textsc{B}_{T}, \textrm{D}, \mathcal{O}\rangle$ such that $\mathcal{O}(\#)$ is denoted by $\tilde{\#}$ for $\# \in \Sigma$.
Define $\bar{v}(\varphi)=h(v(\varphi))$ for every $\varphi \in For(\Sigma_4)$;  $\bar{v}(\#\varphi) \in \mathcal{O}(\#)(\bar{v}(\varphi))$  for every  $\# \in \{\neg, \sneg,\square\}$ and $\varphi \notin For(\Sigma_4)$; and $\bar{v}(\varphi \# \psi) =\bar{v}(\varphi) \mathcal{O}(\#) \bar{v}(\psi)$ for $\# \in \{\land, \vee,\to\}$, if $\{\varphi,\psi\} \not\subseteq For(\Sigma_4)$. Then, $\bar{v}$ satisfies the requirements. Indeed, for every $\delta,\psi \in For(\Sigma)$, by Definition~\ref{expaB4} and by  Proposition~\ref{expB4sub}, we have the following:

\begin{itemize}
\item[-] Let $\#\delta \in For(\Sigma)$, where $\# \in \{\neg,\sneg,\square\}$. If  $\#\delta \in For(\Sigma_4)$ then $\#=\neg$. Since $v(\neg\delta) \in  \mathcal{O}_4^{\to}(\neg)(v(\delta))$ then 

$\begin{array}{lll}
\bar{v}(\neg\delta)&=&h(v(\neg\delta)) \in  h(\mathcal{O}_4^{\to}(\neg)(v(\delta)))=\mathcal{O}'_4(\neg)(h(v(\delta)))\\
&=& \mathcal{O}'_4(\neg)(\bar{v}(\delta)) \subseteq \tilde{\neg}(\bar{v}(\delta)).
\end{array}$\\
Now, if $\#\delta \notin For(\Sigma_4)$ then, by definition, $\bar{v}(\#\delta) \in \tilde{\#}(\bar{v}(\delta))$.

\item[-] Let $\delta \# \psi \in For(\Sigma)$, where $\# \in \{\land,\vee,\to\}$. If $\delta \# \psi \in For(\Sigma_4)$ then 
$v(\delta \# \psi) \in  v(\delta)\mathcal{O}_4^{\to}(\#)v(\psi)$, hence

$\begin{array}{lll}
\bar{v}(\delta \# \psi)&=&h(v(\delta \# \psi)) \in  h(v(\delta)\mathcal{O}_4^{\to}(\#)v(\psi))\\
&=& h(v(\delta))\mathcal{O}'_4(\#)h(v(\psi)) = \bar{v}(\delta)\mathcal{O}'_4(\#)\bar{v}(\psi) \subseteq \bar{v}(\delta) \tilde{\#} \bar{v}(\psi).
\end{array}$\\
Now, if $\delta \# \psi \notin For(\Sigma_4)$ then, by definition, $\bar{v}(\delta \# \psi) \in \bar{v}(\delta) \tilde{\#} \bar{v}(\psi)$.
\end{itemize}
This shows that $\bar{v}$ is a valuation over the Nmatrix  $\mathcal{M}_{\INT}$ such that, for every $\delta \in For(\Sigma_4)$, $\bar{v}(\delta)=h(v(\delta)) \in \textrm{D}$ iff $v(\delta) \in D_4$. 
\end{proof}

\begin{theorem} \label{Int-conserv-Pyn}
The logic \INT\ is a conservative expansion of \pyn. That is, for every set of formulas $\Gamma \cup \{\varphi\}$ in the signature of  \pyn, $\Gamma \vdash_{\INT} \varphi$ \ iff \  $\Gamma \vdash_{\pyn} \varphi$.
\end{theorem}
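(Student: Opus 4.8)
The plan is to split the biconditional into its two halves and to settle the non-trivial one semantically, combining the Soundness of \INT\ (Theorem~\ref{sound-INT}), the matrix completeness of \pyn, and the embedding provided by Lemma~\ref{lemma-M4}.

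The direction $\Gamma \vdash_{\pyn}\varphi \Rightarrow \Gamma \vdash_{\INT}\varphi$ is immediate: by Definition~\ref{IvNelLHil} the Hilbert calculus for \INT\ is obtained by \emph{adding} axiom schemas to the calculus for \pyn\ (keeping the same rule \MP), so every \pyn-derivation is literally an \INT-derivation. Hence it remains only to prove conservativeness, i.e. $\Gamma \vdash_{\INT}\varphi \Rightarrow \Gamma \vdash_{\pyn}\varphi$ for $\Gamma \cup \{\varphi\} \subseteq For(\Sigma_4)$, which I would establish by contraposition.

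So assume $\Gamma \nvdash_{\pyn}\varphi$. By the completeness of the Hilbert calculus for \pyn\ with respect to its characteristic matrix $\mathcal{M}_4$ (Pynko), there is a valuation $v$ over $\mathcal{M}_4$ with $v(\gamma) \in D_4$ for all $\gamma \in \Gamma$ but $v(\varphi) \notin D_4$; that is, $\Gamma \not\models_{\pyn}\varphi$. Applying Lemma~\ref{lemma-M4} yields a valuation $\bar v$ over $\mathcal{M}_{\INT}$ with $\bar v(\delta) \in \textrm{D}$ iff $v(\delta) \in D_4$ for every $\delta \in For(\Sigma_4)$. Since $\Gamma \cup \{\varphi\} \subseteq For(\Sigma_4)$, this $\bar v$ witnesses $\bar v(\gamma) \in \textrm{D}$ for all $\gamma \in \Gamma$ and $\bar v(\varphi) \notin \textrm{D}$, so $\Gamma \not\models_{\INT}\varphi$. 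Finally, the contrapositive of Soundness (Theorem~\ref{sound-INT}) gives $\Gamma \nvdash_{\INT}\varphi$, completing the argument.

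The whole proof is light once Lemma~\ref{lemma-M4} is in place, so I do not expect the obstacle to lie in the present theorem itself: the real work already resides in that lemma (which rests on Proposition~\ref{expB4sub}, placing $h(\mathcal{M}_4)$ as a subNmatrix inside the $\Sigma_4$-reduct of $\mathcal{M}_{\INT}$), since it must extend a \pyn-valuation to a full $\Sigma$-valuation respecting the non-deterministic clauses for $\sneg$, $\square$ and the mixed connectives while still agreeing with $h\circ v$ on the old fragment. The only external ingredient I would invoke is the matrix completeness of \pyn; everything else is internal to the excerpt. A minor point to keep in mind is that Lemma~\ref{lemma-M4} guarantees the designation equivalence precisely on $For(\Sigma_4)$, which is exactly the set to which $\Gamma \cup \{\varphi\}$ belongs, so no stronger preservation property is needed.
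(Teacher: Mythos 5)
Your proof is correct and follows essentially the same route as the paper: the easy direction by inclusion of the Hilbert calculi, and the converse via the semantic transfer of Lemma~\ref{lemma-M4}, chained with Soundness of \INT\ and completeness of \pyn\ w.r.t.\ $\mathcal{M}_4$ (which the paper leaves implicit but likewise relies on). Your contrapositive phrasing is only a cosmetic rearrangement of the paper's direct argument that $\Gamma \models_{\INT} \varphi$ implies $\Gamma \models_{\pyn} \varphi$.
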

\begin{proof} 
By Definition, the Hilbert calculus for \INT\ contains all the schema axioms and inference rules of \pyn. Hence, for every set of formulas $\Gamma \cup \{\varphi\}$ in the signature of  \pyn, $\Gamma \vdash_{\pyn} \varphi$ implies that  $\Gamma \vdash_{\INT} \varphi$. 

The converse will be proven semantically, by showing that $\Gamma \models_{\INT} \varphi$  implies that $\Gamma \models_{\pyn} \varphi$, for every set of formulas $\Gamma \cup \{\varphi\}$ in the signature $\Sigma_4$ of  \pyn. Thus, let  $\Gamma \cup \{\varphi\} \subseteq For(\Sigma_4)$ be such that  $\Gamma \models_{\INT} \varphi$, and let $v:For(\Sigma_4) \to A_4$ be a valuation over the Nmatrix $\mathcal{M}_4$ of \pyn\ such that $v(\gamma) \in D_4$ for every $\gamma \in \Gamma$. Let  $\bar{v}:For(\Sigma) \to \textsc{B}_{T}$ be the valuation over the Nmatrix  $\mathcal{M}_{\INT}$  defined from $v$ as in Lemma~\ref{lemma-M4}. Then, for every $\delta \in For(\Sigma_4)$, $\bar{v}(\delta) \in \textrm{D}$ iff  $v(\delta) \in D_4$. From this, $\bar{v}(\gamma)  \in \textrm{D}$  for every $\gamma \in \Gamma$. Since it is assumed that $\Gamma \models_{\INT} \varphi$, it follows that  $\bar{v}(\varphi)  \in \textrm{D}$ and so $v(\varphi) \in D_4$. This shows that $\Gamma \models_{\pyn} \varphi$, as required.
\end{proof}

\noindent Now, an analogous result will be proven for \Tm.

\begin{definition} \label{expaTm}
Let $\mathcal{A}_T=\langle A_m,\mathcal{O}_T\rangle$ be the multialgebra over $\Sigma_m$ underlying  the  characteristic Nmatrix $\mathcal{M}_T=\langle {A}_m,D_m,\mathcal{O}_T\rangle$ of \Tm\ (see Definitions~\ref{Tm-mat-def} and~\ref{NMatIvlev}), where $A_m=\{T,t,f,F\}$ and $D_m=\{T,t\}$ (recalling that $T=(1,1,0)$; $t=(1,0,0)$; $f=(0,0,0)$;  and $F=(0,0,1)$). Let $g:A_m \to \textsc{B}_{T}$ be such that $g(z_1,z_2,z_3)=(z_1,z_2,z_3,\sneg z_1)$ (that is, $g(T)=T_0$; $g(t)=t_0$; $g(f)=f_1$; and $g(F)=F_1$).
Let $g(\mathcal{A}_T)\defi\langle g(A_m),\mathcal{O}'_T\rangle$ be the multialgebra over $\Sigma_m$ induced by $g$ and $\mathcal{A}_T$, where $g(A_m)=\{T_0,t_0,f_1,F_1\}$. That is: $\mathcal{O}'_T(\#)g(z)\defi g(\mathcal{O}_T(\#)(z))$ for every $\# \in \{\sneg,\square\}$ and $z \in A_m$; and $g(z)\mathcal{O}'_T(\to)g(w)\defi g(z\mathcal{O}_T(\to)w)$ for every $z,w \in A_m$.\footnote{This is well-defined, since $g$ is a bijection from $A_m$ to $g(A_m)$.} Let $g(\mathcal{M}_T)\defi\langle g(A_m),g(D_m),\mathcal{O}'_T\rangle$ be the  Nmatrix over $\Sigma_m$  induced by  $g$ and $\mathcal{M}_T$, where $g(D_m)=\{T_0,t_0\}$.
\end{definition}

\begin{proposition} \label{expTmsub}
$g(\mathcal{M}_T)$ is a subNmatrix of the reduct of $\mathcal{M}_{\INT}$ to $\Sigma_m$.
\end{proposition}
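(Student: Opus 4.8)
The plan is to unwind the definition of subNmatrix and verify its three requirements for $\mathcal{M}_1 = g(\mathcal{M}_T)$ and $\mathcal{M}_2 =$ the $\Sigma_m$-reduct of $\mathcal{M}_{\INT}$. Since $\mathcal{O}'_T$ is defined in Definition~\ref{expaTm} by transporting $\mathcal{O}_T$ along the bijection $g$, the inclusion condition $\mathcal{O}'_T(\#)(g(z)) \subseteq \tilde{\#}(g(z))$ for each $\# \in \Sigma_m$ is equivalent to $g(\mathcal{O}_T(\#)(z)) \subseteq \tilde{\#}(g(z))$; that is, for every output snapshot $u \in \mathcal{O}_T(\#)(z)$ of \Tm, its image $g(u) = (u_1,u_2,u_3,\sneg u_1)$ must lie in the corresponding \INT\ multioperator applied to $g(z) = (z_1,z_2,z_3,\sneg z_1)$. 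This is the statement I would establish connective by connective.

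The two remaining requirements are immediate: $g(A_m) = \{T_0,t_0,f_1,F_1\} \subseteq \textsc{B}_{T}$, and $\textrm{D} \cap g(A_m) = \{T_0,t_1,t_0\} \cap \{T_0,t_0,f_1,F_1\} = \{T_0,t_0\} = g(D_m)$. For the inclusion condition the key tool is Remarks~\ref{obs-swap}(1): the projection of the \INT\ specifications of $\tilde{\to}$, $\tilde{\sneg}$, $\tilde{\square}$ onto the first three coordinates reproduces verbatim the \Tm\ specifications of $\hat{\to}$, $\hat{\sneg}$, $\hat{\square}$ from Definition~\ref{NMatIvlev}. Consequently, for any $u \in \mathcal{O}_T(\#)(z)$ the first three coordinates of $g(u)$ automatically satisfy the first three coordinate conditions of $\tilde{\#}(g(z))$; for $\tilde{\to}$ this includes checking that the second coordinate $u_2$ falls in the prescribed interval $[z_3 \sqcup w_2,\, (z_2 \Rightarrow w_2)\sqcap(w_3 \Rightarrow z_3)]$, which is exactly the interval inherited from $\hat{\to}$.

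The only genuine work — and the main obstacle — is the bookkeeping of the fourth coordinate, i.e.\ showing that $g$, which appends $\sneg z_1$, is compatible with the \INT\ operations. For $\tilde{\square}$ the fourth coordinate is unconstrained, so membership $g(u) \in \textsc{B}_{T}$ (guaranteed since $g$ maps $A_m$ into $\textsc{B}_{T}$) already suffices. For $\tilde{\to}$ applied to $g(z)$ and $g(w)$ the fourth coordinate is deterministic, equal to $z_1 \sqcap \sneg w_1$; here I would invoke the Boolean identity $\sneg(z_1 \Rightarrow w_1) = z_1 \sqcap \sneg w_1$ to see that this equals $\sneg u_1$, the fourth coordinate of $g(u)$. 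For $\tilde{\sneg}$ the fourth coordinate ranges over the interval $[z_2,\, \sneg z_3]$, and since the required value is $\sneg(\sneg z_1) = z_1$, I would use the \Tm\ snapshot constraints $z_2 \leq z_1$ and $z_1 \sqcap z_3 = 0$ (the latter giving $z_1 \leq \sneg z_3$) to place $z_1$ inside that interval. Collecting these three verifications establishes $g(\mathcal{O}_T(\#)(z)) \subseteq \tilde{\#}(g(z))$ and hence the proposition; as with Proposition~\ref{expB4sub}, everything reduces to a finite check that is conceptually straightforward once the fourth-coordinate identities are in place.
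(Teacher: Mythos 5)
Your proof is correct and takes the same route as the paper, which simply states the result is ``straightforward, from the definitions'': a direct check of the three subNmatrix conditions, with your fourth-coordinate verifications (the identity $\sneg(z_1 \Rightarrow w_1) = z_1 \sqcap \sneg w_1$ for $\tilde{\to}$, and the bounds $z_2 \leq z_1 \leq \sneg z_3$ from the \Tm\ snapshot constraints for $\tilde{\sneg}$) being exactly the details the paper leaves implicit.
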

\begin{proof}
Straightforward, from the definitions.
\end{proof}

\begin{lemma} \label{lemma-Tm}
Let $v:For(\Sigma_m) \to A_m$ be a valuation over the Nmatrix $\mathcal{M}_T$ of \Tm. Then, there exists a valuation  $\hat{v}:For(\Sigma) \to \textsc{B}_{T}$ over the Nmatrix   $\mathcal{M}_{\INT}$ such that is  $\hat{v}(\varphi)=g(v(\varphi))$ for every $\varphi \in For(\Sigma_m)$.  Moreover, $\hat{v}(\delta) \in \textrm{D}$ iff $v(\delta) \in D_m$, for every $\delta \in For(\Sigma_m)$. 
\end{lemma}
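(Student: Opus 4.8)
The plan is to follow the same strategy used in the proof of Lemma~\ref{lemma-M4}, now transporting valuations along the embedding $g$ of Definition~\ref{expaTm} instead of $h$. First I would define $\hat{v}$ on $For(\Sigma_m)$ by setting $\hat{v}(\varphi) = g(v(\varphi))$, and then extend it recursively to all of $For(\Sigma)$: for a formula $\#\varphi$ with $\# \in \{\sneg, \square, \neg\}$ and $\#\varphi \notin For(\Sigma_m)$, pick any $\hat{v}(\#\varphi) \in \tilde{\#}(\hat{v}(\varphi))$, and for $\varphi \# \psi$ with $\# \in \{\land, \lor, \to\}$ and $\{\varphi, \psi\} \not\subseteq For(\Sigma_m)$, pick any $\hat{v}(\varphi \# \psi) \in \hat{v}(\varphi) \,\tilde{\#}\, \hat{v}(\psi)$. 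Such choices always exist because every multioperator of $\mathcal{M}_{\INT}$ returns a non-empty set, and the recursion is well-founded since it proceeds on strictly smaller subformulas. Note that $For(\Sigma_m)$ contains no literal occurrences of $\land$ or $\lor$ (these are mere abbreviations in \Tm), so the base clause $\hat{v} = g \circ v$ only has to account for $\to$, $\sneg$ and $\square$.

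Next I would verify that $\hat{v}$ is a valuation over $\mathcal{M}_{\INT}$, arguing connective by connective and, in each case, splitting according to whether the compound formula lies in $For(\Sigma_m)$ or not. For formulas outside $For(\Sigma_m)$ the required membership holds by the very definition of $\hat{v}$. For a formula inside $For(\Sigma_m)$, say $\sneg\varphi$ (the cases of $\square\varphi$ and $\varphi \to \psi$ being identical), I would use that $v$ is an $\mathcal{M}_T$-valuation together with Definition~\ref{expaTm} and Proposition~\ref{expTmsub} to compute
$$
\hat{v}(\sneg\varphi) = g(v(\sneg\varphi)) \in g\big(\mathcal{O}_T(\sneg)(v(\varphi))\big) = \mathcal{O}'_T(\sneg)(g(v(\varphi))) = \mathcal{O}'_T(\sneg)(\hat{v}(\varphi)) \subseteq \tilde{\sneg}(\hat{v}(\varphi)),
$$
the last inclusion being precisely the subNmatrix property of Proposition~\ref{expTmsub}. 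The analogous computation for $\to$ and $\square$ closes the verification that $\hat{v}$ respects every multioperator, so $\hat{v}$ is indeed a valuation over $\mathcal{M}_{\INT}$ extending $g \circ v$.

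Finally, for the ``Moreover'' clause I would observe that, for $\delta \in For(\Sigma_m)$, $\hat{v}(\delta) = g(v(\delta)) \in g(A_m)$, and that $g(D_m) = \{T_0, t_0\} = \textrm{D} \cap g(A_m)$ with $g$ a bijection onto its image; hence $\hat{v}(\delta) \in \textrm{D}$ iff $\hat{v}(\delta) \in g(D_m)$ iff $v(\delta) \in D_m$. I do not anticipate a genuine obstacle here, since the statement is the exact mirror of Lemma~\ref{lemma-M4}; the only point demanding care is to phrase the recursive extension so that it is well-defined on the mixed formulas (those straddling both fragments), and to invoke Proposition~\ref{expTmsub} correctly, since that proposition already packages the one nontrivial fact, namely that the $g$-image of each \Tm-multioperator is contained in the corresponding restricted multioperator of $\mathcal{M}_{\INT}$.
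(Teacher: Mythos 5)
Your proof is correct and takes essentially the same route as the paper's, which likewise sets $\hat{v}=g\circ v$ on $For(\Sigma_m)$, extends it by free choices through the multioperators of $\mathcal{M}_{\INT}$, and reduces the verification to Definition~\ref{expaTm} and Proposition~\ref{expTmsub} exactly as in the proof of Lemma~\ref{lemma-M4}; you merely spell out details the paper explicitly leaves to the reader, and your check of the ``moreover'' clause via $g(D_m)=\textrm{D}\cap g(A_m)$ is exactly right. One cosmetic point (a slip you inherit verbatim from the paper's own wording): the binary clause with condition $\{\varphi,\psi\}\not\subseteq For(\Sigma_m)$ leaves $\varphi\mathbin{\#}\psi$ with $\#\in\{\land,\lor\}$ and $\varphi,\psi\in For(\Sigma_m)$ formally uncovered, since such a formula lies outside $For(\Sigma_m)$ (as $\land,\lor\notin\Sigma_m$) yet fails the stated condition, so the clause should read $\varphi\mathbin{\#}\psi\notin For(\Sigma_m)$, in parallel with your (correctly phrased) unary clause.
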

\begin{proof} 
Define $\hat{v}(\varphi)=g(v(\varphi))$ for every $\varphi \in For(\Sigma_m)$; and $\hat{v}(\#\varphi) \in \mathcal{O}(\#)(\hat{v}(\varphi))$  for every  $\# \in \{\neg, \sneg,\square\}$ and $\varphi \notin For(\Sigma_m)$; and $\hat{v}(\varphi \# \psi) =\hat{v}(\varphi) \mathcal{O}(\#) \hat{v}(\psi)$ for $\# \in \{\land, \vee,\to\}$, if $\{\varphi,\psi\} \not\subseteq For(\Sigma_m)$. Then, with a proof similar to the one given for Lemma~\ref{lemma-M4} (but this time based on  Definition~\ref{expaTm} and Proposition~\ref{expTmsub}) it can be seen that $\hat{v}$ satisfies the desired requirements. Details are left to the reader.
\end{proof}

\begin{theorem} \label{conservaTm}
The logic \INT\ is a conservative expansion of  \Tm. That is, for every set of formulas $\Gamma \cup \{\varphi\}$ in the signature of  \Tm, $\Gamma \vdash_{\INT} \varphi$ \ iff \  $\Gamma \vdash_{\Tm} \varphi$.
\end{theorem}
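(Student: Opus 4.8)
The plan is to prove the theorem \emph{mutatis mutandis} as Theorem~\ref{Int-conserv-Pyn}, with \pyn, $\mathcal{M}_4$, $h$ and Lemma~\ref{lemma-M4} replaced throughout by \Tm, $\mathcal{M}_T$, $g$ and Lemma~\ref{lemma-Tm}. As there, I would split the biconditional into its two directions: the forward implication $\Gamma \vdash_{\Tm}\varphi \Rightarrow \Gamma \vdash_{\INT}\varphi$ handled syntactically, and the converse handled semantically and then transferred back to derivability using soundness of \INT\ (Theorem~\ref{sound-INT}) together with completeness of \Tm\ with respect to $\mathcal{M}_T$ (established in~\cite{con:cerro:per:15}).

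For the forward direction, I would note that by Definition~\ref{IvNelLHil} the calculus for \INT\ was built by adding to \pyn\ all the axiom schemas of \Tm, and that both calculi have \MP\ as their sole inference rule. The one subtlety is that in \Tm\ the symbols $\land$ and $\lor$ appearing in \axKd\ and \axMu\ abbreviate $\sneg,\to$-formulas, while in \INT\ they are primitive. Since $\Gamma \cup \{\varphi\}\subseteq For(\Sigma_m)$, all formulas concerned lie in the $\{\to,\sneg,\square\}$-fragment, and in those two axioms the relevant occurrences of $\land,\lor$ sit \emph{outside} the scope of $\square$; as \INT\ contains \cpl, the primitive connectives are provably interchangeable there with their definitions (cf. Remarks~\ref{obs-defisys}(2)--(4)). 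Consequently every instance of a \Tm-axiom, with its abbreviations unfolded, is a theorem of \INT, so any \Tm-derivation of $\varphi$ from $\Gamma$ is (after unfolding) an \INT-derivation.

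For the converse I would reproduce the semantic argument of Theorem~\ref{Int-conserv-Pyn}: it suffices to show $\Gamma \models_{\INT}\varphi \Rightarrow \Gamma \models_{\Tm}\varphi$ for $\Gamma \cup \{\varphi\}\subseteq For(\Sigma_m)$. Given any valuation $v$ over $\mathcal{M}_T$ with $v(\gamma)\in D_m$ for all $\gamma\in\Gamma$, Lemma~\ref{lemma-Tm} furnishes a valuation $\hat{v}$ over $\mathcal{M}_{\INT}$ with $\hat{v}(\delta)=g(v(\delta))$ and $\hat{v}(\delta)\in\textrm{D}$ iff $v(\delta)\in D_m$, for every $\delta\in For(\Sigma_m)$. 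Hence $\hat{v}(\gamma)\in\textrm{D}$ for all $\gamma\in\Gamma$, so $\Gamma\models_{\INT}\varphi$ forces $\hat{v}(\varphi)\in\textrm{D}$ and thus $v(\varphi)\in D_m$, proving $\Gamma\models_{\Tm}\varphi$. Chaining this with the two metatheorems gives $\Gamma\vdash_{\INT}\varphi \Rightarrow \Gamma\models_{\INT}\varphi \Rightarrow \Gamma\models_{\Tm}\varphi \Rightarrow \Gamma\vdash_{\Tm}\varphi$, as desired.

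The genuine work has already been absorbed into Lemma~\ref{lemma-Tm} (proved in parallel to Lemma~\ref{lemma-M4}), so no serious obstacle remains at the level of the theorem itself. The only point demanding care is the bookkeeping in the forward direction around the defined-versus-primitive status of $\land$ and $\lor$: because $\square$ is \emph{not} congruential, the interchange of a primitive connective with its $\sneg,\to$-definition is legitimate only where these connectives sit outside the scope of $\square$, which is precisely the situation occurring in \axKd\ and \axMu.
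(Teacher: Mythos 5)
Your proposal is correct and takes essentially the same route as the paper's own proof: the forward direction by noting that the \INT\ calculus subsumes the \Tm\ axioms modulo the defined-versus-primitive status of $\land$, $\lor$ and $\sse$ in \axKd, \axMu\ and \axDNu, resolved inside \cpl\ precisely because those occurrences lie outside the scope of $\square$; and the converse semantically via Lemma~\ref{lemma-Tm}, mirroring Theorem~\ref{Int-conserv-Pyn}. Your explicit chaining through soundness of \INT\ (Theorem~\ref{sound-INT}) and completeness of \Tm\ w.r.t.\ $\mathcal{M}_T$ only spells out what the paper leaves implicit.
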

\begin{proof}
By Definition, the Hilbert calculus for \INT\ contains all the schema axioms and inference rules of \Tm. The only differences are the following:

\begin{itemize}
\item[-]  in axiom \axKd, $\alpha \land \beta$ is an abbreviation of $\sneg(\alpha \to \sneg\beta)$, while in the Hilbert calculus for \INT\ the conjunction $\land$ in \axKd\ is primitive and obeys the axioms of \cpl;
\item[-]  in axiom \axMu, $\alpha \vee \beta$ is an abbreviation of $\sneg\alpha \to \beta$, while in the Hilbert calculus for \INT\ the disjunction $\vee$ in \axMu\ is primitive and obeys the axioms of \cpl;
\item[-]  in axioms \axKd\ and \axDNu,  $\alpha \sse \beta$ is an abbreviation of $\sneg((\alpha \to \beta) \to \sneg(\beta \to \alpha))$, while in  the Hilbert calculus for \INT\ the biconditional $\sse$ in \axKd\ is the abbreviation $\alpha \sse \beta \defi  (\alpha \to \beta) \land (\beta \to \alpha)$, where $\land$ is the primitive conjunction.
\end{itemize}
However, since  the Hilbert calculus for \INT\ contains \cpl\ over the signature $\{\land,\vee, \to, \sneg\}$ it is immediate that axioms \axKd,  \axMu\ and \axDNu\, formulated in the signature $\Sigma_m$, can be derived in the Hilbert calculus for \INT\ from the corresponding version over the signature $\Sigma$. Hence, for every set of formulas $\Gamma \cup \{\varphi\}$ in the signature $\Sigma_m$ of  \Tm, $\Gamma \vdash_{\Tm} \varphi$ implies that  $\Gamma \vdash_{\INT} \varphi$. 

The converse will be proven semantically, by showing that $\Gamma \models_{\INT} \varphi$  implies that $\Gamma \models_{\Tm} \varphi$, for every set of formulas $\Gamma \cup \{\varphi\} \subseteq For(\Sigma_m)$. Thus, let  $\Gamma \cup \{\varphi\} \subseteq For(\Sigma_m)$ be such that  $\Gamma \models_{\INT} \varphi$, and let $v:For(\Sigma_m) \to A_m$ be a valuation over the Nmatrix $\mathcal{M}_T$ of \Tm\ such that $v(\gamma) \in D_m$ for every $\gamma \in \Gamma$. Let  $\hat{v}$  be the valuation over the Nmatrix  $\mathcal{M}_{\INT}$ defined as in Lemma~\ref{lemma-Tm}. Then, for every $\delta \in For(\Sigma_m)$, $\hat{v}(\delta) \in \textrm{D}$ iff $v(\delta) \in D_m$. From this it follows, as in the proof of Theorem~\ref{Int-conserv-Pyn}, that $v(\varphi) \in D_m$. This shows that $\Gamma \models_{\Tm} \varphi$.
\end{proof}

\

\noindent  Theorems~\ref{Int-conserv-Pyn} and~\ref{conservaTm} are of central importance: they reveal that the method of combination of swap structures  by  superposition of snapshots, applied here to obtain \INT, can be considered as a genuine method for combining logics. As we shall see in the next section, the same method can be applied {\em mutatis mutandis} to combine \pyn\ with axiomatic extensions of \Tm. In the examples analyzed in this paper, the method combines a logic characterized by swap structures (namely, \Tm\ or an axiomatic extension of it) with another logic characterized by twist structures (namely, \pyn). However, twist structures are special (deterministic) cases of swap structures. 

The combination process of \Tm\ and \pyn\ shares the classical implication $\to$ of both logics (conjunction and disjunction are not directly shared since they are derived connectives in \Tm). In this sense, our process could be seen as a {\em constrained fibring} of \Tm\ with \pyn\ by sharing implication (see, for instance, \cite[Chapters~2 and~3]{car:con:gab:gou:ser:08}). However, when combining the Hilbert calculi, some new axioms in the combined language (specifically, axioms \axNDN\ and \axNTN) were included. Both axioms, which are sound w.r.t. the 6-valued Nmatrices, allow to derive some basic principles assumed for the combined logic, namely $\square \varphi \to \sneg\neg \varphi$ and $\square\sneg\varphi \to \neg\varphi$, among others (recall Proposition~\ref{propbaslog}).  This kind of axioms in the mixed language, manually added after a basic combination process, are called {\em bridge principles} or {\em bridge axioms}.
The adoption of bridge principles is a fundamental tool when combining logics: sometimes, the axiomatization of the combined logic requires a fine-tuning to meet  the semantical specifications. 

It is worth observing that, when combining logics, for instance  by fibring, the preservation of metaproperties such as soundness and completeness is not an easy task to be established, and usually some additional requirements need to be considered in the logics being combined (for instance, the presence of an implication connective satisfying the deduction metatheorem, see~\cite[Section~3.3]{car:con:gab:gou:ser:08}) in order to guarantee that preservation.
To this respect, it is interesting to observe that, by its own nature, the process of  combination of swap structures proposed here allows to prove the soundness and completeness in a very natural way. Indeed, as happens with swap  structures (and in the particular, deterministic case of twist structures) there is a natural correspondence between the formal specifications of each  multioperator and the axioms to be considered in the Hilbert calculus of the logic being represented, taking into account that each coordinate of the snapshots represents an specific formula depending on one variable $p$ (in the present case: $p$, $\square p$, $\square\sneg p$ and $\neg p$).\footnote{Several instructive examples of this phenomenon, in the context of twist structures, can be found in~\cite{bor:con:her:20}.} Superposition of snapshots reproduces, in the snapshot obtained by superposition, the original formal specifications. This feature is naturally represented axiomatically  just by gathering together the axioms of the logics being combined, plus some specific (and self-evident) bridge principles. 

Finally, concerning the bridge principles considered for \INT, they were motivated by the conceptual nature of the logics being combined, and so they are not compulsory. In other words, it would be perfectly feasible to combine \pyn\ with \Tm\ (as well as its axiomatic extensions to be considered in the next section) without restricting the snapshots, obtaining so an 8-valued Nmatrix. If, in addition conjunction and disjunction were considered as primitive in the signature of \Tm\ (hence, axioms 
\axC-\axD\ and \axNN-\axNde\ would be included in the Hilbert calculus for \Tm\ in the signature $\{\land, \vee, \to, \sneg,\square\}$) then the combination of \Tm\ and \pyn\ without restrictions on the snapshots would be axiomatized by the constrained fibring of the Hilbert calculi of both logics by sharing the signature $\{\land,\vee, \to\}$, without adding any bridge axiom (see~\cite[Section~2.2]{car:con:gab:gou:ser:08}). That is, the axiomatization of the combined logic is obtained just by joining both calculi and by identifying $\{\land,\vee, \to\}$. The resulting logic is still a conservative expansion of both logics, and so the combination of swap structures by superposition of snapshots (without restrictions)  corresponds, in the considered examples, to the constrained fibring of these logics by sharing $\{\land,\vee, \to\}$. Soundness and completeness of the obtained 8-valued Nmatrix w.r.t. the proposed axiomatization can be proven as above. This shows that the method guarantees the preservation of soundness and completeness.

\section{Combining \pyn\ with other Ivlev-like modal logics} \label{other-logics}

As mentioned in the introduction,  in~\cite{con:cerro:per:15} were considered, besides \Tm, other 4-valued Ivlev-like modal logics corresponding to Ivlev-like versions of  modal systems {\bf KT4}, {\bf KT45} and  {\bf KTB}, called respectively 
\Tqm, \Tqsm, and \TBm.\footnote{\Tm\ and \Tqsm\ were considered by Ivlev under the names $Sa^+$ and $Sb^+$, respectively.} Each of this logics is semantically characterized by a single 4-valued Nmatrix obtained from the one of  \Tm\ by modifying the definition of the multioperator interpreting $\square$. In turn, they are axiomatized in a natural way from an axiomatization of \Tm. Specifically, the logic \Tqm\ is defined by adding to the Hilbert calculus of \Tm\ axiom schema\\[2mm]
$\begin{array}{ll}
    \axSq & \square \varphi \to \square \square \varphi\\[2mm]
\end{array}$

\noindent
The logic \Tqsm\ is defined by adding to \Tqm\ axiom schema\\[2mm]
$\begin{array}{ll}
   \axSc & \sneg \square \sneg \square \varphi \to\square \varphi\\[2mm]
\end{array}$

\noindent
The logic \TBm\ is defined by adding to \Tm\ axiom schema\\[2mm]
$\begin{array}{ll}
    \axB & \sneg \square \sneg \square \varphi \to \varphi\\[2mm]
\end{array}$

\noindent
It can be also considered  the logic \TqBm, obtained from \TBm\ by adding axiom schema \axSq, as well as the logic \Tsm, obtained by adding to \Tm\ axiom schema \axSc. In~\cite{con:gol:19} were introduced  swap structures over Boolean algebras characterizing logics \Tm\ (recall Definition~\ref{NMatIvlev}), \Tqm, \Tqsm, and \TBm. The only change  to be made in the multialgebra  $\mathcal{B}_{\A}$ for \Tm\  concerns the interpretation of $\square$, to be respectively defined as follows (here, we are also introducing the interpretation of $\square$ in \TqBm\ and \Tsm):

\begin{itemize}
\item[-] In \Tqm: $\square_4 z = \{u\in  \mathbb{B}_{\mathcal{A}} \ : \ u_1=u_2=z_2\}$;
\item[-] In \Tqsm: $\square_{45} z = \{u\in  \mathbb{B}_{\mathcal{A}} \ : \ u_1=u_2=z_2 \ \mbox{ and } \ u_3 = \sneg z_2\}$;
\item[-] In \TBm: $\square_B z = \{u\in  \mathbb{B}_{\mathcal{A}} \ : \ u_1=z_2 \ \mbox{ and } \ u_3 \geq \sneg z_1\}$;
\item[-] In \TqBm: $\square_{4B} z = \{u\in  \mathbb{B}_{\mathcal{A}} \ : \ u_1=u_2=z_2 \ \mbox{ and } \ u_3 \geq \sneg z_1\}$;
\item[-] In \Tsm: $\square_5 z = \{u\in  \mathbb{B}_{\mathcal{A}} \ : \ u_1=z_2 \ \mbox{ and } \ u_3 = \sneg z_2\}$.
\end{itemize}
In particular, for ${\bf L} \in \{\Tqm, \Tqsm, \TBm, TqBm, \Tsm\}$, the  4-valued Nmatrix $\mathcal{M}_{\bf L}$ for {\bf L} with domain $A_m$ (i.e., defined over the 2-element Boolean algebra $\A_2$) is obtained from $\mathcal{M}_{T}$ by replacing respectively the interpretation of $\square$ by the following:
{\scriptsize
\begin{center}
\begin{tabular}{|c|c|c|c|c|c|} \hline
$\quad$   & $\square_4$  & $\square_{45}$  & $\square_B$  & $\square_{4B}$  & $\square_5$\\[1mm]
 \hline
    $T$     & $T$  & $T$ & $\{T, t\}$ & $T$ & $\{T, t\}$ \\[1mm] \hline
     $t$      & $\{F, f\}$ & $F$ & $\{F, f\}$ & $\{F, f\}$ & $F$\\[1mm] \hline
     $f$     & $\{F, f\}$ & $F$ & $F$ & $F$ & $F$ \\[1mm] \hline
     $F$    & $\{F, f\}$  & $F$ & $F$ & $F$ & $F$\\[1mm] \hline
\end{tabular}
\end{center}}

\

\noindent
By adapting the  swap structures above, and by the same methodology used for defining \INT, it can be obtained the combined logic \INTq, \INTqs, \INTB, \INTqB\ and \INTs\ based, respectively, on \Tqm, \Tqsm, \TBm, \TqBm, and \Tsm. If {\bf L} is one of such combined logics then the Nmatrix  $\mathcal{M}_{\bf L}= \langle \textsc{B}_{T}, \textrm{D}, \mathcal{O}_{\bf L}\rangle$  for {\bf L} over $\Sigma$ is defined as $\mathcal{M}_{\INT}$, where the only difference is with respect to the interpretation of $\square$, which is given as follows:
\begin{itemize}
\item[-] In \INTq: $\tilde{\square}_4 z = \{u\in \textsc{B}_{T} \ : \ u_1=u_2=z_2\}$;
\item[-] In \INTqs: $\tilde{\square}_{45} z = \{u\in \textsc{B}_{T} \ : \ u_1=u_2=z_2 \ \mbox{ and } \ u_3 = \sneg z_2\}$;
\item[-] In \INTB: $\tilde{\square}_B z = \{u\in \textsc{B}_{T} \ : \ u_1=z_2 \ \mbox{ and } \ u_3 \geq \sneg z_1\}$;
\item[-] In \INTqB: $\tilde{\square}_{4B} z = \{u\in \textsc{B}_{T} \ : \ u_1=u_2=z_2 \ \mbox{ and } \ u_3 \geq \sneg z_1\}$;
\item[-] In \INTs: $\tilde{\square}_5 z = \{u\in \textsc{B}_{T} \ : \ u_1=z_2 \ \mbox{ and } \ u_3 = \sneg z_2\}$.
\end{itemize}

\noindent In a more concise notation:

\noindent
(7) $\tilde{\square}_4 z = (z_2,z_2,\_ ,\_ )$ in \INTq;\\[1mm]
(8) $\tilde{\square}_{45} z = (z_2,z_2,\sneg z_2 ,\_ )$ in \INTqs;\\[1mm]
(9) $\tilde{\square}_B z = (z_2,\_,\sneg z_1 \leq \_,\_ )$ in \INTB;\\[1mm]
(10) $\tilde{\square}_{4B} z = (z_2,z_2,\sneg z_1 \leq \_,\_ )$ in \INTqB;\\[1mm]
(11) $\tilde{\square}_5 z = (z_2,\_,\sneg z_2 ,\_ )$ in \INTs.

\noindent This produces the following (non-deterministic) truth-tables for the necessitation operator in the logic \INTq, \INTqs, \INTB, \INTqB\ and \INTs, respectively:

{\scriptsize
\begin{center}
\begin{tabular}{|c|c|c|c|c|c|} \hline
$\quad$   & $\tilde{\square}_4$  & $\tilde{\square}_{45}$  & $\tilde{\square}_B$  & $\tilde{\square}_{4B}$  & $\tilde{\square}_5$\\[1mm]
 \hline
    $T_0$     & $T_0$  & $T_0$ & \textrm{D} & $T_0$ & \textrm{D}\\[1mm] \hline
     $t_0$      & \textrm{ND} & $F_1$ &\textrm{ND} & \textrm{ND}  & $F_1$\\[1mm] \hline
     $t_1$     & \textrm{ND}  & $F_1$ & \textrm{ND}  & \textrm{ND} & $F_1$   \\[1mm] \hline
     $f_0$     & \textrm{ND} & $F_1$ & $F_1$ & $F_1$ & $F_1$ \\[1mm] \hline
     $f_1$    & \textrm{ND} & $F_1$ & $F_1$ & $F_1$ & $F_1$ \\[1mm] \hline
     $F_1$    & \textrm{ND}  & $F_1$ & $F_1$ & $F_1$ & $F_1$\\[1mm] \hline
\end{tabular}
\end{center}}

\noindent From now on, let $\mathbb{L}\defi  \{\INTq, \INTqs, \INTB, \INTqB, \INTs\}$.

\begin{theorem} [Soundness]
Let ${\bf L} \in \mathbb{L}$ and let $\Gamma \cup \{\varphi\}$ be a set of formulas. Then, $\Gamma \vdash_{\bf L} \varphi$ implies that $\Gamma \models_{\bf L} \varphi$.
\end{theorem}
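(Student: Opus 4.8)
The plan is to follow verbatim the strategy of the proof of Theorem~\ref{sound-INT}. Since modus ponens preserves designation (a value is designated precisely when its first coordinate is $1$), it suffices to verify that, for every ${\bf L} \in \mathbb{L}$, every instance $\delta$ of an axiom of ${\bf L}$ and every valuation $v$ over $\mathcal{M}_{\bf L}$, one has $v(\delta)_1=1$; the claim then follows by induction on the length of a derivation. Each ${\bf L} \in \mathbb{L}$ is obtained from \INT\ by adjoining some of the axioms \axSq, \axSc, \axB, and the Nmatrix $\mathcal{M}_{\bf L}$ differs from $\mathcal{M}_{\INT}$ only in the interpretation of $\square$; the domain $\textsc{B}_{T}$, the designated set $\textrm{D}$, and all the remaining multioperators are left unchanged.

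First I would observe that the axioms inherited from \INT\ remain valid with no re-checking at all. Inspecting the proof of Theorem~\ref{sound-INT}, the only properties of $\square$ used there are $v(\square\psi)_1=v(\psi)_2$ and $v(\square\sneg\psi)_1=v(\psi)_3$, the latter being a consequence of the former together with the specification $\tilde{\sneg}z=(\sneg z_1,z_3,z_2,\_)$. Since all five modified operators $\tilde{\square}_4, \tilde{\square}_{45}, \tilde{\square}_B, \tilde{\square}_{4B}, \tilde{\square}_5$ keep $u_1=z_2$ as the specification of their first coordinate, the identity $v(\square\psi)_1=v(\psi)_2$ persists; and since $\textsc{B}_{T}$ is untouched, the argument for every \INT-axiom transfers unchanged.

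It then remains to check the three new axioms by direct computation on first coordinates, using $v(\square\chi)_1=v(\chi)_2$ throughout. For \axSq\ one computes $v(\square\varphi\to\square\square\varphi)_1=v(\varphi)_2\Rightarrow v(\square\varphi)_2$, so validity amounts to $v(\varphi)_2\leq v(\square\varphi)_2$; this holds in \INTq, \INTqs\ and \INTqB, where the second coordinate of $\tilde{\square}$ is $z_2$, so in fact equality holds. For \axSc, using $v(\sneg\square\sneg\square\varphi)_1=\sneg v(\square\varphi)_3$ (which follows from $v(\sneg\chi)_2=v(\chi)_3$ applied twice), validity amounts to $v(\varphi)_2\sqcup v(\square\varphi)_3=1$; this holds in \INTqs\ and \INTs, where $v(\square\varphi)_3=\sneg v(\varphi)_2$, yielding excluded middle in $\A_2$. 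For \axB, validity amounts to $v(\varphi)_1\sqcup v(\square\varphi)_3=1$; this holds in \INTB\ and \INTqB, where the specification $\sneg z_1\leq\_$ of the third coordinate gives $v(\square\varphi)_3\geq\sneg v(\varphi)_1$, whence $v(\varphi)_1\sqcup v(\square\varphi)_3\geq v(\varphi)_1\sqcup\sneg v(\varphi)_1=1$.

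I do not expect a genuine obstacle here: the argument is essentially a bookkeeping exercise over the six two-valued snapshots. The only point that requires care is the second paragraph, namely confirming that the \INT-soundness proof really depends on $\square$ solely through $v(\square\psi)_1=v(\psi)_2$ (and the derived $v(\square\sneg\psi)_1=v(\psi)_3$); once this is granted, no modal axiom of \INT\ needs to be revisited, and only \axSq, \axSc\ and \axB\ must be treated, each in a single line as above.
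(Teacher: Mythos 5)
Your proof is correct, and the verification of the three new axioms is essentially identical to the paper's: for \axSq\ the chain $v(\square\varphi)_1=v(\varphi)_2=v(\square\varphi)_2=v(\square\square\varphi)_1$, for \axSc\ the identity $v(\sneg\square\sneg\square\varphi)_1=\sneg v(\square\varphi)_3=\sneg\sneg v(\varphi)_2=v(\square\varphi)_1$, and for \axB\ the inequality $\sneg v(\square\varphi)_3\leq v(\varphi)_1$ coming from the specification $v(\square\varphi)_3\geq\sneg v(\varphi)_1$, with the \INTqs\ and \INTqB\ cases obtained by combination. The one place where you diverge is the transfer of the inherited \INT-axioms: the paper dispatches this in a single line by observing that $\mathcal{M}_{\bf L}$ is a subNmatrix of $\mathcal{M}_{\INT}$ (each $\tilde{\square}_X z\subseteq\tilde{\square}z$, with domain and designated set unchanged), so every valuation over $\mathcal{M}_{\bf L}$ is a valuation over $\mathcal{M}_{\INT}$ and Theorem~\ref{sound-INT} applies directly; you instead re-audit the proof of Theorem~\ref{sound-INT} and argue that it reads $\square$ only through the first-coordinate identity $v(\square\psi)_1=v(\psi)_2$ (and its consequence $v(\square\sneg\psi)_1=v(\psi)_3$), which all five modified multioperators preserve. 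Your claim is true --- it holds precisely because no modal axiom of \INT\ nests modalities, so the second and third coordinates of $\square$-outputs are never consulted --- but the subNmatrix argument buys you this for free, without any dependence on how the earlier proof happens to be organized, and is the more robust route if the base system were later extended with axioms involving iterated $\square$.
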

\begin{proof} Given that  $\mathcal{M}_{\bf L}$ is a subNmatrix of $\mathcal{M}_{\INT}$, all the axioms of \INT\ are valid in $\mathcal{M}_{\bf L}$, by Theorem~\ref{sound-INT}. Hence, it is enough to prove that each instance of every modal axiom added to \Tm\ to define {\bf L} is valid in $\mathcal{M}_{\bf L}$. In what follows, let $v$ be a valuation over $\mathcal{M}_{\bf L}$.\\
\INTq: Let $\delta=\square \varphi \to \square \square \varphi$ be an instance of axiom \axSq. Then, $v(\square \varphi)_1=v(\varphi)_2=v(\square \varphi)_2=v(\square \square \varphi)_1$. Hence, $v(\delta)_1=1$.\\
\INTs: Let $\delta=\sneg \square \sneg \square \varphi \to\square \varphi$ be an instance of axiom \axSc. Then, $v(\sneg \square \sneg \square \varphi)_1=\sneg v(\square \sneg \square \varphi)_1= \sneg v(\square \varphi)_3=\sneg \sneg v(\varphi)_2= v(\varphi)_2 = v(\square\varphi)_1$. This shows that $v(\delta)_1=1$.\\
\INTB: Let $\delta=\sneg \square \sneg \square \varphi \to \varphi$ be an instance of axiom \axB. Then, $v(\sneg \square \sneg \square \varphi)_1=\sneg v(\square \sneg \square \varphi)_1=\sneg v(\square \varphi)_3 \leq v(\varphi)_1$, since $v(\square \varphi)_3 \geq \sneg v(\varphi)_1$. From this, $v(\delta)_1=1$.\\
\INTqs\ and \INTqB: The proof is obtained  by combining the previous cases. 

This completes the proof.
\end{proof}

\noindent Now, let us concentrate on the proof of completeness.
For $a \in  \textsc{B}_{T}$ let $\theta_a(p)$ be the formula defined as in Proposition~\ref{equiv-compo}. Then, the following can be proved in an analogous way as it was done for \INT:

\begin{proposition} \label{ident-snap-GEN}
Let ${\bf L} \in \mathbb{L}$. Then the formulas $\theta_a(p)$, for $a \in  \textsc{B}_{T}$, satisfy the following property: for every formula $\varphi$ and for every valuation $v$ over the Nmatrix $\mathcal{M}_{\bf L}$, $v(\theta_a(\varphi)) \in \textrm{D}$ iff $v(\varphi)=a$.
\end{proposition}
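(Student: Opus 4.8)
The plan is to reduce the statement to the computation already carried out for \INT\ in Proposition~\ref{ident-snap}. Recall that, for each ${\bf L} \in \mathbb{L}$, the Nmatrix $\mathcal{M}_{\bf L}$ is a subNmatrix of $\mathcal{M}_{\INT}$ differing from it only in the interpretation of $\square$; in particular the multioperators $\tilde{\land}, \tilde{\lor}, \tilde{\to}, \tilde{\sneg}, \tilde{\neg}$ are literally the same. Consequently the non-modal columns of the truth-table used in the proof of Proposition~\ref{ident-snap} (those for $\neg p$, $\sneg p$ and $\sneg\neg p$) are unchanged in $\mathcal{M}_{\bf L}$, and the designation fact $v(\psi \land \chi) \in \textrm{D}$ iff $v(\psi) \in \textrm{D}$ and $v(\chi) \in \textrm{D}$ still holds, since $\tilde{\land}$ computes $z_1 \sqcap w_1$ on the first coordinate. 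Thus it suffices to re-establish the four columns involving $\square$, namely those for $\square p$, $\sneg\square p$, $\square\sneg p$ and $\sneg\square\sneg p$.

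The key (and only substantive) step is the uniform observation that, for every ${\bf L} \in \mathbb{L}$ and every $z \in \textsc{B}_{T}$, one has $\tilde{\square}_{\bf L}\, z \subseteq \textrm{D}$ when $z = T_0$ and $\tilde{\square}_{\bf L}\, z \subseteq \textrm{ND}$ when $z \neq T_0$. This is read off directly from the five truth-tables for $\tilde{\square}_{\bf L}$ displayed above the statement: in each of them the row for $T_0$ lies in $\textrm{D}$, while every other row lies in $\textrm{ND}$ (being either $\textrm{ND}$ itself or the single value $F_1$). From this we immediately get $v(\square\varphi) \in \textrm{D}$ iff $v(\varphi) = T_0$, which fixes the $\square p$ and $\sneg\square p$ columns exactly as in Proposition~\ref{ident-snap}.

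For the remaining two columns I would combine this fact with the (unchanged) truth-table of $\tilde{\sneg}$: since $v(\sneg\varphi) \in \tilde{\sneg}\, v(\varphi)$ and $T_0 \in \tilde{\sneg}\, z$ holds precisely when $z = F_1$, we obtain $v(\square\sneg\varphi) \in \textrm{D}$ iff $v(\sneg\varphi) = T_0$ iff $v(\varphi) = F_1$, which pins down the $\square\sneg p$ and $\sneg\square\sneg p$ columns. At this point every column of the relevant table coincides with the one in the proof of Proposition~\ref{ident-snap}.

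Finally, with all columns reconstructed, the verification that $v(\theta_a(\varphi)) \in \textrm{D}$ iff $v(\varphi) = a$ for each $a \in \textsc{B}_{T}$ proceeds exactly as for \INT, by unfolding the definition of $\theta_a(\varphi)$ from Proposition~\ref{equiv-compo} and repeatedly applying the $\land$-designation fact. The main obstacle is entirely concentrated in the uniform $\square$-step above; once it is checked, the argument is mechanical and identical to the \INT\ case, so no genuinely new computation is required.
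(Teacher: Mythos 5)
Your proposal is correct and is essentially the paper's own argument: the paper gives no explicit proof, stating only that the result ``can be proved in an analogous way as it was done for \INT'', and your reduction --- rechecking just the $\square$-columns (noting that in every $\mathcal{M}_{\bf L}$ the first coordinate of $\tilde{\square}_{\bf L}\,z$ is $z_2$, so $v(\square\varphi)\in\textrm{D}$ iff $v(\varphi)=T_0$, and $v(\square\sneg\varphi)\in\textrm{D}$ iff $v(\varphi)=F_1$) while reusing the unchanged non-modal columns --- is precisely that analogy, carried out correctly. As a minor observation, your own remark that each $\mathcal{M}_{\bf L}$ is a subNmatrix of $\mathcal{M}_{\INT}$ with the same universe and designated set already implies that every valuation over $\mathcal{M}_{\bf L}$ is a valuation over $\mathcal{M}_{\INT}$, so Proposition~\ref{ident-snap} applies verbatim and even the column recheck is dispensable.
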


\begin{proposition}  \label{Delta-satT-GEN}
Let ${\bf L} \in \mathbb{L}$, and let $\Delta$ be a $\delta$-saturated set in {\bf L}. Then, $\Delta$  is a closed theory in {\bf L} and it satisfies  properties (2)-(21) of Proposition~\ref{Delta-satT}, plus the following:\\
(22) $\square\square\psi \in\Delta$ iff $\square \psi \in \Delta$ (if ${\bf L}= \INTq$).\\
(23) $\square\sneg\square\psi \in\Delta$ iff $\sneg \square \psi \in \Delta$ (if ${\bf L}= \INTs$).\\
(24) If $\psi \notin\Delta$ then $\square\sneg \square \psi \in \Delta$ (if ${\bf L}= \INTB$).\\
(25) $\square\square\psi \in\Delta$ iff $\square \psi \in \Delta$; and $\square\sneg\square\psi \in\Delta$ iff $\sneg \square \psi \in \Delta$  (if ${\bf L}= \INTqs$).\\
(26) If $\psi \notin\Delta$ then $\square\sneg \square \psi \in \Delta$; and $\square\square\psi \in\Delta$ iff $\square \psi \in \Delta$ (if ${\bf L}= \INTqB$).
\end{proposition}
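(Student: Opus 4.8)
The plan is to exploit the fact that each calculus ${\bf L} \in \mathbb{L}$ is an axiomatic extension of the calculus for \INT, obtained by adjoining one or two of the modal axioms \axSq, \axSc, \axB. Consequently, every derivation witnessing items~(1)--(21) of Proposition~\ref{Delta-satT} remains valid in ${\bf L}$, since those arguments appeal only to the axioms and rules of \INT\ together with Proposition~\ref{propbaslog}, all of which persist in ${\bf L}$; moreover $\Delta$ is still a closed theory by Remarks~\ref{saturated-rem}(1), as each ${\bf L}$ is Tarskian and finitary. Thus I would first dispose of the transfer of (1)--(21) in a single sentence, and then concentrate on the genuinely new items~(22)--(26), each tied to the specific axiom(s) defining the relevant logic. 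Throughout I would freely use that $\psi \in \Delta$ iff $\Delta \vdash_{\bf L} \psi$, so that membership in $\Delta$ is closed under \MP\ applied to theorems.

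For item~(22) (logic \INTq), axiom \axSq\ gives the theorem $\square\psi \to \square\square\psi$, while \axT\ instantiated at $\square\psi$ gives $\square\square\psi \to \square\psi$; both being theorems, closure of $\Delta$ yields the stated equivalence. For item~(23) (logic \INTs) one direction is again immediate from \axT\ (namely $\square\sneg\square\psi \to \sneg\square\psi$); for the converse I would derive the theorem $\sneg\square\psi \to \square\sneg\square\psi$ from \axSc\ $=\sneg\square\sneg\square\varphi \to \square\varphi$ by classical contraposition followed by elimination of the resulting double classical negation $\sneg\sneg\square\sneg\square\varphi \equiv \square\sneg\square\varphi$. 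For item~(24) (logic \INTB) I proceed analogously: the classical contrapositive of \axB\ $=\sneg\square\sneg\square\varphi \to \varphi$, after double-negation elimination, is the theorem $\sneg\psi \to \square\sneg\square\psi$; then, assuming $\psi \notin \Delta$, item~(2) gives $\sneg\psi \in \Delta$, whence $\square\sneg\square\psi \in \Delta$ by \MP\ and closure. Items~(25) and~(26) then follow at once, since \INTqs\ contains both \axSq\ and \axSc\ (so both~(22) and~(23) hold) and \INTqB\ contains both \axSq\ and \axB\ (so both~(22) and~(24) hold).

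The argument is essentially mechanical, and the only point requiring any care is the handling of the classical negation $\sneg$ in items~(23) and~(24): there the needed theorems are not the added axioms themselves but their classical contrapositives, and one must invoke the involutivity of $\sneg$ (i.e.\ that $\sneg\sneg\chi \sse \chi$ is a theorem) to rewrite $\sneg\sneg\square\sneg\square\varphi$ as $\square\sneg\square\varphi$. Both contraposition and double-negation elimination are legitimate because \INT\ contains \cpl\ over $\{\to,\sneg\}$ (as noted after Definition~\ref{ivev-sys}), and no analogous subtlety arises for the paraconsistent negation $\neg$, which does not occur in any of the new axioms.
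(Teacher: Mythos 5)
Your proof is correct and follows essentially the same route as the paper: item~(22) from \axT\ and \axSq, items~(23) and~(24) via classical contraposition of \axSc\ and \axB\ followed by double-negation elimination in \cpl, and items~(25)--(26) by combining the previous cases. Your explicit remarks on the transfer of items~(1)--(21) and on the handling of $\sneg\sneg$ only make precise what the paper leaves implicit.
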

\begin{proof}
Item (22) is immediate from axioms \axT\ and \axSq. For item~(23) observe that $\square\sneg\square\psi \in\Delta$ implies that $\sneg \square \psi \in \Delta$, by \axT. Conversely, by contraposition on axiom \axSc\ it follows that  $\sneg \square \psi \to \sneg\sneg \square\sneg \square \psi \in \Delta$ and then, by \cpl, $\sneg \square \psi \to \square\sneg \square \psi \in \Delta$. Hence, if  $\sneg \square \psi \in \Delta$ then $\square\sneg \square \psi \in \Delta$. For item~(24) observe that, by contraposition and axiom \axB, 
$\sneg\psi \to \sneg\sneg\square\sneg \square \psi \in \Delta$. By \cpl, $\sneg\psi \to \square\sneg \square \psi \in \Delta$. From this, if   $\psi \notin\Delta$ then $\square\sneg \square \psi \in \Delta$. Items~(25) and~(26) are proved by suitably combining the previous ones.
\end{proof}

\begin{proposition} [Truth lemma] \label{TruthL-GEN} Let ${\bf L} \in \mathbb{L}$, and let $\Delta$ be a $\delta$-saturated set in {\bf L}. Let $v_\Delta:For(\Sigma) \to B$ be a function defined as follows: $v_\Delta(\psi)=a$ iff $\theta_a(\psi) \in \Delta$. Then, $v_\Delta$ is a well-defined function, and it is a valuation over the Nmatrix $\mathcal{M}_{\bf L}$ such that, for every formula $\psi$, $v_\Delta(\psi) \in \textrm{D}$ iff $\psi \in \Delta$.
\end{proposition}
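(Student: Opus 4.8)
The plan is to follow verbatim the proof of the Truth Lemma for \INT\ (Proposition~\ref{TruthL}), exploiting the fact that, for every ${\bf L} \in \mathbb{L}$, the Nmatrix $\mathcal{M}_{\bf L}$ shares with $\mathcal{M}_{\INT}$ the same domain $\textsc{B}_{T}$, the same designated set $\textrm{D}$, and the same interpretation of all the connectives in $\Sigma \setminus \{\square\}$; the only difference is the multioperator for $\square$. First I would observe that $v_\Delta$ is well defined exactly as before, since Proposition~\ref{Delta-satT-GEN} guarantees that property~(21) of Proposition~\ref{Delta-satT} still holds, so that for each $\psi$ there is a unique $a$ with $\theta_a(\psi) \in \Delta$. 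Next, the clauses establishing $v_\Delta(\sneg\varphi) \in \tilde{\sneg}\, v_\Delta(\varphi)$, $v_\Delta(\neg\varphi) \in \tilde{\neg}\, v_\Delta(\varphi)$, $v_\Delta(\varphi \land \psi) \in v_\Delta(\varphi) \mathbin{\tilde{\land}} v_\Delta(\psi)$, and likewise for $\lor$ and $\to$, are literally the ones already proved in Proposition~\ref{TruthL}, because they rely only on the saturation properties~(2)--(20), which hold by Proposition~\ref{Delta-satT-GEN}, together with the interpretations of those connectives, which are unchanged. Hence the whole argument reduces to re-checking the single clause for $\square$.

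For that clause, set $a = v_\Delta(\varphi)$ and $b = v_\Delta(\square\varphi)$, and recall that $b$ is automatically a valid snapshot of $\textsc{B}_{T}$, so the coordinates marked ``$\_$'' in the specification of $\tilde{\square}_{\bf L}$ require no verification. In every case $\gamma_1(\square\varphi) = \square\varphi = \gamma_2(\varphi)$ yields $b_1 = a_2$, which handles the first coordinate common to all five specifications. It then remains to match the extra coordinates that each $\tilde{\square}_{\bf L}$ pins down, and this is precisely what the new saturation properties~(22)--(26) are designed to provide. Concretely: for \INTq, where $\tilde{\square}_4 z = (z_2,z_2,\_,\_)$, I would use $\gamma_2(\square\varphi) = \square\square\varphi$ together with property~(22) to get $b_2 = a_2$; for \INTs, where $\tilde{\square}_5 z = (z_2,\_,\sneg z_2,\_)$, I would use $\gamma_3(\square\varphi) = \square\sneg\square\varphi$ together with property~(23) and Proposition~\ref{Delta-satT}(2) to get $b_3 = \sneg a_2$; for \INTB, where $\tilde{\square}_B z = (z_2,\_,\sneg z_1 \leq \_,\_)$, property~(24) gives that $a_1 = 0$ forces $b_3 = 1$, i.e. $\sneg a_1 \leq b_3$; and for \INTqs\ and \INTqB\ I would simply combine these computations, using properties~(25) and~(26), respectively.

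Putting the clauses together shows that $b \in \tilde{\square}_{\bf L}(a)$ in each logic, so $v_\Delta$ is a valuation over $\mathcal{M}_{\bf L}$; the final equivalence $v_\Delta(\psi) \in \textrm{D}$ iff $\psi \in \Delta$ then follows from the definition of $v_\Delta$, exactly as in Proposition~\ref{TruthL}. I do not expect a serious obstacle here: the genuine work was already carried out in setting up Proposition~\ref{Delta-satT-GEN} so that properties~(22)--(26) correspond one-to-one to the non-free coordinates of the respective $\tilde{\square}_{\bf L}$. The only point demanding a little care is the bookkeeping, namely making sure that, for each ${\bf L}$, every coordinate that the specification of $\tilde{\square}_{\bf L}$ determines is actually checked, and that the translation between the membership statements ``$\chi \in \Delta$'' and the Boolean values $a_i, b_i$ (through Proposition~\ref{Delta-satT}(2) for the entries $\sneg z_2$ and $\sneg z_1$) is carried out consistently.
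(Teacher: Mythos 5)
Your proposal is correct and takes essentially the same route as the paper's own proof: both reduce the problem, via Proposition~\ref{TruthL}, to verifying the single clause $v_\Delta(\square\varphi)\in\tilde{\square}_{\bf L}\,v_\Delta(\varphi)$, and both discharge it logic by logic using properties~(22)--(26) of Proposition~\ref{Delta-satT-GEN} exactly as you indicate ($b_2=a_2$ from~(22) for \INTq, $b_3=\sneg a_2$ from~(23) for \INTs, $\sneg a_1\leq b_3$ from~(24) for \INTB, and combinations for \INTqs\ and \INTqB). The only difference is presentational: you make explicit a point the paper leaves tacit, namely that the coordinates marked `$\_$' need no checking because $v_\Delta(\square\varphi)$ is automatically a valid snapshot in $\textsc{B}_{T}$.
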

\begin{proof} By Proposition~\ref{TruthL}, it is enough to prove that $v_\Delta(\square \varphi) \in \square_{\bf L} v_\Delta(\varphi)$, for each  {\bf L}. Thus, let  $v_\Delta(\varphi)=a$ and $v_\Delta(\square \varphi)=b$.
\begin{itemize}
\item[-] ${\bf L}=\INTq$. As in the proof of Proposition~\ref{TruthL} it follows that $\gamma_1(\square \varphi)=\square \varphi=\gamma_2(\varphi)$, hence $b_1= a_2$. Since $\gamma_2(\square\varphi)=\square\square\varphi$ and $\gamma_2(\varphi)= \square\varphi$ then, by Proposition~\ref{Delta-satT-GEN}(22), $b_2=a_2$.
\item[-] ${\bf L}=\INTs$. By the previous item, $b_1=a_2$.  Since $\gamma_3(\square \varphi)= \square \sneg \square\varphi$ and $\gamma_2(\varphi)= \square\varphi$ then $b_3= \sneg a_2$, by Proposition~\ref{Delta-satT-GEN}(23).
\item[-] ${\bf L}=\INTB$. As above, $b_1=a_2$. Since $\gamma_3(\square \varphi)= \square \sneg\square\varphi$ and $\gamma_1(\varphi)=\varphi$ then $\sneg a_1 \leq b_3$, by Proposition~\ref{Delta-satT-GEN}(24).
\item[-] ${\bf L}=\INTqB$. By the cases \INTq\ and \INTB\ it follows that  $b_1=b_2=a_2$ and  $\sneg a_1 \leq b_3$.
\item[-] ${\bf L}=\INTqs$. By the cases \INTq\ and \INTs\ it follows that $b_1=b_2=a_2$ and  $b_3= \sneg a_2$.
\end{itemize}
Then, for every {\bf L} as above it follows that  $v_\Delta(\square \varphi) \in \square_{\bf L} v_\Delta(\varphi)$, for every $\varphi$. Hence, the function $v_\Delta$ is a valuation over the Nmatrix $\mathcal{M}_{\bf L}$ such that, by definition, $v_\Delta(\psi) \in \textrm{D}$ iff $\psi \in \Delta$, for every formula $\psi$.
\end{proof}

\begin{theorem} [Completeness]
Let ${\bf L} \in \mathbb{L}$ and let $\Gamma \cup \{\varphi\}$ be a set of formulas. Then, $\Gamma \models_{\bf L} \varphi$ implies that $\Gamma \vdash_{\bf L} \varphi$.
\end{theorem}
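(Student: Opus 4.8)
The plan is to repeat, essentially verbatim, the completeness argument already given for \INT\ in the previous section, since all the work specific to the extensions has already been absorbed into the Truth lemma (Proposition~\ref{TruthL-GEN}). I would argue by contraposition: assuming $\Gamma \nvdash_{\bf L} \varphi$, I would construct a valuation over $\mathcal{M}_{\bf L}$ that designates every formula of $\Gamma$ but not $\varphi$, thereby witnessing $\Gamma \not\models_{\bf L} \varphi$.

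First I would observe that the Hilbert calculus for each ${\bf L} \in \mathbb{L}$ is Tarskian and finitary, since it is obtained from the one for \INT\ merely by adding axiom schemas (\axSq, \axSc\ and/or \axB), keeping \MP\ as the single inference rule. Hence the Lindenbaum--\L o\'s result recalled in Remarks~\ref{saturated-rem}(2) applies, and from $\Gamma \nvdash_{\bf L} \varphi$ I obtain a $\varphi$-saturated set $\Delta$ in ${\bf L}$ with $\Gamma \subseteq \Delta$.

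Next I would invoke Proposition~\ref{TruthL-GEN} directly: the function $v_\Delta$ defined by $v_\Delta(\psi)=a$ iff $\theta_a(\psi) \in \Delta$ is a well-defined valuation over the Nmatrix $\mathcal{M}_{\bf L}$ satisfying $v_\Delta(\psi) \in \textrm{D}$ iff $\psi \in \Delta$ for every formula $\psi$. Since $\Gamma \subseteq \Delta$, this yields $v_\Delta(\gamma) \in \textrm{D}$ for all $\gamma \in \Gamma$; and since $\Delta$ is $\varphi$-saturated (hence a closed theory with $\Delta \nvdash_{\bf L} \varphi$), we have $\varphi \notin \Delta$, whence $v_\Delta(\varphi) \notin \textrm{D}$. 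Thus $v_\Delta$ witnesses $\Gamma \not\models_{\bf L} \varphi$, which is the contrapositive of the statement.

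There is essentially no obstacle remaining at this stage: the genuine difficulty was the verification of the homomorphism clause for $\square$ in each of the five Nmatrices, and this is precisely what Proposition~\ref{TruthL-GEN} handles, via the additional closure properties (22)--(26) of Proposition~\ref{Delta-satT-GEN} corresponding to the newly added modal axioms. The only points requiring care are that $v_\Delta$ is built from the same characterizing formulas $\theta_a(p)$ used for \INT, and that Proposition~\ref{ident-snap-GEN} guarantees these still single out the six truth-values of $\textsc{B}_{T}$ in each $\mathcal{M}_{\bf L}$; granting this, the argument transfers uniformly to every ${\bf L} \in \mathbb{L}$ with no case-by-case recomputation.
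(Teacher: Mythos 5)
Your proposal is correct and follows essentially the same route as the paper: contraposition, Lindenbaum--\L o\'s to obtain a $\varphi$-saturated $\Delta \supseteq \Gamma$, and then the Truth lemma (Proposition~\ref{TruthL-GEN}) to produce the countervaluation $v_\Delta$ over $\mathcal{M}_{\bf L}$. Your added remarks---that each ${\bf L}$ remains Tarskian and finitary since only axiom schemas are added, and that $\varphi \notin \Delta$ follows from saturation---are exactly the details the paper leaves implicit.
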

\begin{proof}
Suppose that $\Gamma \nvdash_{\bf L} \varphi$. Then, by Remarks~\ref{saturated-rem}(2), there exists a $\varphi$-saturated set $\Delta$ in {\bf L} such that $\Gamma \subseteq \Delta$. Let $v_\Delta$ be the valuation over the Nmatrix $\mathcal{M}_{\bf L}$ as  in Proposition~\ref{TruthL-GEN}. Then, $v_\Delta(\psi) \in \textrm{D}$ for every $\psi \in \Gamma$, but $v_\Delta(\varphi) \notin \textrm{D}$. That is, $\Gamma \not\models_{\bf L} \varphi$.
\end{proof}

\noindent
Finally, by an easy adaptation of the corresponding proofs given in Section~\ref{conserva}, it can be proven that  the logics obtained by mixing the extensions of \Tm\ with \pyn\ can be seen, in fact, as obtained by a process of combination of logics: indeed, each of such combined systems conservatively expands its components.

\begin{theorem} [Conservativity]
Let ${\bf L} \in \mathbb{L}$. Then, {\bf L} conservatively expands \pyn\ and the Ivlev-like modal logics on which it is based. That is:\\
(1) For every set of formulas $\Gamma \cup \{\varphi\} \subseteq For(\Sigma_4)$: \  $\Gamma \vdash_{\bf L} \varphi$ \ iff \  $\Gamma \vdash_{\pyn} \varphi$.\\
(2) For every set of formulas $\Gamma \cup \{\varphi\}  \subseteq For(\Sigma_m)$:\\
\indent (2.1) $\Gamma \vdash_{\INTq} \varphi$ \ iff \  $\Gamma \vdash_{\Tqm} \varphi$.\\
\indent  (2.2) $\Gamma \vdash_{\INTqs} \varphi$ \ iff \  $\Gamma \vdash_{\Tqsm} \varphi$.\\
\indent  (2.3) $\Gamma \vdash_{\INTB} \varphi$ \ iff \  $\Gamma \vdash_{\TBm} \varphi$.\\
\indent  (2.4) $\Gamma \vdash_{\INTqB} \varphi$ \ iff \  $\Gamma \vdash_{\TqBm} \varphi$.\\
\indent  (2.5) $\Gamma \vdash_{\INTs} \varphi$ \ iff \  $\Gamma \vdash_{\Tsm} \varphi$.
\end{theorem}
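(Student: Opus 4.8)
The plan is to replay, for each ${\bf L} \in \mathbb{L}$, the two-part argument used for \INT\ in Theorems~\ref{Int-conserv-Pyn} and~\ref{conservaTm}, exploiting the fact that the only feature distinguishing $\mathcal{M}_{\bf L}$ from $\mathcal{M}_{\INT}$ is the interpretation of $\square$, while the domain $\textsc{B}_{T}$, the designated set $\textrm{D}$, and the multioperators for $\land, \lor, \to, \sneg, \neg$ all remain unchanged. This is precisely why the text above announces the result as an ``easy adaptation''.

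For part~(1), conservativity over \pyn, I would first observe that since $\square \notin \Sigma_4$, the reduct of $\mathcal{M}_{\bf L}$ to $\Sigma_4$ coincides \emph{verbatim} with the reduct of $\mathcal{M}_{\INT}$ to $\Sigma_4$. Consequently the embedding $h$ of Definition~\ref{expaB4}, Proposition~\ref{expB4sub}, and Lemma~\ref{lemma-M4} carry over without any change. The syntactic inclusion $\Gamma \vdash_{\pyn} \varphi \Rightarrow \Gamma \vdash_{\bf L} \varphi$ is immediate, as the Hilbert calculus for ${\bf L}$ contains all the axiom schemas and the inference rule of \pyn; the converse follows semantically, lifting any \pyn-valuation $v$ to an ${\bf L}$-valuation $\bar{v}$ exactly as in Theorem~\ref{Int-conserv-Pyn}.

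For part~(2), conservativity over the base modal logic ${\bf L}_0$ (one of \Tqm, \Tqsm, \TBm, \TqBm, \Tsm), the first step is to reprove the analogue of Proposition~\ref{expTmsub}: that $g(\mathcal{M}_{{\bf L}_0})$ is a subNmatrix of the reduct of $\mathcal{M}_{\bf L}$ to $\Sigma_m$, where $g(z_1,z_2,z_3)=(z_1,z_2,z_3,\sneg z_1)$ is the embedding of Definition~\ref{expaTm}. For $\to$ and $\sneg$ nothing changes; the sole new verification concerns $\square$. Comparing the formal specifications listed for ${\bf L}_0$ and for ${\bf L}$, one sees that the first three coordinates of $\tilde{\square}_{\bf L}$ agree with the three coordinates of $\square_{{\bf L}_0}$, while the fourth coordinate of $\tilde{\square}_{\bf L}$ is always free. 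Hence if $w \in \square_{{\bf L}_0}(z)$, then $g(w)$ copies the first three coordinates of $w$ (so it meets the constraints defining $\tilde{\square}_{\bf L}(g(z))$, since $(g(z))_i = z_i$ for $i \leq 3$), its fourth coordinate $\sneg w_1$ falls in the unconstrained position, and $g(w) \in \textsc{B}_{T}$; therefore $g(w) \in \tilde{\square}_{\bf L}(g(z))$, as required.

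With this in hand, the analogue of Lemma~\ref{lemma-Tm} extends any ${\bf L}_0$-valuation $v$ to an ${\bf L}$-valuation $\hat{v}$ with $\hat{v}(\varphi)=g(v(\varphi))$ for $\varphi \in For(\Sigma_m)$ and $\hat{v}(\delta) \in \textrm{D}$ iff $v(\delta) \in D_m$, by the same inductive construction. The syntactic inclusion $\Gamma \vdash_{{\bf L}_0} \varphi \Rightarrow \Gamma \vdash_{\bf L} \varphi$ holds because ${\bf L}$ is obtained from \INT\ by adjoining \emph{precisely} the same modal axiom schemas that yield ${\bf L}_0$ from \Tm\ (namely \axSq, \axSc, \axB\ in the appropriate combinations), the abbreviation discrepancies in \axKd, \axMu, \axDNu\ being absorbed as in Theorem~\ref{conservaTm} since ${\bf L}$ contains \cpl\ over $\{\land,\lor,\to,\sneg\}$. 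The converse is then obtained semantically via $\hat{v}$, exactly as in Theorem~\ref{conservaTm}. The only genuinely non-routine point is the commutation of $g$ with each of the five variant necessitation operators, and the preceding coordinate-by-coordinate comparison dispatches all five cases uniformly.
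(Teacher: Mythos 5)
Your proposal is correct and follows exactly the route the paper intends: the paper's own ``proof'' of this theorem is the single remark that it follows ``by an easy adaptation of the corresponding proofs given in Section~\ref{conserva}'', and your argument is precisely that adaptation, including the one genuinely non-trivial point the paper leaves implicit --- the coordinate-by-coordinate check that $g(\mathcal{M}_{{\bf L}_0})$ remains a subNmatrix of the reduct of $\mathcal{M}_{\bf L}$ because each variant $\tilde{\square}$ imposes on the first three coordinates the same constraints as $\square_{{\bf L}_0}$ and leaves the fourth free (with $g(w)\in\textsc{B}_{T}$ guaranteed by $w_2\leq w_1$ and $w_1\sqcap w_3=0$). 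Nothing to correct; your write-up is if anything more complete than the paper's.
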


\section{Final remarks} \label{finalsect}

In this paper we introduced a methodology for combining logics presented by means of (finite) swap structure semantics, by mixing its snapshots (truth-values). The case example analized here was the combination of \pyn, an interesting expansion of \fde\ by an implication, and several 4-valued Ivlev-like modal logics. While the former is presented by means of a 4-valued logical matrix expressed by a twist structure (a particular case of swap structure), each one of the later is presented by a 4-valued Nmatrix expressed by a swap structure. Then, a 4-element twist structure was combined with a 4-element swap structure, producing a 6-element swap structure (Nmatrix). As commented previously, the method would produce 8 snapshots by superposition, but some restrictions were imposed by conceptual reasons, thus producing 6 snapshots.

A natural question we might ask is to what extent the Ivlev-type modal logics considered here  could {\em in fact}  be considered as modal logics. In~\cite[Theorems~24 and~29]{con:cerro:per:15} (together with the corrected proof of Theorem~2 given in~\cite{con:cerro:per:16}) it was proved that the extension of \Tm\ by adding the necessitation rule \nec\ coincides exactly with the standard normal modal system {\bf KT}. Moreover, it was also proven (taking into consideration the small technical corrections given in~\cite{con:cerro:per:16}) that the same holds for the other Ivlev-like modal logics considered in this paper, that is: if {\bf Lm} is the (non-normal) Ivlev-like version of normal modal logic {\bf L}, then the extension of {\bf Lm} by adding \nec\ coincides exactly with {\bf L}.\footnote{To be strict, in~\cite{con:cerro:per:15} this result was proved neither for \TqBm\  nor for \Tsm, given that these logics were originally proposed in the present paper. However, the proof that  the extension of \TqBm\ and  \Tsm\ by adding \nec\ coincides respectively with the normal modal system {\bf KT4B} and  {\bf KT5} can be easily obtained from the proof for the other systems.} This is a good reason to consider Ivlev-like logics as genuine non-normal modal logics (with an hyperintensional modality). They can be seen  as specific \nec-less versions of normal modal systems.

It is important to observe that \pyn\ can be characterized by a class of twist structures, one for each classical implicative lattice. In turn, four of the extensions of \Tm\ considered here can be characterized by a class of swap structures, one for each Boolean algebra (see~\cite{con:gol:19}), and the two new extensions \TqBm\ and \Tsm\  proposed here can be treated analogously. By the very nature of the process of combining swap structures proposed here, in terms of formal specifications, it is clear that the corresponding 6-valued characteristic Nmatrix for each of the 6 combined paradefinite modal logics can be extended to  a class of swap structures defined over Boolean algebras. The reason for obtaining Boolean algebras in the general case is that the components of the snapshots must be defined over classical implicative lattices with bottom, which are Boolean algebras.

The notion of swap structures for \Tm\ and its extensions can be defined in a more precise way by means of bivaluations, as it was done in~\cite[Section~6.4]{car:con:16}, \cite[Section~5.2]{con:tol:22} and~\cite{con:rod:22}. Consider, for instance, the case of \Tm. The set of snapshots is the set of triples over $\{0,1\}$ of the form  $(b(\varphi),b(\square\varphi),b(\square\sneg\varphi))$ for any formula $\varphi$ and any bivaluation $b$ for \Tm, which produces exactly the domain $A_m$. A bivaluation for \Tm\ is a function $b:For(\Sigma_m) \to \{0,1\}$  satisfying the properties of the classical 2-valued  valuations for \cpl\ over $\{\to,\sneg\}$ plus the restrictions obtained from the formal specifications of the swap structures for \Tm\ (or from the original axioms of \Tm), namely: $b(\square\varphi) \leq b(\varphi)$; 
$b(\square\sneg\varphi) \sqcup b(\square\psi) \leq  b(\square(\varphi \to\psi)) \leq (b(\square\varphi) \Rightarrow b(\square\psi)) \sqcap (b(\square\sneg\psi) \Rightarrow b(\square\sneg\varphi))$; $b(\square\sneg(\varphi \to\psi))=b(\square\varphi) \sqcap b(\square\sneg\psi)$; and $b(\square\sneg\sneg\varphi)=b(\square\varphi)$.

The class of such bivaluations is a sound and complete semantics for \Tm, by considering $1$ as the designated value. The relationship between bivaluations for \Tm\ and valuations over $\mathcal{M}_T$ is the following: given $b$, the function $v(\varphi)=(b(\varphi),b(\square\varphi),b(\square\sneg\varphi))$ is a valuation over $\mathcal{M}_T$ such that $b(\varphi)=1$ iff $v(\varphi) \in D_m$. Conversely, given $v$, the function $b(\varphi)=v(\varphi)_1$ is a bivaluation for \Tm\  such that $b(\varphi)=1$ iff $v(\varphi) \in D_m$. Bivaluations for \pyn\ are mappings  $b:For(\Sigma_4) \to \{0,1\}$  satisfying the properties of the classical 2-valued  valuations for \cpl\ over $\{\land,\vee,\to\}$ plus the following restrictions: $b(\neg(\varphi \land \psi))=b(\neg\varphi) \sqcup b(\neg\psi)$; $b(\neg(\varphi \vee \psi))=b(\neg\varphi) \sqcap b(\neg\psi)$;  $b(\neg(\varphi \to \psi))=b(\varphi) \sqcap b(\neg\psi)$; and $b(\neg\neg\varphi)=b(\varphi)$. Similar to \Tm, there is a one-to-one correspondence between valuations $v$ over the matrix $\mathcal{M}_4$ and bivaluations $b$ for \pyn: given $v$, define $b(\varphi)=v(\varphi)_1$. Conversely, given $b$, define $v(\varphi)=(b(\varphi),b(\neg\varphi))$. In both cases, $v$ and $b$ satisfy the same formulas. As shown in~\cite{con:tol:22} and~\cite{con:rod:22}, bivaluations can be easily extended to the class of Boolean algebras (or classical implicative lattices, in the case of \pyn). Under this formal perspective based on bivaluations, the combination process proposed here is analogous to consider bivaluations over the mixed signature satisfying the requirements of the bivaluations of the logics being combined (from which it is obtained the swap structures semantics for the combined logic).

The proposed method for combining logics presented by swap structures can be particularized to combinations of swap structures with twist structures (as in the case in the present paper) or even to combinations of twist structures with twist structures, thus opening a wide variety of potential applications. By considering the close relationship between swap (and twist) structures semantics and bivaluations, it would be reasonable to define the combination (by fibring) of bivaluation semantics, giving rise (under certain circumstances) to the combination by fibring of swap structures, which are special cases of Nmatrices.   This is the subject of future research.

Concerning the paradefinite modal systems  obtained here by combination of logics we can mention, as related works, the (normal) modal systems studied by S. Odinstov and H. Wansing in~\cite{odin:wan:10} and~\cite{odin:wan:17}, by combining the minimal normal modal logic {\bf K} with \bfour. As in our case, and in order to harmonize both logics, they consider some bridge principles between $\square$, $\sneg$ and $\neg$, some of them stronger than ours. They propose twist structures semantics defined over  Boolean algebras with modal operators, a.k.a. {\em modal algebras}, given the validity of \nec. Another related approach was proposed previously in~\cite{riv:jun:jan:15}, also using twist structures semantics, but now based on the bilattice structure of \fde\ and using, besides $\to$, the {\em strong implication} $\Longrightarrow$ introduced in the realm of relevant logic by R. Brady (see the matrix logic $M_4$ in~\cite[p. 10]{brady:82}, as well as its Hilbert calculus BN4 on pp. 21--23). This implication was afterwards rediscovered by O. Arieli and A. Avron in~\cite{ari:av:96}, and characterized  as follows: $\varphi \Longrightarrow \psi \defi (\varphi \to \psi) \land (\neg\psi \to \neg\varphi)$. The 4-valued paradefinite modal logics investigated  in~\cite{riv:jun:jan:15} are non-normal. Moreover, besides \nec, axiom \axK\ is also not valid. Before~\cite{riv:jun:jan:15}, and from the perspective of relevant logics, L. Goble proposed in~\cite{gob:06} a normal modal extension of Brady's system BN4. Goble's system was revisited in~\cite{odin:wan:17}. One advantage of using strong implication is that it enjoys contraposition. However, its axiomatization is more complicated, requiring more inference rules besides \MP.  The possibility of extending our approach to BN4 (based on  $\Longrightarrow$) or to other implication connectives over \fde\  will be investigated.


We consider that the paradefinite Ivlev-like systems obtained here are instances of an interesting and promising alternative to the standard approach to modal logics. Indeed, they offer a new perspective on the study of non-normal model systems, on the one hand, and they are decidable by finite-valued Nmatrices, on the other. The potential applications to informational systems, because of its decidability and its paraconsistent and paracomplete character based on \fde\ and \lets, as well as its  expressivity and relative simplicity,  suggest possible applications of this kind of systems to the analysis of modal paradoxes, among other domains.

\

\noindent
{\bf Acknowledgments:}
The main ideas contained in this article were originally presented in~\cite{con:EBL22}. The author acknowledges support from the National Council for Scientific and Technological Development (CNPq,~Brazil), through the individual  research grant \# 309830/2023-0, and the ``Edital Universal''  project \# 408040/2021-1. This research was also supported by the S\~ao Paulo Research Foundation (FAPESP, Brazil) trough the Thematic Project {\em Rationality, logic and  probability -- RatioLog}, grant \#2020/16353-3.


\


\

\section{Appendix: A Prolog program for \INT\ and its extensions}

\scriptsize
\begin{verbatim} 
%%%%%%%%%%%%%%% Boolean operators %%%%%%%%%%%%%%%

andb(0,1,0).    %%% (0 AND 1) = 0 etc.
andb(0,0,0).
andb(1,0,0).
andb(1,1,1).
orb(0,1,1).     %%% (0 OR 1) = 1 etc.
orb(0,0,0).
orb(1,0,1).
orb(1,1,1).
impb(0,1,1).    %%% (0 IMPLIES 1) = 1 etc.
impb(0,0,1).
impb(1,0,0).
impb(1,1,1).
notb(1,0).      %%% NOT(1) = 0 etc.
notb(0,1).


%%%%%%%%%%%%%%% Snapshots definition %%%%%%%%%%%%%%%

boo(1).
boo(0).

sn(Z1,Z2,Z3,Z4) :- boo(Z1),boo(Z2),boo(Z3),boo(Z4),andb(Z1,Z2,Z2),
    andb(Z1,Z3,0), andb(Z2,Z4,0), andb(Z3,Z4,Z3).


%%%%%%%%%%%%%%% Multioperators for the Nmatrices %%%%%%%%%%%%%%%

conj(Z1,Z2,Z3,Z4,W1,W2,W3,W4,U1,U2,U3,U4) :- sn(Z1,Z2,Z3,Z4),sn(W1,W2,W3,W4),
   andb(Z1,W1,U1),andb(Z2,W2,U2),orb(Z3,W3,M1),
   impb(Z2,W3,M3),impb(W2,Z3,M4),andb(M3,M4,N2),
   andb(M1,U3,M1),andb(U3,N2,U3),orb(Z4,W4,U4),sn(U1,U2,U3,U4).

disj(Z1,Z2,Z3,Z4,W1,W2,W3,W4,U1,U2,U3,U4) :- sn(Z1,Z2,Z3,Z4),sn(W1,W2,W3,W4),
   orb(Z1,W1,U1),andb(Z3,W3,U3),andb(Z4,W4,U4),orb(Z2,W2,M1),
   impb(Z3,W2,M3),impb(W3,Z2,M4),andb(M3,M4,N2),andb(M1,U2,M1),
    andb(U2,N2,U2),sn(U1,U2,U3,U4).

imp(Z1,Z2,Z3,Z4,W1,W2,W3,W4,U1,U2,U3,U4) :- sn(Z1,Z2,Z3,Z4),sn(W1,W2,W3,W4),
   impb(Z1,W1,U1),andb(Z2,W3,U3),andb(Z1,W4,U4),orb(Z3,W2,M1),
   impb(Z2,W2,M3),impb(W3,Z3,M4),andb(M3,M4,N2), 
   andb(M1,U2,M1),andb(U2,N2,U2),sn(U1,U2,U3,U4).

neg(Z1,Z2,Z3,Z4,Z4,Z3,Z2,Z1).

sneg(Z1,Z2,Z3,Z4,U1,Z3,Z2,U4) :- sn(Z1,Z2,Z3,Z4), notb(Z1,U1),
   andb(Z2,U4,Z2),andb(U4,Z3,0).

box(Z1,Z2,Z3,Z4,Z2,U2,U3,U4) :- sn(Z1,Z2,Z3,Z4), sn(Z2,U2,U3,U4).
box4(Z1,Z2,Z3,Z4,Z2,Z2,U3,U4) :- sn(Z1,Z2,Z3,Z4), sn(Z2,Z2,U3,U4).
box45(Z1,Z2,Z3,Z4,Z2,Z2,U3,U4) :- sn(Z1,Z2,Z3,Z4), orb(U3,Z2,1),sn(Z2,Z2,U3,U4).
box5(Z1,Z2,Z3,Z4,Z2,U2,U3,U4) :- sn(Z1,Z2,Z3,Z4),orb(U3,Z2,1),sn(Z2,U2,U3,U4).
boxb(Z1,Z2,Z3,Z4,Z2,U2,U3,U4) :- sn(Z1,Z2,Z3,Z4), orb(U3,Z1,1),sn(Z2,U2,U3,U4).
box4b(Z1,Z2,Z3,Z4,Z2,Z2,U3,U4) :- sn(Z1,Z2,Z3,Z4), orb(U3,Z1,1),sn(Z2,Z2,U3,U4).


%%%%%%%%%%%%%%% List operators %%%%%%%%%%%%%%%

append([],L,L).
append([A|L1],L2,[A|L]) :- append(L1,L2,L).

member(X,[X|_]) :- !.
member(X,[_|L]) :- member(X,L).


%%%%%%%%%%%%%%% Truth-tables generator %%%%%%%%%%%%%%%

% Usage: valu(L,[],LF), where L is the list of formulas to be placed in the truth-table, 
% starting from the propositional variables (in this case only p1 and p2 were declared, 
% but it can be easily extended). The formulas in L must be placed in order of appearence
% of subformulas (as it is done when construct a truth-table) and LF returns the list
% of lists [A,U1,U2,U3,U4], where A is a formula of L and (U1,U2,U3,U4) is its truth-value in LF
% Syntax for writting formulas: p1,p2 (propositional variables)
% and(A,B), or(A,B), implies(A,B), snon(A) (classical negation), non(A) (FDE negation)
% tbox(A), tbox4(A), tbox5(A), tboxb(A) (system B), tbox45(A), tbox4b(A) (boxes)

valu([],L,L).
valu([p1|L0],L1,L) :- sn(U1,U2,U3,U4),append(L1,[[p1,U1,U2,U3,U4]],L2),valu(L0,L2,L).
valu([p2|L0],L1,L) :- sn(U1,U2,U3,U4),append(L1,[[p2,U1,U2,U3,U4]],L2), valu(L0,L2,L).
% if necessary, insert analogous clauses for p3, p4, etc

valu([and(A,B)|L0],L1,L) :-
   member([A,Z1,Z2,Z3,Z4],L1),member([B,W1,W2,W3,W4],L1),
   conj(Z1,Z2,Z3,Z4,W1,W2,W3,W4,U1,U2,U3,U4),
   append(L1,[[and(A,B),U1,U2,U3,U4]],L2),valu(L0,L2,L).
valu([or(A,B)|L0],L1,L) :-
   member([A,Z1,Z2,Z3,Z4],L1),member([B,W1,W2,W3,W4],L1),
   disj(Z1,Z2,Z3,Z4,W1,W2,W3,W4,U1,U2,U3,U4),
   append(L1,[[or(A,B),U1,U2,U3,U4]],L2),valu(L0,L2,L).
valu([implies(A,B)|L0],L1,L) :-
   member([A,Z1,Z2,Z3,Z4],L1),member([B,W1,W2,W3,W4],L1),
   imp(Z1,Z2,Z3,Z4,W1,W2,W3,W4,U1,U2,U3,U4),
   append(L1,[[implies(A,B),U1,U2,U3,U4]],L2),valu(L0,L2,L).
valu([tbox(A)|L0],L1,L) :-
   member([A,Z1,Z2,Z3,Z4],L1),box(Z1,Z2,Z3,Z4,U1,U2,U3,U4),
   append(L1,[[tbox(A),U1,U2,U3,U4]],L2),valu(L0,L2,L).
valu([tbox4(A)|L0],L1,L) :-
   member([A,Z1,Z2,Z3,Z4],L1),box4(Z1,Z2,Z3,Z4,U1,U2,U3,U4),
   append(L1,[[tbox4(A),U1,U2,U3,U4]],L2),valu(L0,L2,L).
valu([tbox5(A)|L0],L1,L) :-
   member([A,Z1,Z2,Z3,Z4],L1),box5(Z1,Z2,Z3,Z4,U1,U2,U3,U4),
   append(L1,[[tbox5(A),U1,U2,U3,U4]],L2),valu(L0,L2,L).
valu([tbox45(A)|L0],L1,L) :-
   member([A,Z1,Z2,Z3,Z4],L1),box45(Z1,Z2,Z3,Z4,U1,U2,U3,U4),
   append(L1,[[tbox45(A),U1,U2,U3,U4]],L2),valu(L0,L2,L).
valu([tboxb(A)|L0],L1,L) :-
   member([A,Z1,Z2,Z3,Z4],L1),boxb(Z1,Z2,Z3,Z4,U1,U2,U3,U4),
   append(L1,[[tboxb(A),U1,U2,U3,U4]],L2),valu(L0,L2,L).
valu([tbox4b(A)|L0],L1,L) :-
   member([A,Z1,Z2,Z3,Z4],L1),box4b(Z1,Z2,Z3,Z4,U1,U2,U3,U4),
   append(L1,[[tbox4b(A),U1,U2,U3,U4]],L2),valu(L0,L2,L).
valu([snon(A)|L0],L1,L) :-
   member([A,Z1,Z2,Z3,Z4],L1),sneg(Z1,Z2,Z3,Z4,U1,U2,U3,U4),
   append(L1,[[snon(A),U1,U2,U3,U4]],L2),valu(L0,L2,L).
valu([non(A)|L0],L1,L) :-
   member([A,Z1,Z2,Z3,Z4],L1),neg(Z1,Z2,Z3,Z4,U1,U2,U3,U4),
   append(L1,[[non(A),U1,U2,U3,U4]],L2),valu(L0,L2,L).


%%%%%%%%%%%%%%% Checking validity, satisfiability and refutability %%%%%%%%%%%%%%%

% check if the last formula A of L is refutable, showing in L1 a countermodel for A
% (it shows all of them)
refutable(L,L1) :- valu(L,[],L1),last(L1,[_,0,_,_,_]). 

% check if the last formula A of L is satisfiable, showing in L1 a model for A
% (it shows all of them)
satisfiable(L,L1) :- valu(L,[],L1),last(L1,[_,1,_,_,_]).

% unsatisfiability test: "true" iff the last formula of L is unsatisfiable
unsatisfiable(L) :- not(satisfiable(L,_)).

% validity test: "true" iff the last formula of L is valid
valid(L) :- not(refutable(L,_)).


%%%%%%%%%%%%%%% Example: Axiom B %%%%%%%%%%%%%%%

% The query below constructs a truth-table for axiom B in IvFDE_B: 
% valu([p1,tboxb(p1),snon(tboxb(p1)),tboxb(snon(tboxb(p1))),
% snon(tboxb(snon(tboxb(p1)))),implies(snon(tboxb(snon(tboxb(p1)))),p1)],[],L).
% The query valid(L1), where L1 is the leftmost list in the query above,
% returns "true" (but returns "false" if tboxb is changed to tbox or tbox4).
% It also returns "true" by changing tboxb by tbox5, proving that axiom B also holds in IvFDE_5.
% By using tbox then satisfiable(L1,L2) and refutable(L1,L3) will return
% all the models and countermodels of B in IvFDE_T; idem in IvFDE_4 by using tbox4
\end{verbatim}

\end{document}